\newcommand{\resetCurThmBraces}{%
\gdef\curThmBraceOpen{(}%
\gdef\curThmBraceClose{)}}
\newcommand{\removeThmBraces}{%
\gdef\curThmBraceOpen{}%
\gdef\curThmBraceClose{}}
\newenvironment{notheorembrackets}{\removeThmBraces}{\resetCurThmBraces}
\patchcmd{\thmhead}{(#3)}{\curThmBraceOpen #3\curThmBraceClose}{}{}
\newcommand{\defaultshowkeysformat}[1]{%
\StrSubstitute{#1}{ }{\textvisiblespace}[\TEMP]%
\parbox[t]{\marginparwidth}{\raggedright\normalfont\small\ttfamily\(\{\){\color{red!50!black}\expandafter\seqsplit\expandafter{\TEMP}}\(\}\)}%
}
\renewcommand*\showkeyslabelformat[1]{%
\noexpandarg%
\defaultshowkeysformat{#1}%
}
\newcommand{\midmid}{\hspace{0.2ex}{\rule[-0.1ex]{0.6pt}{1.65ex}}\hspace{0.2ex}}
\newcommand{\scriptmidmid}{\hspace{0.2ex}{\rule[-0.1ex]{0.6pt}{1.1ex}}\hspace{0.2ex}}
\newcommand{\newletter}[1]{{\midmid}#1}
\newcommand{\scriptnew}[1]{{\scriptmidmid}#1}
\newcommand{\trans}[1]{\xrightarrow{#1}}
\newcommand{\N}{\mathsf{N}}
\newcommand{\FN}{\mathsf{FN}}
\newcommand{\BN}{\mathsf{BN}}
\newcommand{\id}{\mathit{id}}
\newcommand{\prelang}{L_\mathsf{pre}}
\newcommand{\free}[1]{\mathsf{bs}(#1)}
\newcommand{\formulae}{\mathsf{Bar}}
\newcommand{\negepsilon}{\neg\epsilon}
\newcommand{\degree}{\mathsf{deg}}
\newcommand{\barstrings}{\barA^*}
\newcommand{\barA}{\overline{\names}}
\newcommand{\barname}{\sigma}
\newcommand{\ub}{\mathsf{ub}}
\numberwithin{equation}{section}
\tikzset{
   n/.style= {circle,fill,inner sep=1.5pt,node distance=2cm}
  ,acc/.style={circle,draw,inner sep=3pt,node distance=2cm}
  ,phantom/.style={circle},
  ,arr/.style={->, >=stealth, semithick, shorten <= 3pt, shorten >= 3pt}
}
\newcommand{\CO}{\mathcal{O}}
\newcommand{\muBar}{\textsf{Bar-$\mu$TL}\xspace}
\newcommand{\states}{\mathsf{mstates}}
\newcommand{\prestates}{\mathsf{pstates}}
\newcommand{\quasistates}{\mathsf{qstates}}
\renewcommand{\Box}{\square}
\renewcommand{\Diamond}{\lozenge}
\newcommand{\hearts}{\heartsuit}
\newcommand{\sem}[1]{\llbracket #1 \rrbracket}
\def\moverlay{\mathpalette\mov@rlay}
\def\mov@rlay#1#2{\leavevmode\vtop{%
   \baselineskip\z@skip \lineskiplimit-\maxdimen
   \ialign{\hfil$\m@th#1##$\hfil\cr#2\crcr}}}
\newcommand{\charfusion}[3][\mathord]{
    #1{\ifx#1\mathop\vphantom{#2}\fi
        \mathpalette\mov@rlay{#2\cr#3}
      }
    \ifx#1\mathop\expandafter\displaylimits\fi}
\newcommand\NExpTime{$\textsc{NExpTime}$\xspace}
\newcommand\NExpSpace{$\textsc{NExpSpace}$\xspace}
\newcommand{\takeout}[1]{\empty}
\newcommand{\names}{\mathbb{A}}
\newcommand{\fix}{\mathop{\mathsf{fix}}}
\newcommand{\Fix}{\mathop{\mathsf{Fix}}}
\newcommand{\supp}{\mathop{\mathsf{supp}}}
\newcommand{\clos}{{\mathsf{cl}}}
\newcommand{\clo}{{\mathsf{Cl}}}
\newcommand{\mysubsec}[1]{%
  \par\medskip\noindent{\bfseries\sffamily #1}\hspace{3mm}%
  \@ifnextchar\par{\@gobble}{}
}
\theoremstyle{definition}
\newtheorem{defn}[theorem]{Definition}
\newtheorem{expl}[theorem]{Example}
\newtheorem*{fact}{Fact}
\newcommand\ExpSpace{$\textsc{ExpSpace}$\xspace}
\newcommand\PSpace{$\textsc{PSpace}$\xspace}
\title{A Linear-Time Nominal $\mu$-Calculus\\ with Name Allocation}
\titlerunning{A Linear-Time Nominal $\mu$-Calculus with Name Allocation}
\author{Daniel Hausmann}%
{Gothenburg University, Sweden}%
{daniel.hausmann@fau.de}%
{https://orcid.org/0000-0002-0935-8602}
{Supported by Deutsche Forschungsgemeinschaft (DFG) under project MI~717/7-1
and by the European Research Council (ERC) under project 772459}
\author{Stefan Milius}%
{Friedrich-Alexander-Universit\"at Erlangen-N\"urnberg, Germany}%
{mail@stefan-milius.eu}%
{https://orcid.org/0000-0002-2021-1644}
{Supported by Deutsche Forschungsgemeinschaft (DFG) under project MI~717/7-1}
\author{Lutz Schr\"oder}{Friedrich-Alexander-Universit\"at Erlangen-N\"urnberg, Germany}%
{lutz.schroeder@fau.de}%
{https://orcid.org/0000-0002-3146-5906}
{Supported by Deutsche Forschungsgemeinschaft (DFG) under project \mbox{SCHR~1118/15-1}}
\authorrunning{D.~Hausmann, S.~Milius, L.~Schr\"oder}
\keywords{Model checking, linear-time logic, nominal sets}
\begin{document}
\maketitle
\begin{abstract}
  Logics and automata models for languages over infinite alphabets,
  such as Freeze LTL and register automata, serve the verification of
  processes or documents with data. They relate tightly to formalisms
  over nominal sets, such as nondetermininistic orbit-finite automata
  (NOFAs), where names play the role of data. Reasoning problems in
  such formalisms tend to be computationally
  hard. 
  \emph{Name-binding} nominal automata models such as \emph{regular
    nondeterministic nominal automata (\mbox{RNNAs})} have been shown to be
  computationally more tractable. In the present paper, we introduce a
  linear-time fixpoint logic $\muBar$ for finite words over an infinite
  alphabet, which features full negation and freeze quantification via
  name binding. We show by a nontrivial reduction to \emph{extended
    regular nondeterministic nominal automata} that even though
  $\muBar$ allows unrestricted nondeterminism and unboundedly many
  registers, model checking $\muBar$ over RNNAs and satisfiability
  checking both have elementary complexity. For example, model checking is in
  $2\ExpSpace$, more precisely in parametrized $\ExpSpace$, effectively
  with the number of registers as the parameter.
\end{abstract}
\section{Introduction}
There has been longstanding interest in logics and automata models
over infinite alphabets, such as the classical register automaton
model~\cite{KaminskiFrancez94} and Freeze LTL
(e.g.~\cite{Lazic06,DemriLazic09}), or automata models over nominal
sets~\cite{Pitts13} such as nondeterministic orbit-finite
automata~\cite{BojanczykEA14}, in which \emph{names} play the role of
letters. Infinite alphabets may be seen as representing data. For
example, nonces in cryptographic protocols~\cite{KurtzEA07}, data
values in XML documents~\cite{NevenEA04}, object
identities~\cite{GrigoreEA13}, or parameters of method
calls~\cite{HowarEA19} can all be usefully understood as letters in
infinite alphabets. A central challenge in dealing with infinite
alphabets is that key decision problems in many logics and automata
models are either undecidable or of prohibitively high complexity
unless drastic restrictions are imposed (see the related work
section). In a nutshell, the contribution of the present work is the
identification of a linear-time fixpoint logic $\muBar$ for \emph{finite}
words over infinite alphabets that
\begin{itemize}
\item allows both safety and liveness constraints (via fixpoints, thus
  in particular going beyond the expressiveness of
  LTL~\cite{DeGiacomoVardi13}) as well as full nondeterminism, and
  e.g.~expresses the language `some letter occurs twice' (which
  cannot be accepted by deterministic or unambigous register
  automata~\cite{BojanczykEA14,SchroderEA17-full});
\item imposes no restriction on the number of registers; and
\item is closed under complement;
\end{itemize}
and nevertheless allows model checking and satisfiability checking in
elementary complexity: $\muBar$ model checking over regular
nondeterministic nominal automata~\cite{SchroderEA17} (in the sense of
checking that all words accepted by a given automaton satisfy a given
formula) is in $2\ExpSpace$ and more precisely in parametrized
$\ExpSpace$, with the maximal size of the support of states as the
parameter (in the translation of nominal automata to register
automata~\cite{BojanczykEA14,SchroderEA17}, this corresponds to the
number of registers); and satisfiability checking is in $\ExpSpace$
(and in parametrized $\PSpace$).

The tradeoff that buys this comparatively low complexity is a mild
recalibration of the notion of freshness, which we base on explicit
binding of names in strings in a nominal language model; depending on
the exact discipline of $\alpha$-renaming of bound names, one obtains
either global freshness (w.r.t.\ all previous letters, as in session
automata~\cite{BolligEA14}) or local freshness (w.r.t.\ currently
stored letters, as in register automata). This principle has been
previously employed in the semantics of nominal Kleene
algebra~\cite{GabbayCiancia11,KozenEA15} and in regular non-deterministic
nominal automata (RNNAs)~\cite{SchroderEA17}. It carries the
limitation that in the local freshness variant, letters can be
required to be distinct only from such previous letters that are expected to be
seen again -- a restriction that seems reasonable in applications;
that is, in many situations, one presumably would not need to insist
on an object identifier, nonce, or process name~$b$ to be distinct
from a previous name~$a$ if~$a$ is never going to be used again.

We introduce a dedicated finite-word automaton model, \emph{extended
  regular nondeterministic nominal automata (ERNNAs)}, which extend
RNNAs by \emph{$\top$-states}, i.e.~deadlocked universal states. We
then base our results on mutual translations between~$\muBar$ and
\mbox{ERNNAs}; the tableau-style logic-to-automata translation turns out to
be quite nontrivial as it needs to avoid the accumulation of renamed
copies of subformulae in automata states.
Since RNNAs are, under global freshness, essentially equivalent to
session automata~\cite{BolligEA14}, one consequence of our results is
that session automata (which as such are only closed under
\emph{resource-bounded} complementation~\cite{BolligEA14}) can be made
closed under complement simply by allowing $\top$-states.

Proofs are mostly omitted or only sketched; full proofs are in the
extended version~\cite{HausmannEA20}.%

\mysubsec{Related work}

\emph{Automata:} Over infinite alphabets, the expressive power of
automata models generally increases with the power of control
(deterministic/nondeterministic/alternating)~\cite{KlinEA21}. In
deterministic models, language inclusion can often be decided in
reasonable complexity; this remains true for unambiguous register
automata~\cite{MottetQuaas19,Colcombet15}.  For nondeterministic
register automata and the equivalent nondeterministic orbit-finite
automata~\cite{BojanczykEA14}, emptiness is decidable but inclusion is
undecidable unless one restricts to at most two
registers~\cite
{KaminskiFrancez94}. Similarly, language inclusion (equivalently
nonemptiness) of alternating register automata is undecidable unless
one restricts to at most one register, and even then is not primitive
recursive~\cite{DemriLazic09}.  Automata models for infinite words
outside the register paradigm include data walking
automata~\cite{ManuelEA16}, whose inclusion problem is decidable even
under nondeterminism but at least as hard as reachability in Petri
nets (equivalently vector addition
systems)~\cite{ColcolmbetManuel14arXiv}, as well as the highly
expressive data automata~\cite{BojanczykEA11}, whose nonemptiness
problem is decidable but, again, at least as hard as Petri net
reachability. Note that by recent results~\cite{CzerwinskiEA21}, Petri
net reachability is not elementary, and in fact
Ackermann-complete~\cite{Leroux21,CzerwinskiOrlikowski21}.

\emph{Logics:} $\muBar$ is incomparable to Freeze LTL. Satisfiability
checking of Freeze LTL is by reduction to alternating register
automata, and has the same high complexity even for the one-register
case~\cite{DemriLazic09}. Satisfiability in the \emph{safety} fragment
of Freeze LTL over infinite words~\cite{Lazic06} is
$\ExpSpace$-complete if the number of registers is restricted to at
most one, while the refinement and model checking problems are
decidable but not primitive recursive (all three problems become
undecidable in presence of more than one register). The
\emph{$\mu$-calculus with
  atoms}~\cite{KlinLelyk19} 
is interpreted over Kripke models with atoms; its satisfiability
problem is undecidable while its model checking problem is decidable,
with the complexity analysis currently remaining open. It is related
to the very expressive \emph{first-order
  $\mu$-calculus}~\cite{GrooteMateescu98,GrooteWillemse05}, for which
model checking is only known to be semidecidable. The
\emph{$\mu$-calculus over data
  words}~\cite{ColcombetManuel14,ColcombetManuel15} works in the data
walking paradigm. The satisfiability problem of the full calculus is
undecidable; that of its $\nu$-fragment, which translates into data
automata, is decidable but elementarily equivalent to Petri net
reachability. \emph{Variable LTL}~\cite{GrumbergEA12} extends LTL with
a form of first-order quantification over data domains. The full
language is very expressive and in particular contains full Freeze
LTL~\cite{SongWu16}. Some fragments have decidable satisfiability or
model checking problems (typically not
both)~\cite{GrumbergEA12,SongWu16}, occasionally in elementary
complexity; these impose prenex normal form (reducing expressiveness)
and restrict to quantifier prefixes that fail to be stable under
negation. Decidable fragments will, of course, no longer contain full
Freeze LTL; how their expressiveness compares to $\muBar$ needs to be
left open at present. \emph{Flat} Freeze LTL~\cite{DemriSagnier10}
interdicts usage of the freeze quantifier in safety positions (e.g.\
the freeze quantifier can occur under~$\mathsf{F}$ but not
under~$\mathsf{G}$); its existential model checking problem over
(infinite runs of) one-counter automata is $\NExpTime$-complete, while
the universal model checking problem is undecidable~\cite{BolligEA19}.


\section{Preliminaries: Nominal Sets}
\label{sec:prelim}
\noindent Nominal sets offer a convenient formalism for dealing with
names and freshness; for our present purposes, names play the role of
data. We briefly recall basic notions and facts (see~\cite{Pitts13}
for more details).

Fix a countably infinite set $\names$ of \emph{names}, and let $G$
denote the group of finite permutations on $\names$, which is
generated by the \emph{transpositions} $(a\,b)$
for $a\neq b\in\names$ (recall that $(a\,b)$ just swaps~$a$ and~$b$);
we write~$1$ for the neutral element of~$G$. A (left) \emph{action}
of~$G$ on a set~$X$ is a map $(-)\cdot(-)\colon G\times X\to X$ such
that $1\cdot x=x$ and $\pi\cdot(\pi'\cdot x)=(\pi\pi')\cdot x$ for all
$x\in X$, $\pi,\pi'\in G$. For instance,~$G$ acts on~$\names$ by
$\pi\cdot a = \pi(a)$. A set $S\subseteq\names$ is a \emph{support}
of~$x\in X$ if $\pi(x)=x$ for all $\pi\in G$ such that $\pi(a)=a$ for
all $a\in S$. We say that~$x$ is \emph{finitely supported} if~$x$ has
some finite support, and \emph{equivariant} if the empty set is a
support of~$x$, i.e.~$\pi\cdot x=x$ for all~$\pi\in G$. The
\emph{orbit} of $x \in X$ is the set $\{\pi \cdot x \mid \pi \in
G\}$. The set of all orbits forms a partition of $X$, and $X$ is
\emph{orbit-finite} if there are finitely many orbits.

A \emph{nominal set} is a set~$X$ equipped with an action of~$G$ such
that every element of~$X$ is finitely supported; e.g.~$\names$ itself
is a nominal set. An early motivating example of a nominal set is the
set of $\lambda$-terms with variable names taken from~$\names$, and
with the action of~$\pi\in G$ given by replacing every variable
name~$a$ with~$\pi(a)$ as expected, e.g.\
$(a\,b)\cdot \lambda a.ba=\lambda b.ab$. Every element~$x$ of a
nominal set has a least finite support, denoted $\supp(x)$, which one
may roughly think of as the set of names occurring (`freely', see
below) in~$x$. A name $a\in\names$ is \emph{fresh} for~$x$ if
$a\notin\supp(x)$. On subsets $A\subseteq X$ of a nominal
set~$X$, $G$ acts by $\pi\cdot A = \{\pi\cdot x \mid x \in A\}$.
Thus, $A\subseteq X$ is equivariant iff $\pi\cdot A\subseteq A$ for
all $\pi\in G$.  Similarly,~$A$ has support~$S$ iff
$\pi\cdot A\subseteq A$ whenever~$\pi(a)=a$ for all $a\in S$. We say
that~$A$ is \emph{uniformly finitely supported} if
$\bigcup_{x \in A} \supp(x)$ is finite~\cite{TurnerWinskel09}, in
which case~$A$ is also finitely
supported~\cite[Thm.~2.29]{gabbay2011}. (The converse does not hold,
e.g.~the set $\names$ is finitely supported but not uniformly
finitely supported.) Uniformly finitely supported subsets of
orbit-finite sets are always finite but in general, uniformly finitely
supported sets can be infinite; e.g.~for finite $B\subseteq\names$,
the set $B^*\subseteq\names^*$ is uniformly finitely supported.




The Cartesian product $X \times Y$ of nominal sets~$X,Y$ is a
nominal set under the compo\-nentwise group action; then,
$\supp(x,y) = \supp(x) \cup \supp(y)$. Given a nominal set $X$
equipped with an equivalence relation~$\sim$ that is equivariant as a
subset of $X \times X$, the quotient $X/\mathord{\sim}$ is a nominal
set under the group action $\pi \cdot [x]_\sim = [\pi \cdot x]_\sim$.
A key role in the technical development is played by \emph{abstraction
  sets}, which provide a semantics for binding
mechanisms~\cite{GabbayPitts99}:%
\begin{defn}[Abstraction set]\label{def:abstraction}
  Given a nominal set $X$, an equivariant equivalence relation $\sim$
  on $\names \times X$ is defined by
  \begin{equation*}
    (a,x)\sim (b,y)\qquad\text{iff}\qquad
    \text{$(a\, c)\cdot x=(b\, c)\cdot y$ for some~$c\in\names$ that is fresh
  for $(a,x,b,y)$}
\end{equation*}
(equivalently for all such~$c$). The
  \emph{abstraction set} $[\names]X$ is the quotient set
  $(\names\times X)/\mathord{\sim}$. The $\sim$-equivalence class of
  $(a,x)\in\names\times X$ is denoted by
  $\langle a\rangle x\in [\names]X$.
\end{defn}
\noindent
We may think of~$\sim$ as an abstract notion of $\alpha$-equivalence,
and of~$\langle a\rangle$ as binding the name~$a$. Indeed we have
$\supp(\langle a\rangle x)= \supp(x)\setminus\{a\}$, as expected in
binding constructs.


\section{Data Languages and Bar Languages}
\label{sec:databarl}
\noindent As indicated, we use the set~$\names$ of names as the data
domain, and capture freshness of data values via $\alpha$-equivalence,
roughly as follows. We work with words over~$\names$ where names may
be preceded by the bar symbol~`$\newletter$', which indicates that the
next letter is bound until the end of the word
(cf.~\autoref{rem:nka}); such words are called \emph{bar
  strings}~\cite{SchroderEA17}. Bar strings may be seen as patterns
that govern how letters are read from the input word; broadly
speaking, an occurrence of $\newletter a$ corresponds to reading a
letter from the input word, and binding this letter to the name~$a$,
while an undecorated occurrence of~$a$ means that the letter referred
to by~$a$ occurs in the input word. (We loosely speak of the input
word as consisting of letters, and of bar strings as consisting of
names; formally, however, letters and names are the same, viz.,
elements of~$\names$.) Bound names can be renamed, giving rise to a
notion of $\alpha$-equivalence; as usual, the new name needs to be
sufficiently \emph{fresh}, i.e.~cannot already occur freely in the
scope of the binding. For instance, in $a\newletter bab$, the
$\newletter$ binds the letter~$b$ in $\newletter bab$. The bar string
$a\newletter bab$ is $\alpha$-equivalent to $a\newletter cac$ but not,
of course, to $a\newletter aaa$. That is, we can rename the bound
name~$b$ into~$c$ but not into~$a$, as~$a$ already occurs freely in
$\newletter bab$; we say that renaming~$b$ into~$a$ is \emph{blocked}.

We will see that bar strings modulo $\alpha$-equivalence relate to
formalisms for \emph{global} freshness (a name is globally fresh if it
has never been seen before), such as session
automata~\cite{BolligEA14}. Contrastingly, if we regard a bar string
as representing all words over~$\names$ that arise by performing some
$\alpha$-equivalent renaming and then removing the bars, we arrive at
a notion of \emph{local} freshness, similar to freshness w.r.t.\
currently stored names as in register automata; precise definitions
are given later in this section. For instance, the bar string
$\newletter a\newletter b a$ represents the set of words
\begin{math}
  \{cdc\mid c,d\in\names, c\neq d\}\subseteq\names^*
\end{math}
under both local and global freshness semantics --
in~$\newletter a\newletter ba$, $a$ and $b$ cannot be renamed into the
same letter, since~$a$ occurs freely in the scope of~$\newletter
b$. Contrastingly, $\newletter a\newletter b$ represents the set
$\{cd\mid c\neq d\}\subseteq\names^*$ under global freshness
semantics, but under local freshness semantics it just represents the
set~$\names^2$ of all two-letter words, since
$\newletter a\newletter b$ is $\alpha$-equivalent to
$\newletter a\newletter a$. The impossibility of expressing the
language $\{cd\mid c\neq d\}$ under local freshness is thus hardwired
into our language model. We emphasize again that this restriction
seems reasonable in practice, since one may expect that freshness of
new letters is often relevant only w.r.t.\ letters that are intended
to be seen again later, e.g.~in deallocation statements or message
acknowledgements. Formal definitions are as follows.

\begin{defn}[Bar strings]
  We put $\barA=\names\cup\{\newletter a\mid a\in\names\}$; we refer
  to elements $\newletter a\in\barA$ as \emph{bar names}, and to
  elements $a\in\barA$ as \emph{plain names}.  A \emph{bar string}
  is a word $w=\sigma_1\sigma_2\cdots \sigma_n\in\barstrings$, with
  \emph{length} $|w|=n$; we denote the empty string by~$\epsilon$.
  We turn~$\barA$ into a nominal set by putting $\pi\cdot a=\pi(a)$
  and $\pi\cdot\newletter a=\newletter\pi(a)$; then,~$\barstrings$ is
  a nominal set under the pointwise action of~$G$. We
  define~\emph{$\alpha$-equivalence} on bar strings to be the least
  equivalence $\equiv_\alpha$ such that
  \begin{equation*}
    w\newletter a v\equiv_\alpha w\newletter b u\quad\text{whenever}\quad
    \langle a\rangle v= \langle b\rangle u\text{ in $[\names]\,\barstrings$}
  \end{equation*}
  (\autoref{def:abstraction}) for $w,v,u\in\barstrings$,
  $a\in\names$. Thus, $\newletter a$ binds~$a$, with scope extending
  to the end of the word. Correspondingly, a name~$a$ is \emph{free}
  in a bar string~$w$ if there is an occurrence of~$a$ in~$w$ that is
  to the left of any occurrence of~$\newletter a$. We write
  $[w]_\alpha$ for the $\alpha$-equivalence class of $w\in\barstrings$
  and $\FN(w) = \{a\in\names\mid a\text{ is free in }w\}$
  ($=\supp([w]_\alpha)$) 
  for the set of free names of~$w$. If~$\FN(w)=\emptyset$, then~$w$ is
  \emph{closed}.  A bar string~$w$ is \emph{clean} if all bar
  names~$\newletter a$ in~$w$ are pairwise distinct and
  have~$a\notin\FN(w)$.  
  For a set $S\subseteq \names$ of names, we
  write $\free{S}=\{w\in\barstrings\mid \FN(w)\subseteq S\}$.
\end{defn}
%
\begin{remark}\label{rem:nka}
  Closed bar strings are essentially the same as the \emph{well-formed
    symbolic words} that appear in the analysis of session
  automata~\cite{BolligEA14}. Indeed, symbolic words consist of
  operations that read a letter into a register, corresponding to bar
  names, and operations that require seeing the content of some
  register in the input, corresponding to plain names. Symbolic words
  are normalized by a register allocation procedure similar
  to $\alpha$-renaming. Well-formedness of symbolic words corresponds
  to closedness of bar strings.

  Moreover, modulo the respective equational laws, bar strings
  coincide with the \emph{$\nu$-strings}~\cite{KozenEA15b,KozenEA15}
  that appear in the semantics of \emph{Nominal Kleene Algebra
    (NKA)}~\cite{GabbayCiancia11}; cf.~\cite{SchroderEA17}. These are
  constructed from names in~$\names$, sequential composition, and a
  binding construct $\nu a.\,w$, which binds the name~$a$ in the
  word~$w$. In particular, the equational laws of $\nu$-strings allow
  extruding the scope of every~$\nu$ to the end of the word after
  suitable $\alpha$-renaming. We note that $\muBar$ and its associated
  automata models are more expressive than NKA as they express
  languages with unbounded nesting of binders~\cite{SchroderEA17}.
\end{remark}

\noindent We will work with three different types of languages:
\begin{defn}
\begin{enumerate}
\item \emph{Data languages} are subsets of $\names^*$.
\item \emph{Literal languages} are subsets of $\barstrings$, i.e.\
  sets of bar strings.
\item \emph{Bar languages} are subsets of
  $\barstrings/{\equiv_\alpha}$, i.e.~sets of $\alpha$-equivalence
  classes of bar strings.\smallskip
\end{enumerate}
\noindent A bar language~$L$ is \emph{closed} if $\supp(L)=\emptyset$.
\end{defn}
Bar languages are the natural semantic domain of our formalims, and
relate tightly to data languages as discussed next. A key factor in
the good computational properties of regular nominal nondeterministic
automata (RNNA)~\cite{SchroderEA17} is that the bar languages they
accept (cf.~\autoref{sec:eauto}) are uniformly finitely supported, and
we will design~$\muBar$ to ensure the same property. Note that a
uniformly supported bar language is closed iff it consists of
(equivalence classes of) closed bar strings. For brevity, we will
focus the exposition on target formulae (in model checking) and
automata that denote or accept, respectively, closed bar languages,
with free names appearing only in languages accepted by non-initial
states or denoted by proper subformulae of the target formula. (The
treatment is easily extended to bar languages with free names; indeed,
such globally free names are best seen as a separate finite alphabet
of constant symbols~\cite{KozenEA15b}.)  We will occasionally describe
example bar languages as regular expressions over $\barA$ (i.e.~as
\emph{regular bar expressions}~\cite{SchroderEA17}), meaning the set
of all $\alpha$-equivalence classes of instances of the expression.

To convert bar strings into data words, we define
$\ub(a)=\ub(\newletter a)=a$ and extend $\ub$ to bar strings
letterwise; i.e.~$\ub(w)$ is the data word obtained by erasing all
bars `$\newletter$' from~$w$. We then define two ways to convert a bar
language~$L$ into a data language:
\begin{equation*}
  N(L)=\{\ub(w)\mid [w]_\alpha\in L, w\text{ clean}\}
  \qquad\text{and}\qquad
  D(L)=\{\ub(w)\mid [w]_\alpha\in L\}.
\end{equation*}
That is,~$N$ is a global freshness interpretation of~$\newletter$,
while~$D$ provides a local freshness interpretation as exemplified
above; e.g.~as indicated above we have
$D(\newletter a\newletter ba)=N(\newletter a\newletter ba)=\{aba\mid
a,b\in\names,a\neq b\}$, 
while $N(\newletter a\newletter b)=\{ab\mid a,b\in\names,a\neq b\}$
but $D(\newletter a\newletter b)=\{ab\mid a,b\in\names\}$.
\begin{remark}\label{rem:local-global}
  In fact, the operator~$N$ is injective on closed bar languages,
  because $\ub$ is injective on closed clean bar
  strings~\cite{SchroderEA17,SchroderEA17-full}. This means that bar
  language semantics and global freshness semantics are essentially
  the same, while local freshness semantics is a quotient of the other
  semantics. It is immediate from~\cite[Lemma~A.4]{SchroderEA17-full}
  that~$N$ preserves intersection and complement of closed bar
  languages, the latter in the sense that
  $N(\free{\emptyset}\setminus L)=\names^*\setminus N(L)$ for closed
  bar languages~$L$. Both properties fail for the local freshness
  interpretation~$D$; the semantics of formulae should therefore be
  understood first in terms of bar languages, with~$D$ subsequently
  applied globally.  
  %
\end{remark}

\section{Syntax and Semantics of $\muBar$}
\label{sec:logic}
\noindent We proceed to introduce a variant $\muBar$ of linear
temporal logic whose formulae define bar languages. This logic
relates, via its local freshness semantics, to Freeze LTL. It replaces
freeze quantification with name binding modalities, and features
fixpoints, for increased expressiveness in comparison to the temporal
connectives of LTL~\cite{DeGiacomoVardi13}. Via global freshness
semantics, $\muBar$ may moreover be seen as a logic for session
automata~\cite{BolligEA14}.

\mysubsec{Syntax}
We fix a countably infinite set $\mathsf{V}$ of \emph{(fixpoint)
  variables}.  The set $\formulae$ of \emph{bar formulae}
$\phi,\psi,\ldots$ (in negation normal form) is generated by the grammar
\[
  \phi,\psi := \epsilon\mid \negepsilon
  \mid \phi\wedge\psi \mid \phi\vee\psi \mid\,
  \hearts_\barname \phi\mid 
  X\mid \mu X.\,\phi,
\]
where $\hearts\in\{\Diamond,\Box\}$, $\barname\in\barA$ and
$X\in\mathsf{V}$. We define $\top=\epsilon\lor\neg\epsilon$ and
$\bot=\epsilon\land\neg\epsilon$. We refer to $\Diamond_{\barname}$
and $\Box_{\barname}$ as \emph{$\barname$-modalities}.  The meaning of
the Boolean operators is standard; the fixpoint construct~$\mu$
denotes unique fixpoints, with uniqueness guaranteed by a guardedness
restriction to be made precise in a moment. The other constructs are
informally described as follows. The constant~$\epsilon$ states that
the input word is empty, and $\negepsilon$ that the input word is
nonempty. 
A formula $\Diamond_a\phi$ is read `the first letter is~$a$, and the
remaining word satisfies~$\phi$', and $\Box_a\phi$ is read dually as
`if the first letter is~$a$, then the remaining word
satisfies~$\phi$'. The reading of $\newletter a$-modalities is similar
but involves $\alpha$-renaming as detailed later in this section; as
indicated in \autoref{sec:databarl}, this means that
$\newletter a$-modalities effectively read fresh letters. They thus
replace the freeze quantifier; one important difference with the
latter is that $\Diamond_{\scriptnew a}$ consumes the letter it reads,
i.e.~advances by one step in the input word. A name~$a$ is \emph{free}
in a formula~$\phi$ if $\phi$ contains an $a$-modality at a position
that is not in the scope of any $\newletter a$-modality; that is,
$\newletter a$-modalities bind the name~$a$. We write $\FN(\phi)$ for
the set of free names in~$\phi$, and $\BN(\phi)$ for the set of
\emph{bound names} in~$\phi$, i.e.~those names~$a$ such that~$\phi$
mentions~$\newletter a$; we put $\N(\phi)=\FN(\phi)\cup\BN(\phi)$, and
(slightly generously) define the \emph{degree} of~$\phi$ to be
$\mathsf{deg}(\phi)=|N(\phi)|$.  We write~$\mathsf{FV}(\phi)$ for the
set of \emph{free} fixpoint variables in~$\phi$, defined in the
standard way by letting~$\mu X$ bind~$X$; a formula~$\phi$ is
\emph{closed} if $\mathsf{FV}(\phi)=\emptyset$. (We refrain from
introducing terminology for formulae without free \emph{names}.)
As indicated above we require
that all fixpoints $\mu X.\,\phi$ are \emph{guarded}, that is, all
free occurrences of~$X$ lie within the scope of some
$\barname$-modality in~$\phi$. 
We denote by $\clos(\phi)$ the \emph{closure} of~$\phi$ in the
standard sense~\cite{Kozen83}, i.e.~the least set of formulae that
contains~$\phi$ and is closed under taking immediate subformulae and
unfolding top-level fixpoints; this set is finite. We define the
\emph{size} of~$\phi$ as $|\phi|=|\clos(\phi)|$.

For purposes of making $\formulae$ a nominal set, we regard every
fixpoint variable~$X$ with enclosing fixpoint expression
$\mu X.\,\phi$ as being annotated with the set $A=\FN(\mu X.\,\phi)$;
that is, we identify~$X$ with the pair $(X,A)$. We then
let~$G$ act by replacing names in the obvious way; i.e.\
$\pi \cdot \phi$ is obtained from $\phi$ by replacing~$a$ with
$\pi(a)$, $\newletter a$ with $\newletter\pi(a)$, and $(X,A)$ with
$(X,\pi\cdot A)$ everywhere. Otherwise, the definition is as expected:
\begin{defn}
  \emph{$\alpha$-Equivalence} $\equiv_\alpha$ on formulae is the
  congruence relation generated by
  \begin{equation*}
    \Diamond_{\scriptnew a} \phi\equiv_\alpha \Diamond_{\scriptnew
      b}\psi\text{ and }
    \Box_{\scriptnew a} \phi\equiv_\alpha \Box_{\scriptnew b}\psi\quad
    \text{whenever }
    \langle a\rangle \phi = \langle b\rangle\psi
  \end{equation*}
  (cf.\
  \autoref{def:abstraction}).
\end{defn}
\begin{remark}\label{rem:alpha}
  The point of implicitly annotating fixpoint variables with the free
  names of the enclosing $\mu$-expression is to block unsound
  $\alpha$-renamings: It ensures that, e.g.,
  $\Diamond_{\scriptnew a}(\mu
  X.\,(\Diamond_a\epsilon\lor\Diamond_{\scriptnew b}X))$ is \emph{not}
  $\alpha$-equivalent to
  $\Diamond_{\scriptnew a}(\mu
  X.\,(\Diamond_a\epsilon\lor\Diamond_{\scriptnew a}X))$ (as~$X$ is
  actually $(X,\{a\})$), and is required to ensure stability of
  $\alpha$-equivalence under fixpoint expansion, recorded next. We
  note that fixpoint expansion does \emph{not} avoid capture of names;
  e.g.~the expansion of
  $\mu X.\,(\Diamond_a\epsilon\lor\Diamond_{\scriptnew a}X)$ is
  $\Diamond_a\epsilon\lor\Diamond_{\scriptnew a}(\mu
  X.\,(\Diamond_a\epsilon\lor\Diamond_{\scriptnew a}X))$.
\end{remark}
\begin{lemma}\label{lem:alpha-unfolding}
  Let $\mu X.\,\phi\equiv_\alpha\mu X.\,\phi'$. Then
  $\phi[\mu X.\,\phi/X]\equiv_\alpha\phi'[\mu X.\,\phi'/X]$.
\end{lemma}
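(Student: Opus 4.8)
The plan is to deduce the statement from a single substitutivity claim for the capturing unfolding substitution, established by induction on formula structure. First I would record two preliminary observations. Since $\equiv_\alpha$ preserves free names, $\FN(\mu X.\,\phi)=\FN(\mu X.\,\phi')=:A$, and $A$ is precisely the set with which the bound variable~$X$ is annotated (recall that~$X$ stands for the pair $(X,A)$). Moreover, $\mu X.\,\phi\equiv_\alpha\mu X.\,\phi'$ already entails $\phi\equiv_\alpha\phi'$: the generating clauses of $\equiv_\alpha$ relate only formulae of the form $\Diamond_{\scriptnew a}(-)$ or $\Box_{\scriptnew a}(-)$, so any derivation relating two $\mu$-headed formulae must factor through the congruence rule for the name-equivariant constructor $\mu X.\,(-)$ and hence relates the bodies. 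This is the usual `injectivity of constructors up to $\alpha$-equivalence' for nominal syntax, which I would either cite or settle by a routine structural characterisation of $\equiv_\alpha$.

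The core of the proof is the following claim, shown by induction on~$\theta$: if $\theta\equiv_\alpha\theta'$ and $\psi\equiv_\alpha\psi'$ with $\FN(\psi),\FN(\psi')\subseteq A$, then $\theta[\psi/X]\equiv_\alpha\theta'[\psi'/X]$, where $[\psi/X]$ is the (capturing) substitution of fixpoint unfolding. The lemma is the instance $\theta=\phi$, $\theta'=\phi'$, $\psi=\mu X.\,\phi$, $\psi'=\mu X.\,\phi'$, for which $\FN(\psi)=\FN(\psi')=A$ by the preliminary step. The leaf case $\theta=X$ is immediate, as $\theta[\psi/X]=\psi\equiv_\alpha\psi'=\theta'[\psi'/X]$; for any other variable, and for the constants, the substitution is vacuous. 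The Boolean connectives, the plain modalities $\hearts_a$, and an inner fixpoint $\mu Y.\,(-)$ with $Y\neq X$ all follow from congruence and the induction hypothesis; an inner fixpoint with $Y=X$ is trivial since then~$X$ is not free.

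The only delicate case is a bar modality $\theta=\hearts_{\scriptnew b}\theta_0$, $\theta'=\hearts_{\scriptnew c}\theta_0'$ with $\langle b\rangle\theta_0=\langle c\rangle\theta_0'$, where substitution may capture~$b$; this is also where I expect the main obstacle. If~$X$ does not occur in $\theta_0$ the substitution is vacuous and the claim is just the hypothesis, so assume it does, whence $A\subseteq\supp(\theta_0)$ through the annotation of~$X$. Evaluating $\langle b\rangle\theta_0=\langle c\rangle\theta_0'$ on a fresh representative~$e$ forces $(b\,e)\cdot A=(c\,e)\cdot A$ at each occurrence of~$X$, which for $b\neq c$ is possible only if $b,c\notin A$. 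Thus the annotation \emph{blocks} the unsound renamings: either $b=c$, in which case $\langle b\rangle\theta_0=\langle b\rangle\theta_0'$ forces $\theta_0=\theta_0'$ and the case reduces, by the induction hypothesis and congruence under $\hearts_{\scriptnew b}$, to $\theta_0[\psi/X]\equiv_\alpha\theta_0[\psi'/X]$; or $b\neq c$ and $b,c\notin A$. In the latter situation I would first obtain $\hearts_{\scriptnew b}(\theta_0[\psi/X])\equiv_\alpha\hearts_{\scriptnew b}(\theta_0[\psi'/X])$ from the induction hypothesis and congruence, and then, choosing~$e$ fresh also for $\psi'$ and~$A$, check that the swap $(b\,e)$ commutes with $[\psi'/X]$ (as it fixes both $\psi'$ and the variable $(X,A)$), so that $\langle b\rangle(\theta_0[\psi'/X])=\langle c\rangle(\theta_0'[\psi'/X])$ and a generating clause yields $\hearts_{\scriptnew b}(\theta_0[\psi'/X])\equiv_\alpha\hearts_{\scriptnew c}(\theta_0'[\psi'/X])$; chaining the two steps closes the case. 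The remaining friction is purely the bookkeeping of how finite permutations act on the annotated variable $(X,A)$ and on $\psi,\psi'$; guardedness and the precise form of $\clos(\phi)$ are not needed here.
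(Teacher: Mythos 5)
Your argument is correct and turns on precisely the mechanism that the paper's own one-sentence proof invokes: by the annotation convention, $X=(X,A)$, $\mu X.\,\phi$, and $\mu X.\,\phi'$ all have the same free names $A$, so the unfolding substitution changes no supports and hence permits exactly the same $\alpha$-renamings in the outer contexts $\phi$ and $\phi'$ --- your substitutivity induction simply supplies the detail the paper declares immediate. The one point to tidy is the structural characterisation you defer to in the bar-modality case: $\hearts_{\scriptnew b}\theta_0\equiv_\alpha\hearts_{\scriptnew c}\theta_0'$ yields $\langle b\rangle\theta_0=\langle c\rangle\theta_0'$ only \emph{modulo} $\equiv_\alpha$ of the bodies, i.e.\ $(b\,e)\cdot\theta_0\equiv_\alpha(c\,e)\cdot\theta_0'$ for fresh $e$, not literal abstraction equality (e.g.\ $\Diamond_{\scriptnew a}\Diamond_{\scriptnew b}\epsilon\equiv_\alpha\Diamond_{\scriptnew a}\Diamond_{\scriptnew c}\epsilon$ with literally distinct bodies); your case split ($b=c$, or $b\neq c$ and $b,c\notin A$, forced by $(b\,e)\cdot A=(c\,e)\cdot A$ at each occurrence of $(X,A)$) and the subsequent reasoning survive this weakening verbatim, with the induction hypothesis applied to the swapped bodies, which have the same size and whose $X$-annotations are unchanged since $b,c,e\notin A$.
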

\begin{proof}
Immediate from the fact that by the convention that fixpoint variables
are annotated with the free names of their defining formulae,~$X$,
$\mu X.\,\phi$, and $\mu X.\,\phi'$ have the same free names and hence
allow the same $\alpha$-renamings in the outer contexts $\phi$
and~$\phi'$, respectively. 
\end{proof}

\mysubsec{Semantics} We interpret each bar formula~$\phi$ as denoting
a uniformly finitely supported bar language, depending on a
\emph{context}, i.e.~a finite set $S\subseteq\names$ such that
$\FN(\phi)\subseteq S$, which specifies names that are allowed to
occur freely; at the outermost level,~$S$ will be empty
(cf.~\autoref{sec:databarl}). The context grows when we traverse
modalities $\Diamond_{\scriptnew a}$ or $\Box_{\scriptnew a}$.
 Correspondingly, we define satisfaction $S,w\models\phi$ of a
 formula~$\phi$ by a bar string~$w\in\free{S}$ recursively by the
 usual clauses for the Boolean connectives, and
 \allowdisplaybreaks
 \[ 
   \begin{array}{r@{\ }lll}
     S,w&\models \negepsilon & {\Leftrightarrow} & w\neq\epsilon\\
     S,w&\models \epsilon & {\Leftrightarrow} & w=\epsilon\\
     S,w&\models \mu X.\,\phi &{\Leftrightarrow} &  S,w\models\phi[\mu X.\,\phi/X]\\
     S,w&\models \Diamond_a\phi & {\Leftrightarrow} &\exists v.\,w=av\text{ and }
       S,v\models\phi\\
     S,w&\models \Box_a\phi & {\Leftrightarrow}  & \forall v.\,\text{if }w=av\text{ then }
       S,v\models\phi\\
     S,w&\models \Diamond_{\scriptnew a}\phi & {\Leftrightarrow} &\exists \psi\in\formulae,
     v\in\barstrings, b\in\names.\,\\
     &&&
     w\equiv_\alpha\newletter bv\text{ and } \langle a\rangle\phi
     =
     \langle b \rangle \psi\text{ and } S\cup\{b\},v\models\psi
     \\
     S,w&\models \Box_{\scriptnew a}\phi & {\Leftrightarrow} &\forall \psi\in\formulae,
     v\in\barstrings, b\in\names.\,\\
     &&&
     \text{if }w\equiv_\alpha\newletter bv\text{ and } \langle a\rangle\phi
     =
     \langle b \rangle \psi\text{ then }  S\cup\{b\},v\models\psi.
   \end{array}
 \]
Guardedness of fixpoint variables guarantees that on the right hand
side of the fixpoint clause, $\mu X.\,\phi$ is evaluated only on words
that are strictly shorter than~$w$, so the given clause uniquely
defines the semantics. 
Notice that $\Diamond_{\scriptnew a}$ and $\Box_{\scriptnew a}$ allow
$\alpha$-renaming of both the input word and the formula; we comment
on this point in
\autoref{rem:alpha-eq}. 
For a formula~$\phi$ such that $\FN(\phi)=\emptyset$, we briefly write
\[
  \sem{\phi}_0=\{w\in\free{\emptyset}\mid\emptyset,w\models \phi\}
  \qquad\text{and}\qquad
  \sem{\phi}=\sem{\phi}_0/{\equiv_\alpha},
\]
referring to $\sem{\phi}_0$ as the \emph{literal language} and
to~$\sem{\phi}$ as the \emph{bar language} of~$\phi$ (variants with
non-empty context and $\FN(\phi)\neq\emptyset$ are technically
unproblematic but require more notation). In particular,~$\sem{\phi}$
is closed by construction.  The \emph{global} and \emph{local
  freshness semantics} of~$\phi$ are $N(\sem{\phi})$ and
$D(\sem{\phi})$, respectively, where $N$ and~$D$ are the operations
converting bar languages into data languages described in
Section~\ref{sec:databarl}.

\begin{remark}\label{rem:ltl}
  In $\muBar$, fixpoints take on the role played by the temporal
  operators in Freeze LTL. In bar language semantics, the overall mode
  of expression in $\muBar$, illustrated in \autoref{expl:formulae},
  is slightly different from that of Freeze LTL, as in $\muBar$ the
  input is traversed using modalities tied to specific letters rather
  than using a \emph{next} operator~$\fullmoon$. %
  In local freshness semantics, the effect of~$\fullmoon$ is included
  in the name binding modality~$\Diamond_{\scriptnew a}$. For
  instance, in local freshness semantics we can express LTL-style
  formulae $\phi\ U\, \psi$ (`$\phi$ until~$\psi$') as
  $\mu X.\,\psi\lor(\phi\land\Diamond_{\newletter a}X)$. In
  particular, $\mu X.\,\epsilon\lor\Diamond_{\newletter a}X$ defines
  the universal data language, so~$\top$ is not actually needed in
  local freshness semantics. Overall, Freeze LTL and $\muBar$ (with
  local freshness semantics) intersect as indicated but are
  incomparable: On the one hand, Freeze LTL can express the language
  `the first two letters are different', which as indicated in
  \autoref{sec:databarl} is not induced by a bar language. On the
  other hand, $\muBar$ features fixpoints, which capture properties
  that generally fail to be expressible using LTL operators, e.g.~the
  language of all even-length words. The latter point relates to the
  fact that even over finite alphabets, LTL on finite words is only as
  expressive as first-order logic, equivalently star-free regular
  expressions (cf.~\cite{DeGiacomoVardi13}). Constrastingly, thanks to
  the fixpoint operators,~$\muBar$ is as expressive as its
  corresponding automata model (\autoref{lem:autformeq}).
\end{remark}
\begin{remark}\label{rem:negation}
  As indicated previously, \muBar is closed under complement: By
  taking negation normal forms, we can define $\neg\phi$ so that
  $S,w\models\neg\phi$ iff $S,w\not\models\phi$.
\end{remark}
We note next that literal languages of formulae are closed under
$\alpha$-equivalence, and that $\alpha$-equivalent renaming of
formulae indeed does not affect the semantics (cf.\
Remark~\ref{rem:alpha}):%
\begin{lemma}\label{lem:alpheq}
  For $\phi,\psi\in\formulae$, $a\in\names$, $S\subseteq \names$,
  and $w,w'\in \barstrings$, we have:%
  \begin{enumerate}
  \item\label{lem:alpheq:1} If $S,w\models\psi$ and $w\equiv_\alpha w'$, then
    $S,w'\models \psi$. 
  \item\label{lem:alpheq:2} If $S,w\models\psi$ and
    $\phi\equiv_\alpha\phi'$, then $S,w'\models \phi'$.
  \end{enumerate}
\end{lemma}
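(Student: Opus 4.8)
The plan is to prove both items simultaneously by well-founded induction following the recursive definition of $\models$, with $|w|$ as the primary measure; at a fixed word length the recursion is finite because, by guardedness, every chain of Boolean decompositions and fixpoint unfoldings reaches either a modality (which passes control to a strictly shorter word) or one of the constants $\epsilon,\negepsilon$, so the clauses indeed uniquely define the semantics and the induction is legitimate. Before the modal cases I would record two elementary facts. First, $\equiv_\alpha$ on $\barstrings$ preserves length and leading plain names: since each generating step $w\newletter a v\equiv_\alpha w\newletter b u$ leaves the prefix~$w$ untouched, a word beginning with a plain name~$c$ can only be $\alpha$-renamed into another word beginning with~$c$; concretely, $w\equiv_\alpha w'$ and $w=cv$ with $c\in\names$ force $w'=cv'$ for some $v'\equiv_\alpha v$. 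Second, $\equiv_\alpha$ on $\formulae$ is equivariant, i.e.\ $\phi\equiv_\alpha\phi'$ implies $\pi\cdot\phi\equiv_\alpha\pi\cdot\phi'$ for every $\pi\in G$, because its generating rule is itself equivariant.

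For item~\ref{lem:alpheq:1} (renaming the word), the cases $\epsilon,\negepsilon,\wedge,\vee$ and the fixpoint case $\mu X.\,\phi$ (via its unfolding) follow directly from the induction hypothesis at the same word length. For a plain modality $\Diamond_a\phi_0$ I would use the first fact: if $w=av$ then $w'=av'$ with $v\equiv_\alpha v'$, and the induction hypothesis on the strictly shorter word~$v$ transfers $S,v\models\phi_0$ to $S,v'\models\phi_0$; the case $\Box_a\phi_0$ is symmetric, additionally using that $w$ and $w'$ begin with the same letter. The bar modalities are in fact the easy cases here: since the clause for $\Diamond_{\scriptnew a}\phi_0$ already quantifies over decompositions $w\equiv_\alpha\newletter b v$, transitivity of $\equiv_\alpha$ shows that any witness $(\psi,v,b)$ for~$w$ is simultaneously a witness for~$w'$, and dually for $\Box_{\scriptnew a}$.

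For item~\ref{lem:alpheq:2} (renaming the formula) the propositional cases are again routine, and for $\mu X.\,\phi_0\equiv_\alpha\mu X.\,\phi_0'$ I would invoke \autoref{lem:alpha-unfolding} to obtain $\phi_0[\mu X.\,\phi_0/X]\equiv_\alpha\phi_0'[\mu X.\,\phi_0'/X]$ and then apply the induction hypothesis at the same word length; the plain-modality case uses the induction hypothesis on the residual word. The genuinely delicate case is $\Diamond_{\scriptnew a}\phi_0\equiv_\alpha\Diamond_{\scriptnew c}\phi_0'$, where I would combine two mechanisms. A witness $(\psi,v,b)$ for $S,w\models\Diamond_{\scriptnew a}\phi_0$ satisfies $\langle a\rangle\phi_0=\langle b\rangle\psi$; if the outer renaming is at the binder, so that $\langle a\rangle\phi_0=\langle c\rangle\phi_0'$, then transitivity of equality in $[\names]\formulae$ lets the same witness serve for $\Diamond_{\scriptnew c}\phi_0'$. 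If instead the renaming lies in the body ($\phi_0\equiv_\alpha\phi_0'$ with $a=c$), I would write $\psi=\pi\cdot\phi_0$ for the transposition-product $\pi$ witnessing $\langle a\rangle\phi_0=\langle b\rangle\psi$, set $\psi':=\pi\cdot\phi_0'$, and check that $\langle a\rangle\phi_0'=\langle b\rangle\psi'$ holds by the same computation while $\psi\equiv_\alpha\psi'$ by equivariance of $\equiv_\alpha$; the induction hypothesis on the strictly shorter word~$v$ then passes $S\cup\{b\},v\models\psi$ to $S\cup\{b\},v\models\psi'$. Since $\equiv_\alpha$ is the congruence generated by the binder-renaming rule, a general $\phi\equiv_\alpha\phi'$ decomposes into finitely many top-level and congruence steps, each treated as above.

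I expect this last case to be the main obstacle, precisely because the semantic clause performs its \emph{own} single-level $\alpha$-renaming (selecting the witness $\psi$ via $\langle a\rangle\phi_0=\langle b\rangle\psi$), which must be reconciled with the possibly deep $\alpha$-equivalence $\phi_0\equiv_\alpha\phi_0'$ supplied by the hypothesis. The key is that renaming the witness together with the formula commutes with abstraction-set equality (equivariance of $\equiv_\alpha$), so a matching witness $\psi'$ always exists and stays $\alpha$-equivalent to~$\psi$, bringing the induction hypothesis back into scope on a shorter word. A secondary point of care is well-foundedness: one must confirm that fixpoint unfolding never stalls at a fixed word length, which is exactly what guardedness — and the resulting strict decrease of word length at every modality — guarantees.
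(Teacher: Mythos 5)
Your proof is correct and follows the same skeleton as the paper's: both claims are established by induction along the recursive definition of satisfaction (well-founded by guardedness, via the word length), the constant and Boolean cases are routine, the fixpoint case of item~2 goes through \autoref{lem:alpha-unfolding}, and item~1's bar-modal cases are immediate because the semantic clauses already quantify over $\alpha$-equivalent decompositions of the input word. The one place where you genuinely diverge is the $\Diamond_{\scriptnew a}$-case of item~2. The paper reuses the original witness $(\psi,v,b)$ unchanged and concludes by transitivity, $\Diamond_{\scriptnew c}\chi'\equiv_\alpha\Diamond_{\scriptnew a}\chi\equiv_\alpha\Diamond_{\scriptnew b}\psi$ --- that is, it implicitly reads the side condition of the semantic clause up to full $\alpha$-equivalence of the modal formulae, even though the clause literally demands the abstraction-set equality $\langle a\rangle\phi=\langle b\rangle\psi$. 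Your finer decomposition --- binder-level steps handled by transitivity of equality in $[\names]\formulae$, body-level steps handled by transporting the witness to $\psi'=\pi\cdot\phi_0'$ using equivariance of $\equiv_\alpha$ and the item-2 induction hypothesis on the strictly shorter word~$v$ --- produces a witness that literally satisfies the abstraction equality, at the cost of one extra appeal to the induction hypothesis plus the routine reduction of $\equiv_\alpha$ to finitely many generating and congruence steps. So your version is slightly longer but tighter exactly where the paper's write-up is loose; both are sound, since the paper's shortcut can be justified by the same transport argument you spell out. Likewise, your explicit auxiliary fact that $\alpha$-equivalence of bar strings preserves the leading plain name (and $\alpha$-relates the tails) is precisely the step the paper leaves implicit when it dispatches the $a$-modality case of item~1 with ``one simply uses the induction hypothesis.''
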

\noindent The proof is by induction along the recursive definition of
the semantics; the case for fixpoints in Claim~\ref{lem:alpheq:2}
is by \autoref{lem:alpha-unfolding}.
\begin{remark}\label{rem:alpha-eq}
  We have noted above that the semantics allows $\alpha$-renaming of
  both words and formulae. Let us refer to an alternative semantics
  where the definition of $S,w\models\Diamond_{\scriptnew a}\phi$ is
  modified to require that there exists $w\equiv_\alpha\newletter a v$
  such that $S\cup\{a\},v\models\phi$ (without allowing
  $\alpha$-renaming of $\Diamond_{\scriptnew a}\phi$), similarly for
  $\Box_{\scriptnew a}$, as the \emph{rigid} semantics, and to the
  semantics defined above as the \emph{actual} semantics. The rigid
  semantics is not equivalent to the actual semantics, and has several
  flaws. First off, claim~\ref{lem:alpheq:2} of the above
  \autoref{lem:alpheq} fails under the rigid semantics, in which,
  for example,
  \[
    \emptyset,\newletter b\newletter a b\models\Diamond_{\scriptnew
      b}\Diamond_{\scriptnew a}\top
    \qquad\text{but}\qquad
    \emptyset,\newletter b\newletter a b\not\models\Diamond_{\scriptnew
      b}\Diamond_{\scriptnew b}\top
  \]
  (the latter because
  $\{b\},\newletter ab\not\models\Diamond_{\scriptnew b}\top$, as
  $\alpha$-renaming of~$\newletter a$ into~$\newletter b$ is blocked
  in $\newletter ab$). More importantly, the rigid semantics has
  undesirable effects in connection with fixpoints. For instance, in
  the actual semantics, the formula
  $\phi=\mu X.\,((\neg\epsilon\land\Box_{\scriptnew
    a}\bot)\lor\Diamond_{\scriptnew a}X)$ has the intuitively intended
  meaning: A bar string satisfies~$\phi$ iff it contains some plain
  name. In the rigid semantics, however, we unexpectedly have
  $\emptyset,\newletter a\newletter bab\not\models\phi$; to see this,
  note that $\{a\},\newletter bab\not\models\phi$ in the rigid
  semantics, since $\alpha$-renaming of~$\newletter b$
  into~$\newletter a$ is blocked in $\newletter bab$.
\end{remark}

\begin{expl}\label{expl:formulae}
  We consider some $\muBar$ formulae and their respective semantics
  under local and global freshness. (The local freshness versions are
  expressible in Freeze LTL in each case; recall however
  \autoref{rem:ltl}.)
  \begin{enumerate}
  \item The bar language $\sem{\top}$ is the set of all
    closed bar strings (modulo $\alpha$-equivalence, a qualification
    that we omit henceforth). Under both global and local freshness
    semantics, this becomes the set of all data words.
  \item The bar language $\sem{\Diamond_{\scriptnew a}\Box_a\epsilon}$
    is the language of all closed bar strings that start with a bar
    name $\newletter a$, and stop after the second letter if that
    letter exists and is the plain name~$a$
    (e.g.~$\sem{\Diamond_{\scriptnew a}\Box_a\epsilon}$ contains
    $\newletter a$, $\newletter aa$, $\newletter a\newletter b ab$ but
    not $\newletter a a a$). In both global and local
    freshness semantics, this becomes the language of all words that
    stop after the second letter if that letter exists and coincides
    with the first letter.
  \item In context~$\{a\}$, a bar string satisfies
    $\mu Y.\,((\Diamond_{\scriptnew b} Y)\lor\Diamond_a\top)$ iff it
    contains a free occurrence of~$a$ preceded only by bar names
    distinct from~$\newletter a$. Thus, the bar language of
    \begin{equation*}
      \mu X.\,(\Diamond_{\scriptnew a} (X \lor
      \mu Y.\,((\Diamond_{\scriptnew b} Y)\lor\Diamond_a\top)))
    \end{equation*}
    consists of all closed bar strings that start with a prefix of bar
    names and eventually mention a plain name corresponding to one of
    these bar names. Under both local and global freshness semantics,
    this becomes the data language of all words mentioning some letter
    twice (which is not acceptable by deterministic or even unambiguous
    register automata~\cite{BojanczykEA14,SchroderEA17-full}). Notice
    that during the evaluation of the formula, the context can become
    unboundedly large, as it grows every time a bar name is read.
  \item The bar language of the similar formula
    \begin{equation*}
      \mu X.\,((\Diamond_{\scriptnew a} X) \lor(\Diamond_{\scriptnew a}
      \mu Y.\,((\Diamond_{\scriptnew b} Y)\lor\Diamond_a\epsilon)))
    \end{equation*}
    consists of all closed bar strings where all names except the last
    one are bound names. Under global freshness semantics, this becomes
    the data language where the last letter occurs \emph{precisely}
    twice in the word, and all other names only once. Under local
    freshness semantics, the induced data language is that of all
    words where the last letter occurs \emph{at least} twice, with no
    restrictions on the other letters.
  \item To illustrate both the mechanism of local freshness via
    $\alpha$-equivalence and, once again, the use of~$\top$, we
    consider the bar language of
    \begin{equation*}
      \Diamond_{\scriptnew a}\Diamond_{\scriptnew b}\mu X.\,
      ((\Diamond_{\scriptnew b} X)\lor\Diamond_a\Diamond_b\top),
    \end{equation*}
    which consists of all closed bar strings that start
    with a bar name $\newletter a$, at some later point contain a
    substring $\newletter bab$, and have only bar names distinct
    from~$\newletter a$ in between. Under global freshness semantics,
    this becomes the data language of all words where the first
    name~$a$ occurs a second time at the third position or later, all
    letters are mutually distinct until that second occurrence, and
    the letter preceding that occurrence is repeated immediately
    after. The local freshness semantics is similar but only requires
    the letters between the first and second occurrence of~$a$ to be
    distinct from~$a$ (rather than mutually distinct), that is, the
    substring $bab$ is required to contain precisely the second
    occurrence of~$a$.
  \end{enumerate}
\end{expl}

\section{Extended Regular Nondeterministic Nominal Automata}
\label{sec:eauto}

\noindent We proceed to introduce the nominal automaton model we use
in model checking, \emph{extended regular nondeterministic nominal
  automata} (ERNNAs), a generalized version of RNNAs~\cite{SchroderEA17}
that allow for limited alternation in the form of deadlocked
universal states. 

Nominal automata models~\cite{BojanczykEA14} generally feature nominal
sets of states; these are infinite as sets but typically required to
be orbit-finite. RNNAs are distinguished from other nominal automata
models (such as \emph{nondeterministic orbit-finite
  automata}~\cite{BojanczykEA14}) in that they impose finite branching
but feature \emph{name-binding} transitions; that is, they have
\emph{free} transitions $q\trans{a}q'$ for $a\in\names$ as well as
\emph{bound} transitions $q\trans{\scriptnew a}q'$, both consuming the
respective type of letter in the input bar string~$w$. Bound
transitions may be understood as reading fresh letters. RNNAs are a
nondeterministic model, i.e.~accept~$w$ if there \emph{exists} a run
on~$w$ ending in an accepting state. ERNNAs additionally feature
\emph{$\top$-states} that accept the current word even if it has not
been read completely, and thus behave like the formula~$\top$; these
states may be seen as universal states without outgoing
transitions. 
Formal definitions are as follows.


\begin{defn}
  An \emph{extended regular nondeterministic nominal automaton (ERNNA)}
  is a four-tuple $A=(Q,\mathord{\rightarrow},s,f)$ that consists of
\begin{itemize}
\item an orbit-finite nominal set $Q$ of \emph{states} (whose orbits
  we also refer to as the \emph{orbits of~$A$});
\item an \emph{initial state} $s\in Q$ such that $\supp(s)=\emptyset$;
\item an equivariant \emph{transition} relation
  $\mathord{\rightarrow}\subseteq Q\times(\barA\cup\{\epsilon\})\times Q$,
  with $(q,\sigma,q')\in\mathord{\to}$ denoted by $q\trans{\sigma}q'$; and
\item an equivariant \emph{acceptance} function
  $f\colon Q\to \{0,1,\top\}$
\end{itemize}
such that $\rightarrow$ is \emph{$\alpha$-invariant} (that is,
$q\trans{\scriptnew a}q'$ and
$\langle a\rangle q'=\langle b\rangle q''$ imply
$q\trans{\scriptnew b}q''$) and finitely branching up to
$\alpha$-equivalence (i.e.~for each~$q$, the sets
$\{(a,q')\mid q\stackrel{a}{\rightarrow}q'\}$,
$\{(\epsilon,q')\mid q\stackrel{\epsilon}{\rightarrow}q'\}$, and
$\{\langle a\rangle q'\mid q\stackrel{\scriptnew a}{\rightarrow}q'\}$
are finite).  
Whenever $f(q)=\top$, we require $\supp(q)=\emptyset$ and moreover
that~$q$ is a deadlock, i.e.~there are no transitions of the form
$q\trans{\sigma}q'$.  The \emph{degree} $\degree(A)$ of~$A$ is the maximal 
size of the support of a state in~$Q$ (in the translation of nominal
automata into register automata, the degree corresponds to the number of
registers~\cite{BojanczykEA14,SchroderEA17}). A state~$q$ is
\emph{accepting} if $f(q)=1$, \emph{non-accepting} if $f(q)=0$, and a
\emph{$\top$-state} if $f(q)=\top$.

We extend the transition relation to words~$w$ over~$\barA$, i.e.~to
bar strings, as usual; that is, $q\trans{w}q'$ iff there exist states
$q=q_0,q_1,\dots,q_k=q'$ and transitions
$q_i\trans{\barname_{i+1}}q_{i+1}$ for $i=0,\dots,k-1$ such that~$w$
is the concatenation $\barname_1\cdots\barname_k$, where~$\barname_i$
is regarded as a one-letter word if $\barname_i\in\barA$, and as the
empty word if $\barname_i=\epsilon$. We define
$\prelang(A)\subseteq\barstrings\times\{1,\top\}$ (for
\emph{prelanguage}) as
\begin{equation*}
  \prelang(A)=\{(w,f(q))\mid s\trans{w}q, f(q)\in\{1,\top\}\}.
\end{equation*}
The \emph{literal language}
$L_0(A)\subseteq\barstrings$ accepted by an ERNNA~$A$ then
is defined by
\begin{equation*}
  L_0(A)=\free{\emptyset}\cap\big(\{w\mid (w,1)\in\prelang(A)\}\;\cup
  \{vu\mid (v,\top)\in\prelang(A),u\in\barstrings\}\big);
\end{equation*}
that is, a closed bar string~$w$ is literally accepted if
either~$w$ has a run ending in an accepting state or a prefix of~$w$
has a run ending in a $\top$-state. (Again, extending the treatment to
bar strings with free names is technically unproblematic but heavier
on notation.)
The \emph{bar language accepted by~$A$} is the quotient
\[
  L_\alpha(A) = L_0(A)/\mathord{\equiv_\alpha}.
\]
We say that~$A$ is \emph{$\epsilon$-free} if~$A$ contains no
$\epsilon$-transitions. If~$A$ is $\epsilon$-free and contains no
$\top$-states, then $A$ is a \emph{regular nondeterministic nominal
  automaton (RNNA)}. 
\end{defn}
\begin{remark}
  The presence of $\top$-states makes ERNNAs strictly more expressive
  than \mbox{RNNAs} under bar language semantics (equivalently, under
  global freshness semantics). Indeed, the ERNNA consisting of a
  single $\top$-state accepts the universal bar language, which is not
  acceptable by an RNNA~\cite{SchroderEA17}. On the other hand, under
  local freshness semantics, an accepting state with a
  $\newletter a$-self-loop accepts the universal data language (in
  analogy to the expressibility of~$\top$ by
  $\mu X.\,\epsilon\lor\Diamond_{\scriptnew a}X$ in \muBar under local
  freshness semantics, cf.~\autoref{rem:ltl}), so RNNAs are as
  expressive as ERNNAs under local freshness semantics.
\end{remark}
\mysubsec{Name dropping} Like for RNNAs, the literal language accepted
by an ERNNA is not in general closed under
$\alpha$-equivalence. However, one can adapt the notion of name
dropping~\cite{SchroderEA17} to ERNNA: Roughly speaking, an ERNNA is
\emph{name-dropping} if all its transitions may nondeterministically
lose any number of names from the support of states (which corresponds
to losing register contents in a register automaton). The literal
language of a name-dropping ERNNA is closed under
$\alpha$-equivalence, and every ERNNA~$A$ can be transformed into a
name-dropping ERNNA~$\mathsf{nd}(A)$, preserving the bar
language. This transformation is central to the inclusion checking
algorithm (see additional remarks in \autoref{sec:mc}).

\mysubsec{Representing ERNNAs}

\noindent ERNNAs are, prima facie, infinite objects; we next discuss a
finite representation of ERNNAs as \emph{extended bar NFAs},
generalizing the representation of RNNAs as bar
NFAs~\cite{SchroderEA17}. The intuition behind extended bar NFAs is
similar to that of ERNNAs, except that extended bar NFAs are not
closed under name permutation. In particular, extended bar NFAs
feature deadlocked universal states:
\begin{defn}
  An \emph{extended bar NFA} $A=(Q,\to,s,f)$ consists of
  \begin{itemize}
  \item a finite set~$Q$ of \emph{states};
  \item a \emph{transition relation}
    $\mathord{\to}\subseteq(Q\times\barA\times Q)$, with
    $(q,\barname,q')\in\mathord{\to}$ denoted by
    $q\trans{\barname}q'$;
  \item an \emph{initial} state $s\in Q$; and
  \item an \emph{acceptance} function $f\colon Q\to \{0,1,\top\}$
  \end{itemize}
  such that whenever $f(q)=\top$, then~$q$ is a deadlock, i.e.~has no
  outgoing transitions.
  We 
  extend the transition relation to words over~$\barA$ (including the
  empty word) as usual. Similarly as for ERNNAs, we define
  $\prelang(A)\subseteq\barstrings\times\{1,\top\}$ by
  \begin{equation*}
    \prelang(A)=\{(w,f(q))\mid s\trans{w}q, f(q)\in\{1,\top\}\}.
  \end{equation*}
  The \emph{literal language}
  $L_0(A)$ accepted by~$A$ is
  \begin{equation*}
    L_0(A)=  \{w\mid (w,1)\in\prelang(A)\}
    \cup\{vu\mid (v,\top)\in\prelang(A),u\in\barstrings\}.
  \end{equation*}
  The \emph{bar language} of~$A$ is then defined as the quotient
  $L_\alpha(A)=L_0(A)/\mathord{\equiv_\alpha}$. For ease of
  presentation, we only consider the case where $L_\alpha(A)$ is
  closed, which is easily checked syntactically (no~$a$ may be reached
  in~$A$ without passing~$\newletter a$). We generally write $(A,q)$
  for the extended bar NFA that arises by making $q\in Q$ the initial
  state of~$A$, dropping however the requirement that the bar language
  accepted by $(A,q)$ is
  closed. 
  The set $\FN(A,q)$ of \emph{free names} of $q\in Q$ is then
  $\FN(A,q)=\bigcup_{(w,b)\in\prelang(A,q)}\FN(w)$. Slightly
  sharpening the original definition~\cite{SchroderEA17}, we take the
  \emph{degree} of~$A$ to be $\degree(A) := \max_{q\in Q}|\FN(A,q)|$.
\end{defn}
\begin{theorem}\label{thm:barnfas}
  ERNNAs and extended bar NFAs accept the same bar languages; that is:
  \begin{enumerate}
  \item For a given extended bar NFA with~$n$ states and degree~$k$,
    there exists a name-dropping ERNNA of degree~$k$ with~$n\cdot 2^k$
    orbits that accepts the same bar language.
  \item For a given ERNNA with~$n$ orbits and degree~$k$, there exists
    an extended bar NFA of degree~$k$ with $n\cdot k!$ states that
    accepts the same bar language.
  \end{enumerate}
\end{theorem}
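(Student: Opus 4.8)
The plan is to prove the two directions by explicit constructions that refine the RNNA/bar-NFA correspondence of~\cite{SchroderEA17}, after first observing that the new $\top$-states are entirely inert: a $\top$-state is a deadlock with empty support, so it survives both translations verbatim (each $\top$-orbit of an ERNNA is a single $\top$-state of the extended bar NFA and conversely, each contributing equally to $L_0$ through the prefix-closure clause $\{vu\mid(v,\top)\in\prelang(A)\}$). It therefore suffices to track the accepting and non-accepting states, and in both directions the correctness argument reduces to a run-preserving correspondence between the literal languages, after which one passes to the quotient by $\equiv_\alpha$.

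\emph{Direction 1 (extended bar NFA $\to$ name-dropping ERNNA).} Given $A=(Q,\to,s,f)$ with $n$ states and degree $k$, I would build an ERNNA $A'$ whose states are pairs $(q,\tau)$ with $q\in Q$ and $\tau\colon S\to\names$ an injective map defined on some subset $S\subseteq\FN(A,q)$, with $G$ acting by $\pi\cdot(q,\tau)=(q,\pi\circ\tau)$, so that $\supp(q,\tau)=\tau[S]$ has size at most $k$. Since $G$ acts transitively on the injections with a fixed finite domain, the orbit of $(q,\tau)$ depends only on $q$ and $S=\dom\tau$; hence $A'$ has $\sum_{q\in Q}2^{|\FN(A,q)|}\le n\cdot 2^k$ orbits and degree $k$. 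The transition relation lifts that of $A$: a free transition $q\trans{a}q'$ becomes $(q,\tau)\trans{\tau(a)}(q',\tau')$ whenever $a\in\dom\tau$ and $\tau'$ is a restriction of $\tau$ to a subset of $\FN(A,q')$, while a bound transition $q\trans{\scriptnew a}q'$ becomes $(q,\tau)\trans{\scriptnew b}(q',\tau')$ for fresh $b$, with $\tau'$ extending $\tau$ by $a\mapsto b$ and then optionally dropping names; the restriction/dropping clauses are precisely what makes $A'$ name-dropping. Acceptance and initial state are inherited via $f'(q,\tau)=f(q)$ and $(s,\emptyset)$, the latter having empty support since $\FN(A,s)=\emptyset$.

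\emph{Direction 2 (ERNNA $\to$ extended bar NFA).} Fix a set $C\subseteq\names$ of $k$ canonical names. As the bar-NFA state set I take all states of $A$ whose support is contained in $C$; for each orbit there are at most as many such states as there are injections of its (size-$\le k$) support into $C$, i.e.\ at most $k!$, giving $\le n\cdot k!$ states of degree $k$, and this set is closed under the permutations of $C$. The bar-NFA transitions are inherited from the nominal ones restricted to these states, where for a bound transition one uses $\alpha$-invariance to rename the bound letter to a name in $C$. Acceptance is inherited. Finite branching of the resulting relation follows from the finite branching of $A$ up to $\alpha$-equivalence.

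\emph{Main obstacle and verification.} The orbit- and state-counting is routine; the delicate part is the interaction of bound transitions with name dropping and with the canonical naming. In Direction~1 one must verify that the dropping clauses neither enlarge nor shrink the accepted bar language and that $A'$ is genuinely name-dropping, so that $L_0(A')$ is closed under $\equiv_\alpha$. In Direction~2 the subtle point is that an inherited bound transition must land inside $C$: I expect this to be the hardest step, as it requires that whenever a genuinely fresh letter persists in the target support, the degree bound $k=|C|$ forces some previous name to have been dropped, freeing a slot in $C$, while transitions whose bound name is not retained are handled purely by $\alpha$-renaming via \autoref{def:abstraction}. In both directions the core of the correctness proof is a lemma asserting that $s\trans{w}q$ holds in one model iff $s'\trans{w'}q'$ holds in the other for some $w'\equiv_\alpha w$ and a matching state, proved by induction on $|w|$ with the bound-transition step discharged using $\alpha$-invariance and \autoref{def:abstraction}; quotienting by $\equiv_\alpha$ then yields $L_\alpha(A')=L_\alpha(A)$ on both sides.
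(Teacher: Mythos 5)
Your Direction~1 is, up to presentation, exactly the paper's construction: the paper takes states $(q,\pi\Fix(N_q))$ with $N_q=\FN(A,q)$ and then applies the name-dropping construction, producing states $(q,\pi\Fix(N))$ with $N\subseteq N_q$, and these are in natural bijection with your pairs $(q,\tau)$ where $\tau$ is an injection defined on a subset $S\subseteq\FN(A,q)$ (the paper itself notes that left cosets of $\Fix(N_q)$ correspond to injective maps $N_q\to\names$); the orbit count $\sum_q 2^{|\FN(A,q)|}\le n\cdot 2^k$ is the same. Direction~2, however, contains a genuine gap. You claim that bound transitions whose bound name is \emph{not} retained in the target ``are handled purely by $\alpha$-renaming'' into $C$. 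This fails precisely when the target's support exhausts $C$: if $q\trans{\scriptnew a}q'$ with $a\notin\supp(q')$ and $\supp(q')=C$ (perfectly possible, since your slot-freeing argument applies only when the bound name persists), then $\supp(\langle a\rangle q')=\supp(q')\setminus\{a\}=C$, so every representative $(b,q'')$ of this abstraction has $b\notin C$ and $q''=q'$ --- there is \emph{no} $\alpha$-variant of the transition whose bound label lies in $C$, and the transition vanishes from your automaton. Concretely, take the degree-$1$ RNNA $s\trans{\scriptnew a}q(a)\trans{\scriptnew b}r(a)\trans{a}t$ with $t$ accepting and the name $b$ dropped at the middle transition; it accepts the bar language $\newletter a\newletter b\,a$. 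With $C=\{c\}$ your state set is $\{s,q(c),r(c),t\}$, and since $\langle b\rangle r(c)$ has support $\{c\}$ it cannot be written $\langle c\rangle r''$, so $q(c)$ has no outgoing transition at all in your bar NFA, which then accepts the empty bar language.

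This missing case is exactly why the paper's construction reserves an additional name $\ast\in\names\setminus\names_0$ and admits bound labels $\newletter a$ for $a\in\names_0\cup\{\ast\}$: the reserve name only ever occurs bound, so under the sharpened degree notion used for extended bar NFAs, $\degree(A):=\max_{q}|\FN(A,q)|$, the degree bound $k$ survives (under the older transition-label count for bar NFAs one gets $k+1$, cf.\ \autoref{thm:RNNAbarNFA}); adding this one reserve label to your construction repairs it. A smaller omission: ERNNAs may contain $\epsilon$-transitions while extended bar NFAs by definition do not, so your ``transitions inherited'' clause needs a preliminary $\epsilon$-elimination step; this is harmless because supports do not grow along $\epsilon$-transitions (\autoref{lem:altauto}), so the finite-branching conditions are preserved, but it should be said.
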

\noindent The key algorithmic task on ERNNAs is inclusion checking; we
generalize the inclusion algorithm for RNNAs~\cite{SchroderEA17} to
obtain
\begin{theorem}\label{thm:incl}
  Given a bar NFA~$A_1$ 
  and an extended bar NFA~$A_2$, 
  the inclusion $L_\alpha (A_1)\subseteq L_\alpha(A_2)$
  can be checked using space polynomial in the number of orbits of
  $A_1$ and $A_2$, and exponential in $\degree(A_1)$ and
  $\degree(A_2)$.  The same holds under local freshness
  semantics. 
\end{theorem}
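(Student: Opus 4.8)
The plan is to reduce inclusion checking to nonemptiness of a suitable product/complement construction, following the RNNA inclusion algorithm of \cite{SchroderEA17} but accounting for the two new features of $A_2$: $\top$-states and (via $\top$) a genuinely alternating flavour of acceptance. The overall strategy is: complement $A_2$ on the fly while running $A_1$, and check that no accepting run of the product survives; the space bound then falls out of bounding the size of the reachable product configurations.

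First I would recall that $L_\alpha(A_1)\subseteq L_\alpha(A_2)$ iff there is no bar string $w$ (up to $\equiv_\alpha$) with $[w]_\alpha\in L_\alpha(A_1)\setminus L_\alpha(A_2)$. Since literal languages are not in general $\alpha$-closed, the first step is to pass to name-dropping form for $A_2$ (via the $\mathsf{nd}$ construction sketched under ``Name dropping''), so that $L_0(\mathsf{nd}(A_2))$ is closed under $\equiv_\alpha$ and $w\notin L_\alpha(A_2)$ can be witnessed literally rather than only up to renaming. Next I would set up a subset-style powerset construction on the $A_2$ side: a configuration tracks, alongside the current $A_1$-state, the finite set of $A_2$-states reachable on the bar string read so far. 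The subtlety introduced by $\top$-states is that once $A_2$ enters a $\top$-state the remaining word is accepted unconditionally; hence a product configuration should be declared losing (for the refuter) as soon as any tracked $A_2$-state is a $\top$-state, and a witness to non-inclusion must avoid ever producing a $\top$-state in the tracked set while $A_1$ runs. Because $A_2$'s states carry nonempty support, the powerset must be taken up to $\alpha$-equivalence and bookkeeping of free names, exactly as in the RNNA case; each tracked $A_2$-state contributes at most $\degree(A_2)$ free names.

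The space analysis is then the heart of the bound. A product configuration consists of one $A_1$-state together with a finite set of $A_2$-states, each described up to renaming by its orbit plus an assignment of its free names to the names currently ``live'' in the shared name context. The number of live names at any point is bounded by $\degree(A_1)+\degree(A_2)$, since only free names of the current $A_1$- and $A_2$-states matter; the number of distinct $A_2$-states tracked is at most the number of orbits of $A_2$ times the number of name-assignments into the live context, which is polynomial in the number of orbits but exponential in $\degree(A_2)$. Storing one such configuration therefore takes space polynomial in the number of orbits of $A_1$ and $A_2$ and exponential in $\degree(A_1)$ and $\degree(A_2)$. A nondeterministic reachability search over these configurations looks for a configuration whose $A_1$-component is accepting (witnessing $[w]_\alpha\in L_\alpha(A_1)$) but whose $A_2$-set contains neither an accepting nor a $\top$-state (witnessing $[w]_\alpha\notin L_\alpha(A_2)$); the existence of such a reachable configuration is exactly the failure of inclusion. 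By Savitch's theorem this nondeterministic search runs in the claimed deterministic space.

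The main obstacle I expect is the interaction of $\alpha$-renaming with the powerset on the $A_2$ side: because bound transitions introduce fresh names and the tracked $A_2$-states may disagree on which names they wish to bind, one must choose a canonical shared context and simultaneously $\alpha$-rename all tracked states and the $A_1$-state into it at each bound step, while ensuring that name dropping on $A_2$ legitimately closes the set of $\alpha$-variants so that no witness is spuriously lost. Getting the accounting right so that the live-name context never exceeds $\degree(A_1)+\degree(A_2)$ — rather than growing with the length of $w$ — is the crux, and is precisely what keeps the exponential confined to the degrees and the polynomial tied to the orbit count. Finally, the local-freshness claim follows because $\mathsf{nd}$ and the product construction commute with the $D$-operator in the relevant sense (cf.\ \autoref{rem:local-global}); I would verify that a witness $w$ to bar-language non-inclusion yields a data-word witness to $D$-non-inclusion and conversely, so the same configuration search decides inclusion under $D$ with no change to the space bound.
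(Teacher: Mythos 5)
Your proposal is correct and essentially reproduces the paper's proof: name-drop $A_2$ so that $L_0(\mathsf{nd}(A_2))$ is closed under $\alpha$-equivalence, nondeterministically guess a closed witness word together with an accepting run of $A_1$ while maintaining the set of $\mathsf{nd}(A_2)$-states reachable on the prefix read so far, bound each configuration polynomially in the orbit counts and exponentially in $\degree(A_1)$ and $\degree(A_2)$, and conclude via Savitch's theorem. The only real divergence is the handling of $\top$-states: you declare the search blocked as soon as a $\top$-state enters the tracked set, whereas the paper retains tracked $\top$-states in the set (removing them only when a plain name outside the set $S$ of hitherto-bound names is read) and tests at the end that no tracked state $q'$ has $f(q')\in\{1,\top\}$ --- the two are equivalent for the closed witnesses to which the paper restricts.
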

\section{Equivalence of $\muBar$ and ERNNA}
\label{sec:muBar-ERNNA}

\noindent Our model checking algorithm will be based on translation of
closed formulae into~\mbox{ERNNAs}, in what amounts to a tableau construction
that follows a similar spirit as the standard automata-theoretic
translation of LTL, but requires a special treatment of $\Box$-formulae
and~$\negepsilon$, and moreover uses nondeterminism to bound the
number of free names in automata states (which may be thought of as
the number of registers) by guessing certain names, as explained in
the following example.
\begin{example}\label{expl:de-alt}
  Consider the formulae
  $\phi(b)=\mu Y.\,(\Box_b\bot\wedge\Box_{\scriptnew{c}} Y)$ and
  $\psi=\mu X.\,(\Box_{\scriptnew{a}} X \wedge \Box_{\scriptnew
    b}\phi(b))$. The formula~$\phi(b)$ states that the first plain
  name that occurs is not a free occurrence of~$b$, and~$\psi$ thus
  states that none of the bar names have a free occurrence later on.
  When evaluating~$\psi$ over a bar string
  $w=\newletter a_1 \newletter a_2\ldots \newletter a_n a_i v$
  consisting of~$n$ bar names~$\newletter a_i$ followed by the plain
  name $a_i$ ($1\leq i\leq n$) and a remaining bar string~$v$ (so~$w$
  does not satisfy~$\psi$), one eventually has to evaluate all the
  formulae $\phi(a_1),\dots,\phi(a_n)$ over the bar string~$a_i v$.
  Thus, the number of copies of formulae that a naively constructed
  ERRNA for~$\psi$ needs to keep track of can in principle grow
  indefinitely. At a first glance, this seems to prohibit a
  translation of formulae into orbit-finite automata.  However, we
  observe that when the letter $a_i$ is read, all $\phi(a_j)$ for
  $i\neq j$ immediately evaluate to~$\top$ (since the conjuncts
  $\Box_{a_j}\bot$ and $\Box_{\scriptnew c}\phi(a_j)$ of their
  fixpoint unfolding both hold vacuously), and only the evaluation of
  $\phi(a_i)$ becomes relevant (since the argument of $\Box_{a_i}\bot$
  is actually evaluated).  In fact, it is possible to let the ERRNA
  for $\psi$ nondeterministically guess the first plain name $a_i$
  that occurs in the input bar string. Then it suffices to let the
  ERNNA keep track of $\phi(a_i)$ since as discussed, all other copies
  of $\phi(b)$ become irrelevant. 
\end{example}
The idea from the previous example can be generalized to work
for all formulae. 
To this end we introduce a recursive manipulation of formulae that uses annotations
to explicitly restrict the support of formulae and to guess and
enforce so-called distinguishing letters.  When constructing an
ERNNA from a formula, we
will use such manipulated formulae to avoid the problem described in
\autoref{expl:de-alt} by bounding the number of formulae that the
constructed ERNNA has to track.

\begin{defn}\label{def:restriction}
  Fix a marker element $*\notin\names$ (indicating absence of a
  name). Let $\phi$ be a formula, let $B$ and $C$ be sets of letters
  such that $B\subseteq C$, and let $a\in(\names
  \setminus B)\cup\{*\}$.
  For $n\in\mathbb{N}$, we define $\phi^{B}_C(a)_n$ recursively (as
  termination measure of the recursive definition we use tuples
  $(|n|,\mathsf{u}(\phi),|\phi|)$, ordered by lexicographic ordering,
  where $\mathsf{u}(\phi)$ denotes the number of unguarded fixpoint
  operators in~$\phi$) by putting $\phi^{B}_C(a)_0=\top$ and, for
  $n>0$,
  \begin{align*}
\epsilon^{B}_C(a)_n &= \epsilon & \negepsilon^{B}_C(a)_n &= \negepsilon \\
(\psi\wedge\chi)^{B}_C(a)_n &= \psi^{B}_C(a)_n\wedge \chi^{B}_C(a)_n & 
(\psi\vee\chi)^{B}_C(a)_n &= \psi^{B}_C(a)_n\vee \chi^{B}_C(a)_n\\
(\Diamond_{\scriptnew b}\psi)^{B}_C(a)_n &= \Diamond_{\scriptnew b}
(\psi^{B\cup\{b\}}_{C\cup\{b\}}(a)_{n-1}) &
(\Box_{\scriptnew b}\psi)^{B}_C(a)_n &= \Box_{\scriptnew b}
(\psi^{B\cup\{b\}}_{C\cup\{b\}}(a)_{n-1})\\
(\mu X.\,\psi)^{B}_C(a)_n &= 
(\psi[\mu X.\,\psi/X])^{B}_C(a)_n
\end{align*}
and
\begin{align*}
(\Diamond_b\psi)^{B}_C(a)_n &= \begin{cases}
\bot & b\notin C\\
\Diamond_b (\psi^{B}_C(a)_{n-1}) & b\in C, b\in B\\
\Diamond_b (\psi^{\emptyset}_{\FN(\psi)}(*)_{n-1}) & b\in C, b\notin B 
\end{cases} \\
(\Box_b\psi)^{B}_C(a)_n &= \begin{cases}
\epsilon\vee\Diamond_{\scriptnew c}\top\vee\bigvee_{d\in B\cup\{a\}}\Diamond_d\top & b\notin C, a\neq *\\
\Box_b(\psi^\emptyset_{\FN(\psi)}(*)_{n-1}) & b\notin C, a=*\\
\Box_b (\psi^{B}_C(a)_{n-1}) & b\in C, b\in B\\
\chi(b)^{B}_C(a)_{n-1} & b\in C,b\notin B 
\end{cases}
\end{align*}
where
\begin{align*}
\chi(b)^{B}_C(a)_{n-1}=
\begin{cases}
\Box_b(\psi^\emptyset_{\FN(\psi)}(*)_{n-1}) & a=*\\
\epsilon\vee\Diamond_{\scriptnew c}\top\vee \bigvee_{d\in B}\Diamond_d\top\vee\Diamond_{a} (\psi^\emptyset_{\FN(\psi)}(*)_{n-1}) & a=b\\
\epsilon\vee\Diamond_{\scriptnew c}\top\vee\bigvee_{d\in B\cup\{a\}}\Diamond_d\top & *\neq a\neq b.
\end{cases}
\end{align*}
\end{defn}
During this process, fixpoint formulae are unfolded before replacing
any free modalities within their arguments; by guardedness of fixpoint
variables, this happens at most~$n$ times. (We intend the number~$n$
as as a strict upper bound on the length of bar strings over which
$\phi^{B}_C(a)_n$ is meant to be evaluated;
cf.~\autoref{lemma:guess-free}.)  The process replaces just the first
freely occurring boxes whose index is in $C$ but not in $B$; hence,
modal operators whose index comes from~$B$ are left unchanged.
Intuitively, $\phi^{B}_C(a)_n$ is a formula that behaves like the
restriction of $\phi$ to the support $C$ on bar strings $w$ such that
$|w|< n$ and in which $a$ is the first free name that is not contained
in $B$ (if any such name occurs in $w$; in this case we refer to the
letter $a$ as \emph{distinguishing letter}). Hence $\phi^{B}_C(a)_n$
is like the restriction of $\phi$ to support~$C$, assuming that the
distinguishing letter is~$a$. Formally:

\begin{lemma}\label{lemma:guess-free}
  Let $B$ and $C$ be sets of names such that $B\subseteq C$, let
  $a\in (\names\setminus B)\cup\{*\}$, let~$\phi$ be a formula, and let $S$
  be a context. Let $v$ be a bar string such that if $a\neq *$, then
  each letter that has a free occurrence in $v$ before the first free
  occurrence of $a$ in $v$ is contained in~$B$.  Also, suppose that if
  there is some freely occurring letter in~$v$ that is not contained
  in~$B$ and the first such letter~$d$ is in~$\mathsf{FN}(\phi)$, then
  $d\in C$.  Under these assumptions, we have
  \begin{align*}
    S,v\models\phi\iff
    S,v\models \phi^{B}_C(a)_n
    \qquad
    \text{for all $n$ such that $|v|< n$}.
  \end{align*}
\end{lemma}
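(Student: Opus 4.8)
The plan is to prove the biconditional by well-founded induction on the same termination measure $(n,\mathsf{u}(\phi),|\phi|)$ (ordered lexicographically) that underlies the recursive definition of $\phi^{B}_C(a)_n$ in \autoref{def:restriction}, quantifying over all $v$, $S$, $B\subseteq C$ and $a\in(\names\setminus B)\cup\{*\}$ satisfying the stated hypotheses with $|v|<n$. The two hypotheses on $v$ -- that every free letter preceding the first free occurrence of the distinguishing letter $a$ lies in $B$, and that the first free letter outside $B$, if it belongs to $\FN(\phi)$, lies in $C$ -- are exactly the invariants I expect to carry through the recursion. The cases $\epsilon$ and $\negepsilon$ are immediate, since the manipulation leaves them unchanged, and the Boolean cases follow directly from the induction hypothesis applied to the immediate subformulae (same $n$ and $v$, smaller $|\phi|$). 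The fixpoint case uses the semantic unfolding clause $S,v\models\mu X.\,\psi\iff S,v\models\psi[\mu X.\,\psi/X]$ together with the induction hypothesis for $\psi[\mu X.\,\psi/X]$; here $\FN$ is preserved by unfolding, so the hypotheses transfer verbatim, and guardedness ensures $\mathsf{u}(\psi[\mu X.\,\psi/X])<\mathsf{u}(\mu X.\,\psi)$, so the measure strictly decreases while $n$ stays fixed.

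For the bound modalities, I would first invoke \autoref{lem:alpheq} to choose the bound name $b$ fresh (in particular $b\neq a$ and $b\notin C$), so that $B\cup\{b\}\subseteq C\cup\{b\}$ and $a\in(\names\setminus(B\cup\{b\}))\cup\{*\}$ remain well-formed. Reading $\newletter b$ shortens $v$ by one letter (hence $|v'|<n-1$) and advances the context to $S\cup\{b\}$; since $b$ is added to $B$, it cannot be the first free letter outside $B$, so the distinguishing letter and both invariants survive the step, and the induction hypothesis at measure $(n-1,\ldots)$ closes the case for both $\Diamond_{\scriptnew b}$ and $\Box_{\scriptnew b}$. The free diamond $\Diamond_b\psi$ splits along the definition: for $b\in C$ the argument is as for bound modalities (keeping $B,C,a$ if $b\in B$, and resetting to $\emptyset,\FN(\psi),*$ if $b\notin B$, the latter trivially satisfying the invariants); for $b\notin C$ I would use the second hypothesis to show $v$ cannot start with the plain name $b$: if it did, $b$ would be the first free letter, it would lie outside $B$ (as $b\notin C\supseteq B$) and in $\FN(\Diamond_b\psi)$, forcing $b\in C$ -- a contradiction -- so $\Diamond_b\psi$ is false, matching $\bot$.

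The genuinely delicate case, and the one I expect to be the main obstacle, is the free box $\Box_b\psi$, which I would analyse exactly along the four branches of the definition. The pattern in the ``collapsing'' branches ($b\notin C$ with $a\neq *$; and $b\in C\setminus B$ with $a\neq b$) is to show simultaneously that (i) $v$ cannot begin with the plain name $b$, so $\Box_b\psi$ holds \emph{vacuously}, and (ii) the replacement disjunction $\epsilon\vee\Diamond_{\scriptnew c}\top\vee\bigvee_{d\in B\cup\{a\}}\Diamond_d\top$ is nonetheless \emph{true}. Claim~(i) follows from the distinguishing-letter hypotheses (a leading plain $b\notin B$ would be the first free letter outside $B$, forcing $b=a$, which is excluded in these branches, or forcing $b\in C$, also excluded); claim~(ii) follows because the same hypotheses prevent $v$ from beginning with any plain name outside $B\cup\{a\}$, so whenever $v\neq\epsilon$ and does not start with a bar name it must start with some $d\in B\cup\{a\}$. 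In the branch $b\in C\setminus B$ with $a=b$, I would instead split on whether $v$ starts with $a$: if it does, every disjunct except $\Diamond_a(\psi^{\emptyset}_{\FN(\psi)}(*)_{n-1})$ is false and the induction hypothesis (reset annotation, $|v'|<n-1$) equates that disjunct with $S,v'\models\psi$, matching $S,v\models\Box_a\psi$; if it does not, the box is vacuously true and the remaining disjunction is forced true exactly as before. The branches with $a=*$ keep $\Box_b$ and merely reset the continuation's annotation to $\emptyset,\FN(\psi),*$, which again satisfies the invariants trivially, so the induction hypothesis applies. The crux throughout is thus the precise bookkeeping of the two invariants on $v$ under the distinguishing-letter discipline, ensuring that in every collapsing branch the vacuous truth of the box and the truth of its replacement formula coincide.
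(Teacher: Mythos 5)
Your plan is correct and follows essentially the same route as the paper's proof: induction along the recursive definition of $\phi^{B}_C(a)_n$ (your measure $(n,\mathsf{u}(\phi),|\phi|)$ is interchangeable with the paper's $(|v|,\mathsf{u}(\phi),|\phi|)$, since both decrease in lockstep through the modal cases), with the same case split along the branches of \autoref{def:restriction}, the same verification that the two invariants on $v$ transfer to the subword, and the same vacuity-plus-replacement-truth argument in the collapsing $\Box_b$ branches, including the $a=b$ subcase. Your only cosmetic deviations --- choosing the bound name fresh via \autoref{lem:alpheq} rather than working directly with the existential witness supplied by the semantics of $\Diamond_{\scriptnew{}}/\Box_{\scriptnew{}}$, and leaving the $b\in B\cap C$ box branch implicit as analogous to the diamond case --- match what the paper does in substance.
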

\begin{proof}[Proof sketch]
  Induction along the recursive definition of $\phi^{B}_C(a)_n$.
\end{proof}
Let $\phi$ be a formula, let $B$, $C$, and $D$ be sets of names such
that $B\subseteq C$, and let $a\in(\names\setminus B)\cup D\cup\{*\}$. We
put
\begin{align*}
 \phi^{B}_C(a)_n^D=\begin{cases}\bot & \text{if }a\in D\\
							\phi^{B}_C(a)_n & \text{if }a\notin D, 
 \end{cases}
\end{align*}
the intuition being that $\phi^{B}_C(a)_n^D$ encodes $\phi$
(restricted to support $C$) together with the guess that $a$ is the
distinguishing letter (and that all names freely occurring before $a$
are from the set $B$), where guessing a letter from the set $D$ is not
allowed. This rules out the situation that a letter from $D$ is used
to satisfy a distinguishing box formula in $\phi$. Using
\autoref{lemma:guess-free}, we obtain:

\begin{lemma}\label{lemma:name-restriction}
  Let $\pi,\pi'$ be permutations, let $\phi$ be a formula, let $w$ be
  a bar string, let $S$ be a context, and put
  $A:=\mathsf{FN}(\pi\cdot\phi)$, $A':=\mathsf{FN}(\pi'\cdot\phi)$ and
  $B:=A\cap A'$.  Furthermore, let $a$ be either the first freely
  occuring letter in $w$ that is not in $B$ if that letter exists and
  is in $A\cup A'$ (in which case we say that $a$ is the
  \emph{distinguishing letter} w.r.t $B$ and $w$), or $a=*$
  otherwise. Then $a\notin B$ and we have
  \begin{multline*}
    S,w\models 
    (\pi\cdot\phi)\wedge (\pi'\cdot\phi)\iff 
    \,
    S,w\models ((\pi\cdot\phi)^{B}_{A}(a)^{A'}_{|w|+1}\wedge(\pi'\cdot \phi)^\emptyset_B(a)^\emptyset_{|w|+1})\vee\\
     ((\pi'\cdot\phi)^{B}_{A'}(a)^A_{|w|+1}\wedge
    (\pi\cdot\phi)^\emptyset_B(a)^\emptyset_{|w|+1}).
  \end{multline*}
\end{lemma}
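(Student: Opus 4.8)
The plan is to prove the preliminary claim $a\notin B$, and then establish the biconditional by a case distinction on the distinguishing letter~$a$, reducing each annotated subformula on the right-hand side to the corresponding plain formula $\pi\cdot\phi$ or $\pi'\cdot\phi$ via \autoref{lemma:guess-free}. The claim $a\notin B$ is immediate: if $a=*$ then $a\notin\names\supseteq B$, and if $a\neq *$ then by construction $a$ is the first freely occurring letter of $w$ that is \emph{not} in $B$; in both cases $a\in(\names\setminus B)\cup\{*\}$, so all annotated formulae on the right are well-formed. For the case split, note that since $B=A\cap A'$, whenever $a\neq *$ we have $a\in A\cup A'$ but $a\notin A\cap A'$, so exactly one of $a\in A\setminus A'$ or $a\in A'\setminus A$ holds; together with $a=*$ this is exhaustive. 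In each case I would use the defining clause $\phi^{B}_C(a)^D_n=\bot$ for $a\in D$ (\autoref{def:restriction}) to collapse the disjunction: for $a\in A\setminus A'$ the second disjunct vanishes, as its left conjunct carries $D=A\ni a$; for $a\in A'\setminus A$ the first disjunct vanishes symmetrically; and for $a=*$ both $D$-guesses are vacuous, so both disjuncts survive.

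For the \emph{full-support} conjunct of the surviving disjunct, say $(\pi\cdot\phi)^{B}_{A}(a)_{|w|+1}$ in the case $a\in A\setminus A'$, \autoref{lemma:guess-free} applies directly: its hypotheses hold because $B\subseteq A$, because $a$ is the first free letter of $w$ outside $B$ (so every letter preceding its first free occurrence lies in $B$), and because if $a\in\FN(\pi\cdot\phi)=A$ then trivially $a\in C=A$. Hence this conjunct is equivalent on $w$ to $\pi\cdot\phi$. The case $a=*$ is uniform and clean: then the first hypothesis of \autoref{lemma:guess-free} is vacuous, so \emph{every} conjunct (including the support-restricted ones) reduces directly, and both disjuncts become $(\pi\cdot\phi)\wedge(\pi'\cdot\phi)$, the left-hand side.

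The main obstacle is the \emph{support-restricted} conjunct $(\pi'\cdot\phi)^{\emptyset}_{B}(a)_{|w|+1}$ (and its mirror image) in the cases $a\neq *$. Here the protected set is empty, whereas $w$ may contain free letters of $B$ before the first free occurrence of $a$, so \autoref{lemma:guess-free} does \emph{not} apply verbatim with distinguishing letter $a$: its first hypothesis would demand that \emph{no} free letter precede $a$, which fails. Concretely, for a free box $\Box_b$ with $b\in A'\setminus B$ the restriction with protected set $\emptyset$ produces the replacement $\epsilon\vee\Diamond_{\scriptnew c}\top\vee\Diamond_a\top$, omitting the disjuncts $\bigvee_{d\in B}\Diamond_d\top$ that would be needed to witness vacuous truth of $\Box_b$ when the current suffix starts with a $B$-letter.

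I plan to resolve this by exploiting that $a\notin A'=\FN(\pi'\cdot\phi)$ together with the fact that all free letters of $w$ before $a$ lie in $B$. The idea is to compare $(\pi'\cdot\phi)^{\emptyset}_{B}(a)$ with the variant $(\pi'\cdot\phi)^{B}_{B}(a)$ that protects all of~$B$; to the latter \autoref{lemma:guess-free} \emph{does} apply cleanly, since its first hypothesis now reads ``every free letter before $a$ lies in $B$'', which holds, and its second hypothesis is vacuous because $a\notin A'$. This yields $(\pi'\cdot\phi)^{B}_{B}(a)\iff\pi'\cdot\phi$ on $w$, so it remains to show that passing from protected set $B$ to $\emptyset$ changes no truth value along the run on~$w$. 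This is exactly where the hypothesis that $\pi\cdot\phi$ and $\pi'\cdot\phi$ are renamings of a \emph{single} formula $\phi$ becomes essential: the modal skeletons of the two conjuncts coincide, so every free box of $\pi'\cdot\phi$ with index outside $B$ is matched by a corresponding modality of the full-support conjunct $(\pi\cdot\phi)^{B}_{A}(a)$, and taking the conjunction rules out precisely the configurations in which the $\emptyset$-versus-$B$ discrepancy could flip the value. I expect the verification of this agreement, carried out by a simultaneous induction along \autoref{def:restriction} for the two conjuncts, to be the technically most delicate step of the proof.
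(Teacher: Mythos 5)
Your skeleton coincides with the paper's own proof: the paper likewise notes $a\notin B$, splits into the case $a=*$ versus $a$ a genuine distinguishing letter (w.l.o.g.\ $a\in A\setminus A'$), collapses the mirror disjunct to $\bot$ via the superscript rule $\phi^B_C(a)^D_n=\bot$ for $a\in D$, drops the harmless superscripts when $a\notin A'$ (resp.\ $a\notin A\cup A'$ when $a=*$), and discharges the full-support conjunct $(\pi\cdot\phi)^B_A(a)_{|w|+1}$ by \autoref{lemma:guess-free} exactly as you do. The divergence is at the support-restricted conjunct: the paper handles $(\pi'\cdot\phi)^\emptyset_B(a)_{|w|+1}$ by one more \emph{direct} appeal to \autoref{lemma:guess-free}, verifying only its second hypothesis (the first freely occurring letter $d$ of $w$ with $d\in A'$ satisfies $d\neq a$, hence $d\in B$) and never addressing the first hypothesis, which with protected set $\emptyset$ and $a\neq *$ demands that \emph{no} letter occur freely in $w$ before $a$. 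So the obstacle you flag is real, and it is an unacknowledged gap in the paper's own argument rather than an artifact of your reading.

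Your proposed repair, however, cannot be completed, and the reason is worth making concrete: conjoining the full-support conjunct does \emph{not} rule out the configurations in which the $\emptyset$-versus-$B$ discrepancy flips the value. Take $\phi=\Box_x\Box_y\epsilon$, $\pi=\id$, $\pi'=(y\,z)$, so that $A=\{x,y\}$, $A'=\{x,z\}$, $B=\{x\}$, and let $w=xy$ in context $S=\{x,y,z\}$, whence $a=y\in A\setminus A'$. The left-hand side holds ($\Box_y\epsilon$ is realized, $\Box_z\epsilon$ is vacuous), and $(\pi\cdot\phi)^{\{x\}}_{\{x,y\}}(y)_3=\Box_x(\epsilon\vee\Diamond_{\scriptnew c}\top\vee\Diamond_x\top\vee\Diamond_y\epsilon)$ holds on $w$ as well; but $(\pi'\cdot\phi)^{\emptyset}_{\{x\}}(y)_3=\epsilon\vee\Diamond_{\scriptnew c}\top\vee\Diamond_y\top$ fails on $w$, which begins with the free plain name $x\in B$, while the mirror disjunct is $\bot\wedge\cdots$ since $y\in A$. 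So the displayed biconditional fails as printed, and no simultaneous induction along \autoref{def:restriction} can establish it; the step you postponed as ``technically most delicate'' is in fact impossible for the statement as given. What your detour through $(\pi'\cdot\phi)^B_B(a)$ correctly reveals is the needed amendment: the restricted conjunct must carry an annotation to which \autoref{lemma:guess-free} actually applies, i.e.\ protected set $B$, or equivalently distinguishing letter $*$ in place of $a$ --- indeed $(\pi'\cdot\phi)^\emptyset_B(*)_{|w|+1}$ is equivalent to $\pi'\cdot\phi$ on $w$ by precisely the ``$d\in A'$ implies $d\in B$'' check that the paper performs, and the companion \autoref{lemm:rest-annot} manipulates $\emptyset$-protected copies annotated with $*$ in just this way. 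In short: your diagnosis of the gap is correct and sharper than the paper's own treatment, but your bridging argument fails, and the lemma must be repaired in its statement before either proof strategy can go through.
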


\noindent Relying on name restriction as in~\autoref{lemma:name-restriction},
we are able to translate formulae to ERRNAs.

\begin{theorem}\label{thm:mubar-ernna}
  For every closed formula~$\phi$ of size~$m$ and degree~$k$, there is
  an ERNNA~$A(\phi)$ with degree bounded exponentially in~$k$ and
  polynomially in~$m$, and with number of orbits bounded doubly
  exponentially in~$k$ and singly exponentially in~$m$,
  that accepts the bar language of~$\phi$,
  i.e.~$L_\alpha(A(\phi))=\sem{\phi}$.
\end{theorem}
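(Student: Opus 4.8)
The plan is to build a finite \emph{extended bar NFA} $\hat A(\phi)$ accepting $\sem{\phi}$ and then invoke \autoref{thm:barnfas}(1) to convert it into an ERNNA of the required shape. Since that theorem turns an extended bar NFA with $N$ states and degree $K$ into a name-dropping ERNNA of degree $K$ with $N\cdot 2^K$ orbits, it suffices to construct $\hat A(\phi)$ with degree $K$ polynomial in $m$ and exponential in $k$, and with $N$ singly exponential in $m$: then $2^K$ is doubly exponential in $k$ (and singly exponential in $m$), which is exactly the source of the claimed orbit bound. The states of $\hat A(\phi)$ are finite sets of formulae, read conjunctively as obligations on the remaining suffix of the input bar string, with initial state $\{\phi\}$; a run processes the word one symbol at a time, in the spirit of the automata-theoretic translation of LTL.

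The tableau mechanics I would use run as follows. Each state is first \emph{saturated}: disjunctions are resolved by nondeterministically choosing a disjunct (this is the existential branching of the nondeterministic automaton), conjunctions are split into their conjuncts, and top-level fixpoints $\mu X.\,\psi$ are unfolded to $\psi[\mu X.\,\psi/X]$; guardedness guarantees that only guarded $\barname$-modalities survive after finitely many unfoldings, so saturation terminates and yields a set of modal literals together with $\epsilon$/$\negepsilon$ flags. For each admissible first symbol $\barname\in\barA$, the $\barname$-successor collects the continuations $\psi$ of all matching obligations $\Diamond_\barname\psi$ and $\Box_\barname\psi$; a non-matching $\Box$-obligation is discharged vacuously, whereas a non-matching $\Diamond$-obligation makes the branch fail. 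Bound modalities $\Diamond_{\scriptnew a},\Box_{\scriptnew a}$ are treated by $\alpha$-renaming all relevant obligations uniformly so that a single bound name is read, as licensed by \autoref{lem:alpheq}. A saturated state accepts (reads $\epsilon$) precisely when it contains no $\Diamond$-obligation and no $\negepsilon$, while a state whose obligation set collapses to $\top$ becomes a deadlocked $\top$-state. The delicate points here are exactly those flagged in the introduction: $\negepsilon$ must be tracked so that it blocks $\epsilon$-acceptance, and a vacuously satisfiable $\Box$-formula must be turned into a $\top$-state rather than silently dropped, since under bar-language (global-freshness) semantics the $\top$-state is the only way to express the universal tail.

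The core difficulty, and the point where the construction departs from the LTL case, is the accumulation of renamed copies of subformulae illustrated in \autoref{expl:de-alt}: following the tableau naively, one subformula $\chi$ reappears under ever more distinct renamings $\pi\cdot\chi$ as bound names are read, so both the support (hence the degree) and the number of states would be unbounded. This is resolved by the name-restriction calculus of \autoref{def:restriction}: instead of a bare conjunction $\bigwedge_i \pi_i\cdot\chi$, a state stores restricted formulae $(\pi_i\cdot\chi)^{B}_{C}(a)^{D}_n$, whose support is confined to $C$ and which carry a guess of the distinguishing letter $a$. By \autoref{lemma:name-restriction} two copies can be merged with no loss, by guessing which copy the first not-yet-seen free name resolves and forcing all other copies to collapse to $\top$ (as when reading $a_i$ makes every $\phi(a_j)$ with $j\neq i$ immediately true), while \autoref{lemma:guess-free} certifies that restricting the support to $C$ and reading along the guessed distinguishing letter preserves satisfaction over words shorter than $n$. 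The main obstacle is therefore to establish boundedness and correctness \emph{simultaneously}: a counting argument must bound the free names of every reachable state by a quantity exponential in $k$ and polynomial in $m$ (giving the degree $K$) and bound the number of reachable states up to $\alpha$-equivalence by $N$ singly exponential in $m$, and this counting must be compatible with an induction on word length proving $L_\alpha(\hat A(\phi))=\sem{\phi}$ while the distinguishing-letter guesses are in force — which is precisely what \autoref{lemma:name-restriction} and \autoref{lemma:guess-free} are designed to support. Feeding the resulting bounds on $N$ and $K$ into \autoref{thm:barnfas}(1) then yields $A(\phi)$ with degree exponential in $k$ and polynomial in $m$ and with number of orbits doubly exponential in $k$ and singly exponential in $m$, as claimed.
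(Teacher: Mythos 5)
Your overall architecture is essentially the paper's: the paper likewise builds a tableau whose states carry annotated formulae $\pi\cdot(\psi^B_C(a))$ with distinguishing-letter guesses, and merges renamed copies via \autoref{def:restriction}, \autoref{lemma:guess-free} and \autoref{lemma:name-restriction}. The one structural difference is your detour through a finite extended bar NFA followed by \autoref{thm:barnfas}(1), where the paper constructs the ERNNA directly as a nominal set of macro-states and counts orbits. That detour is workable in principle (the converse direction of \autoref{thm:barnfas} shows a bar NFA of suitable size exists), but be aware that bar NFAs are not permutation-closed, so your state count must be over \emph{concrete} states rather than states ``up to $\alpha$-equivalence''; a renaming factor on the order of $K^{k}$ concrete instances per annotated formula ($K$ the degree of the automaton) enters, which still fits the parametrized bounds but does not come for free from your counting as stated.

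The genuine gap is your treatment of the critical $\Box$/$\negepsilon$ case. You propose that ``a vacuously satisfiable $\Box$-formula must be turned into a $\top$-state''. As literally described this is unsound: a $\top$-state accepts \emph{every} continuation, including those beginning with the very letter the box guards. For the obligation set $\{\Box_a\bot\}$ your automaton would accept $au$ for all $u$, although no word of the form $av$ satisfies $\Box_a\bot$. The correct move --- hardwired into the clauses of \autoref{def:restriction} --- is to first consume one letter that evades the box, i.e.\ to pass to $\epsilon\vee\Diamond_{\scriptnew c}\top\vee\bigvee_{d\in B\cup\{a\}}\Diamond_d\top$, and only after reading such a letter to enter the $\top$-state. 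The plain-name disjuncts $\Diamond_d\top$ are the real difficulty: since the accepted language must consist of closed bar strings, the evading plain letter $d$ must have been \emph{bound earlier in the word}, and a state of bounded support (equally, a bar-NFA state over a fixed name pool) cannot remember all previously bound names. The paper resolves this with the second component $b$ of its states $(\Gamma,b)$: a \emph{reserve name}, guessed nondeterministically at binding time in bound transitions from modal states and later consumed via an $\epsilon$-transition to $(\{\Diamond_b\top\},b)$, with \autoref{lem:guessfn} showing the guess can always be placed on the last binding of the needed name. Your sketch contains no counterpart of this mechanism, and without it the construction either loses soundness (immediate $\top$-states) or loses completeness (words in $\sem{\phi}$ whose boxes are evaded by a previously bound plain name outside the current state's support are never accepted). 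Everything else --- saturation, guardedness-based termination of unfolding, the distinguishing-letter merging and the size bookkeeping it supports --- matches the paper's proof in substance.
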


\noindent We sketch the construction of~$A(\phi)$. 
Let $\clos(\phi)$ denote the closure of $\phi$, defined in the standard way.
We put
\[
  \clo=\clos(\phi)\cup\{\Diamond_\barname\psi\mid \Box_\barname\psi\in\clos(\phi)\}
  \cup \{\epsilon,\bot\}\cup \{\Diamond_b\top\mid b\in\BN(\phi)\},
\]
noting $|\clo|\leq 4m$ (recall that $\bot$ and $\top$ abbreviate $\epsilon\wedge\negepsilon$
and $\epsilon\vee\negepsilon$, respectively).
Furthermore, we put%
\begin{align*}
\mathsf{formulae}&=\{\psi^{B}_C(a)\mid \psi\in\clo, B\subseteq \N(\phi),C\subseteq\FN(\psi),
a\in C\cup\{*\}\},
\end{align*}
noting that the cardinality of $\mathsf{formulae}$ is linear in~$m$
and exponential in~$k$; specifically,
$|\mathsf{formulae}|\leq |\clo|\cdot (|\N(\phi)|+1)\cdot
2^{2|\N(\phi)|}\leq 2^{2k}\cdot 4m(k+1)\leq
2^{2k}\cdot 5m^2$ since we have $|\FN(\psi)|\leq |\N(\phi)|$ for all
$\psi\in\clo$ and since $k\leq m$.
The set $\mathsf{formulae}$ contains formulae
$\psi\in\mathsf{Cl}$ that are annotated with a single name $a$ (or $*$) and
two sets $B$ and $C$ of names; we thus refer to elements of $\mathsf{formulae}$ as \emph{annotated formulae}. The annotation relates to the transformation of formulae according to \autoref{def:restriction} but is formally just a decoration; note in particular that actually carrying out the transformation in \autoref{def:restriction} additionally requires the word length~$n$. The annotation with~$a$ is used to
encode a guessed name (with $*$ denoting the situation that no name has been
guessed yet) which we call \emph{distinguishing letter}, 
while the set $C$ encodes the restriction of the support
of $\psi$ to $C$ and the set $B$ denotes the names that are allowed to occur freely
before the distinguishing letter does. These data will be used to bound the number
of copies of subformulae that can occur in nodes of the constructed tableau.
We construct an ERRNA $A(\phi)=(Q,\to,s,f)$ with carrier set
\begin{align*}
Q=\big\{\{(\{\pi_1\cdot\phi_1,\ldots,\pi_n\cdot\phi_n\},
a)
\mid &\,\pi_1,\ldots,\pi_{n}\in G,\\
&
\{\phi_1,\ldots,\phi_n\}\subseteq\mathsf{formulae}, a\in\names\cup\{*\}\big\}, 
\end{align*}
where the $\pi_i$ act on names as usual and we define
$\pi_{i}(*)=*$. The bound on the number of orbits follows since
each combination of a subset $\Phi$ of $\mathsf{formulae}$
and an equivalence relation on the set of free names
of $\Phi$ in union with $\{a\}$
gives rise to at most one orbit. We put
$s=(\{\phi^\emptyset_{\FN(\phi)}(*)\},*)$.  Given a state
$(\Gamma,b)\in Q$, formulae $\pi\cdot (\psi_C^B(a))\in \Gamma$ stand
for instances of $\psi$ in which the distinguishing letter is fixed to
be~$\pi\cdot a$, $\pi\cdot B$ is the set of free names that are
allowed to occur before $\pi\cdot a$ does, and the support of $\psi$
is restricted to $\pi\cdot C$.  The shape of a formula
$\pi\cdot (\psi_C^B(a))$ is just the shape of $\psi$; for brevity, we
omit the annotations with $C$, $B$, $a$ when they are not relevant.
To deal with box operators and $\negepsilon$, states also contain a
separate name component $b$ (which may be $*$) that denotes a guess of
the last relevant free name after which the evaluation of formulae
from $\Gamma$ will stop. A state $(\Gamma,b)\in Q$ is
\emph{propositional} if $\Gamma$ contains some formula of the shape
$\psi_1\vee\psi_2$, $\psi_1\wedge\psi_2$, or $\mu X.\,\psi_1$;
\emph{quasimodal} if $(\Gamma,b)$ is not propositional but~$\Gamma$
contains some formula of the form $\neg\epsilon$ or~$\Box_\barname\phi$;
and \emph{modal} if $(\Gamma,b)$ is neither propositional nor
quasimodal, i.e.~if all elements of~$\Gamma$ are either
of the shape  $\epsilon$ or $\Diamond_\barname\phi$. We define the acceptance function
$f\colon Q \to \{0,1,\top\}$ by $f(\emptyset,b)=\top$,
$f(\Gamma,b)=1$ if $\Gamma\neq \emptyset$ and all elements
of $\Gamma$ are of the shape $\epsilon$, and $f(\Gamma,b)=0$ for
all other states.

Transitions work roughly as follows. The component~$\Gamma$ of a state
$(\Gamma,b)$ plays the usual role: It records formulae that the
automaton requires the remaining input word to satisfy. To bound the
number of free names that have to be tracked, formulae are annotated
with guesses for distinguishing letters. The
transitions from propositional and modal states follow the
standard tableau rules; e.g.~given a propositional state
$q=(\Gamma\cup\{\psi\land\chi\},b)$, we have
$q\trans{\epsilon}(\Gamma\cup\{\psi,\chi\},b)$; given a modal state
$q=(\{\Diamond_a\psi_1,\dots,\Diamond_a\psi_n\},b)$, we have
$q\trans{a}(\{\psi_1,\dots,\psi_n\},b)$; and a modal state containing
$\Diamond_a\psi$ and $\Diamond_b\chi$ for $a\neq b$ is a deadlock (the
treatment of $\Diamond_{\scriptnew a}$ is more
involved). $\Box$-formulae and $\neg\epsilon$ are dealt with by
$\epsilon$-transitions from quasimodal states $(\Gamma,b)$ to modal
states. This process is straightforward if~$\Gamma$ contains some
formula $\Diamond_\barname\psi$. The critical case is the remaining one:
A formula $\Box_\barname\psi$ in~$\Gamma$ can be satisfied by in fact
satisfying~$\Diamond_\barname\psi$, by ending the word, or by reading either
a plain name~$c$ other than~$\barname$ or a bar name (if $\barname\in\names$),
or by reading any plain name (if $\barname\in\overline{\names}$).
This is where the second component~$b$ of states comes
in: The letter~$c$ must have previously appeared as a bar name;
$A(\phi)$ guesses when this happens (in bound transitions from modal
states), and records its guess in~$b$, with~$*$ representing the
situation that no guess has yet been made.

When constructing transitions in $A(\phi)$ as explained above, the set
of formulae to which a naively constructed transition leads may
contain two instances $\pi\cdot(\psi_C^B(a))$ and
$\pi'\cdot(\psi_C^B(a))$ of an annotated formula~$\psi$. In this
situation, the rest of the word has to satisfy both formulae, but the
set $Q$ can only contain one instance of $\psi_C^B(a)$.  Here we use
the above restriction technique and repeated application
of~\autoref{lemma:name-restriction} to ensure that only a single copy
needs to be kept.%

We also have a converse translation which goes via the equivalence of ERNNAs
and extended bar NFAs, and associates a fixpoint variable to each
state.
\begin{theorem}\label{lem:autformeq}
  For every ERNNA~$A$ there exists a formula~$\phi$ such that
  $\sem{\phi}=L_\alpha(A)$.
\end{theorem}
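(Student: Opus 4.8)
The plan is to go through the finite representation and read a system of guarded fixpoint equations off the automaton, one variable per state. By the second part of \autoref{thm:barnfas} the given ERNNA~$A$ is equivalent, as far as bar languages are concerned, to a finite extended bar NFA $B=(Q,\to,s,f)$ with $L_\alpha(B)=L_\alpha(A)$, so it suffices to build a formula~$\phi$ with $\sem{\phi}=L_\alpha(B)$. Since~$B$ has no $\epsilon$-transitions, every transition consumes a letter from~$\barA$, which is exactly what guarantees guardedness below. For each $q\in Q$ I introduce a fixpoint variable~$X_q$ and associate to~$q$ the formula
\begin{equation*}
  \Phi_q=\Theta_q\,\lor\!\!\bigvee_{q\trans{a}q'}\!\!\Diamond_a X_{q'}\,\lor\!\!\bigvee_{q\trans{\scriptnew a}q'}\!\!\Diamond_{\scriptnew a}X_{q'},
\end{equation*}
where the acceptance disjunct~$\Theta_q$ is~$\epsilon$ if $f(q)=1$, is~$\top$ if $f(q)=\top$ (in which case~$q$ is a deadlock, so both big disjunctions are empty and $\Phi_q=\top$), and is~$\bot$ if $f(q)=0$. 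Finiteness of~$B$ makes each disjunction finite, and every variable occurrence sits under a $\barname$-modality, so the system $\{X_q=\Phi_q\}_{q\in Q}$ is guarded.

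\textbf{Solving the system.} As $\muBar$ offers only single fixpoints, I eliminate the variables one at a time by the usual Bek\'ic/Gaussian procedure: pick a variable~$X_q$, replace it everywhere by $\mu X_q.\,\Phi_q$, and substitute into the remaining equations. Substitution into a guarded system keeps all variable occurrences under modalities, so guardedness is preserved at each step; the annotation of every bound variable with the free names of its defining formula is maintained by the convention of \autoref{rem:alpha}, and \autoref{lem:alpha-unfolding} ensures that unfolding respects $\alpha$-equivalence throughout. Eliminating all variables yields a guarded, variable-closed formula~$\phi_q$ for each~$q$; I set $\phi:=\phi_s$. Since $\supp(s)=\emptyset$ and $L_\alpha(B)$ is closed, the free-name bookkeeping gives $\FN(\phi)=\emptyset$.

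\textbf{Correctness.} Writing $(B,q)$ for~$B$ with initial state~$q$, I claim that the context-relative semantics of~$\phi_q$ equals $L_\alpha(B,q)$ for every~$q$; specializing to $q=s$ then gives $\sem{\phi}=L_\alpha(B)=L_\alpha(A)$. This I prove by induction on~$|w|$, showing for each~$q$ and each bar string~$w$ (over a context~$S$ containing~$\FN(\phi_q)$) that $S,w\models\phi_q$ iff $[w]_\alpha\in L_\alpha(B,q)$. Unfolding the guarded fixpoint once turns~$\phi_q$ into a disjunction mirroring~$\Phi_q$: the disjunct~$\Theta_q$ accounts for $w=\epsilon$, for accepting states, and for $\top$-states, while each modal disjunct $\Diamond_\barname\phi_{q'}$ corresponds to taking a first transition $q\trans{\barname}q'$ of a run and then reading the strictly shorter suffix, to which the induction hypothesis applies. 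The $\Diamond_{\scriptnew a}$-clause matches bound transitions up to $\alpha$-equivalence, using \autoref{lem:alpheq} to pass between literal and bar-language membership; both families live in the lattice of uniformly finitely supported bar languages, so guardedness (membership of~$w$ depending only on membership of shorter words) pins the solution down uniquely.

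\textbf{Main obstacle.} The delicate point is the name-binding modalities. A cyclic bound transition $q\trans{\scriptnew a}\cdots\trans{}q$ becomes a fixpoint whose unfolding re-reads~$\newletter a$, and the $\alpha$-renaming built into the semantics of $\Diamond_{\scriptnew a}$ must match the automaton reading a genuinely fresh letter on each traversal; moreover a single state may be reachable through different bar transitions, so the free-name annotations of the~$X_q$ and the binding scopes must be reconciled. This is precisely where the annotation of fixpoint variables with their free names and the stability of $\alpha$-equivalence under unfolding (\autoref{lem:alpha-unfolding}, \autoref{lem:alpheq}) do the work; checking clause-by-clause, up to $\alpha$-equivalence, that the transition-based recursion defining $L_\alpha(B,q)$ coincides with the modal and fixpoint semantics of~$\phi_q$ is the bulk of the argument.
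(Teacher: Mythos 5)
Your proposal matches the paper's proof in all essentials: the paper likewise passes to an extended bar NFA via \autoref{thm:barnfas}, associates a fixpoint variable~$X_q$ to each state, forms the guarded disjunction of $\Diamond_\barname$-successor formulae (with $\epsilon$ and $\top$ for accepting and $\top$-states), and proves both inclusions by induction on the length of the remaining word; its recursive definition of $\mathsf{form}(q,Q')$ with a visited set~$Q'$ is exactly your Bek\'ic/Gaussian elimination carried out on the fly. The one noteworthy divergence is that you keep the acceptance disjunct~$\Theta_q$ alongside the transition disjuncts, which cleanly covers accepting states that still have outgoing transitions, whereas the paper's case distinction (where $f(q)=1$ yields just~$\epsilon$) silently drops such transitions -- so on this point your version is the more careful one.
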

\noindent Notably, by \autoref{rem:negation}, the results of this
section imply
\begin{corollary}
  The class of closed bar languages definable by ERNNAs is closed
  under complement: for each ERNNA~$A$ there exists an ERNNA~$\bar A$
  such that $L_\alpha(\bar A)=\free{\emptyset}\setminus L_\alpha(A)$.
\end{corollary}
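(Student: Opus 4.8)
The plan is to prove the closure-under-complement corollary by combining the three preceding results: the logic-to-automata translation (\autoref{thm:mubar-ernna}), the automata-to-logic translation (\autoref{lem:autformeq}), and the closure of \muBar under complement (\autoref{rem:negation}). Given an ERNNA~$A$, I would first invoke \autoref{lem:autformeq} to obtain a formula~$\phi$ with $\sem{\phi}=L_\alpha(A)$. Since $L_\alpha(A)$ is assumed closed (the corollary speaks of closed bar languages), we may take~$\phi$ to be a closed formula with $\FN(\phi)=\emptyset$, so that $\sem{\phi}$ is defined and equals $L_\alpha(A)$.

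Next I would apply the complementation of the logic. By \autoref{rem:negation}, taking negation normal forms yields a formula $\neg\phi$ satisfying $S,w\models\neg\phi$ iff $S,w\not\models\phi$ for all contexts~$S$ and bar strings~$w$. Specializing to the outermost empty context, this gives the literal-language identity $\sem{\neg\phi}_0=\free{\emptyset}\setminus\sem{\phi}_0$. The key point I would then verify is that this set difference descends correctly to the quotient by $\equiv_\alpha$: because literal languages of formulae are closed under $\alpha$-equivalence (\autoref{lem:alpheq}, claim~\ref{lem:alpheq:1}), both $\sem{\phi}_0$ and $\sem{\neg\phi}_0$ are unions of $\equiv_\alpha$-classes, so passing to the quotient yields $\sem{\neg\phi}=\free{\emptyset}/{\equiv_\alpha}\setminus\sem{\phi}$, i.e.\ $\neg\phi$ denotes the complement of $L_\alpha(A)$ within the closed bar strings.

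Finally I would translate back: applying \autoref{thm:mubar-ernna} to the closed formula $\neg\phi$ produces an ERNNA $\bar A:=A(\neg\phi)$ with $L_\alpha(\bar A)=\sem{\neg\phi}$. Chaining the three identities gives
\[
  L_\alpha(\bar A)=\sem{\neg\phi}=\free{\emptyset}/{\equiv_\alpha}\setminus\sem{\phi}
  =\free{\emptyset}/{\equiv_\alpha}\setminus L_\alpha(A),
\]
which is exactly the claimed complement, once we read $\free{\emptyset}\setminus L_\alpha(A)$ as the complement taken within the closed bar strings.

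I expect the only genuine subtlety to be bookkeeping about \emph{where} the complement is taken: \autoref{rem:negation} complements within all bar strings over the relevant context, whereas the corollary complements within the closed bar languages $\free{\emptyset}$. The main obstacle is therefore not a deep argument but making precise that $\neg\phi$ is closed whenever $\phi$ is, and that restricting attention to closed bar strings commutes with complementation — both of which follow routinely from the definitions of $\FN$ on formulae, the construction of negation normal form, and the fact (recorded in \autoref{sec:databarl}) that a uniformly supported bar language is closed iff it consists of closed bar strings, so that complementation stays inside the closed fragment.
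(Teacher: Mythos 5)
Your proposal is correct and follows exactly the route the paper intends: it derives the corollary by chaining \autoref{lem:autformeq} (ERNNA to closed formula), \autoref{rem:negation} (negation normal form, so $S,w\models\neg\phi$ iff $S,w\not\models\phi$), and \autoref{thm:mubar-ernna} (closed formula back to ERNNA), which is precisely what the paper means when it says the corollary follows from \autoref{rem:negation} and the results of that section. Your extra care about the complement descending to the $\equiv_\alpha$-quotient via \autoref{lem:alpheq} and about reading $\free{\emptyset}\setminus L_\alpha(A)$ inside the closed fragment is a welcome tightening of bookkeeping the paper leaves implicit.
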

\noindent As mentioned previously, RNNAs are essentially equivalent to
session automata~\cite{SchroderEA17}, which are only closed under
\emph{resource-bounded} complement~\cite{BolligEA14} for some resource
bound~$k$; in our present terminology, this corresponds to complement
within the set of closed bar strings, modulo $\alpha$-equivalence,
that can be written with at most~$k$ different bound names. It is
maybe surprising that full complementation (of session automata or
RNNAs) is enabled by simply allowing $\top$-states.
\section{Reasoning for \muBar}
\label{sec:mc}

Using \autoref{thm:mubar-ernna}, we reduce reasoning problems for
$\muBar$ to ERNNAs:
\begin{defn}[Reasoning problems]
  A closed formula~$\phi$ is \emph{satisfiable} if
  $\sem{\phi}\neq\emptyset$, and \emph{valid} if
  $\sem{\phi}=\free{\emptyset}$. A formula~$\psi$
  \emph{refines}~$\phi$ if $\sem{\psi}\subseteq\sem{\phi}$. An
  RNNA~$A$ \emph{satisfies~$\phi$} (written $A\models\phi$) if
  $L_\alpha(A)\subseteq\sem{\phi}$. The \emph{model checking problem}
  is to decide whether $A\models\phi$. We sometimes emphasize that
  these problems refer to \emph{bar languages} or, equivalently,
  \emph{global freshness}. The corresponding problems for \emph{local
    freshness} arise by applying the~$D$ operator
  (\autoref{sec:databarl}) to all bar languages involved; e.g.~$A$
  \emph{satisfies~$\phi$ under local freshness} if
  $D(L_\alpha(A))\subseteq D(\sem{\phi})$.
\end{defn}
 The complexity of \muBar reasoning problems, obtained using
in particular \autoref{thm:mubar-ernna} and \autoref{thm:incl}, is
summed up as follows. 

\begin{theorem}\label{cor:mc}
  \begin{enumerate}
  \item Model checking $\muBar$ over RNNAs is in $\textsc{2ExpSpace}$,
    more precisely in para-$\textsc{ExpSpace}$ with the degree of the
    formula as the parameter, under both bar language semantics
    (equivalently global freshness) and under local freshness. The
    same holds for validity under local
    freshness.  
  \item The satisfiability problem for $\muBar$ is in
    $\textsc{ExpSpace}$, more precisely in para-$\textsc{PSpace}$ with
    the degree of the formula as the parameter, under both bar
    language semantics / global freshness and under local
    freshness. The same holds for validity and refinement under bar
    language semantics / global freshness.
  \end{enumerate}
\end{theorem}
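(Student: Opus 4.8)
The plan is to reduce every problem in the statement to either \emph{nonemptiness} or \emph{language inclusion} of (extended) bar NFAs, and then to invoke \autoref{thm:mubar-ernna}, \autoref{thm:barnfas}, and \autoref{thm:incl}. The cheap problems in item~(2) reduce to nonemptiness, which I will handle by reachability and hence get the sharper para-\PSpace bound, whereas the expensive problems in item~(1) genuinely need the full inclusion algorithm and therefore only land in para-\ExpSpace.

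I would first treat item~(2). For \emph{satisfiability} I build $A(\phi)$ by \autoref{thm:mubar-ernna} and test whether some accepting state or $\top$-state is reachable from the initial state. A single state of $A(\phi)$ (an annotated set of formulae together with a guessed name) can be written down in space polynomial in $\degree(A(\phi))$ and in $|\mathsf{formulae}|$, both of which are singly exponential in $k=\degree(\phi)$ and polynomial in $m=|\phi|$; a nondeterministic reachability search stores only one such state at a time, and Savitch's theorem then yields deterministic space of the same order, i.e.\ polynomial in $m$ for fixed $k$ (para-\PSpace) and exponential overall (\ExpSpace). Since $D(L)=\emptyset$ iff $L=\emptyset$, this settles satisfiability under both global and local freshness at once. \emph{Validity} and \emph{refinement} under global freshness then reduce to satisfiability by negation: by \autoref{rem:negation} we have $\sem{\neg\phi}=\free{\emptyset}\setminus\sem{\phi}$, so $\phi$ is valid iff $\neg\phi$ is unsatisfiable, and $\sem{\psi}\subseteq\sem{\phi}$ iff $\psi\wedge\neg\phi$ is unsatisfiable (using that $\sem{\cdot}$ sends $\wedge$ to intersection). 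As negation-normal-form negation and conjunction increase size and degree only by a constant factor, the parameter $k$ is preserved and these problems inherit the para-\PSpace/\ExpSpace bound.

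For item~(1), \emph{model checking} $A\models\phi$ is the inclusion $L_\alpha(A)\subseteq\sem{\phi}=L_\alpha(A(\phi))$. I would represent the input RNNA~$A$ as a bar NFA and $A(\phi)$ as an extended bar NFA via \autoref{thm:barnfas}, and run the inclusion algorithm of \autoref{thm:incl}; since that theorem explicitly covers local freshness, the same procedure handles $D(L_\alpha(A))\subseteq D(\sem{\phi})$. \emph{Validity under local freshness} asks whether $D(\sem{\phi})=\names^*$, where only the direction $\names^*\subseteq D(\sem{\phi})$ is nontrivial; I phrase it as $D(L_\alpha(A_{\mathrm{univ}}))\subseteq D(\sem{\phi})$ for a fixed one-state bar NFA $A_{\mathrm{univ}}$ whose local-freshness semantics is $\names^*$ (cf.\ \autoref{rem:ltl}), again discharged by \autoref{thm:incl}. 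The point worth stressing is why the cheap negation trick is unavailable here: by \autoref{rem:local-global} the operator $D$ does not preserve complement, so $D(\sem{\phi})=\names^*$ is \emph{not} equivalent to unsatisfiability of $\neg\phi$, which is exactly why validity under local freshness belongs to the more expensive item~(1) rather than item~(2).

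The remaining work, and the main obstacle, is the complexity bookkeeping for item~(1). By \autoref{thm:mubar-ernna} the automaton $A(\phi)$ has degree $d$ singly exponential in $k$ and polynomial in $m$, and a number of orbits doubly exponential in $k$ and singly exponential in $m$; converting it to an extended bar NFA via \autoref{thm:barnfas} multiplies the state count by $d!$, which is again doubly exponential in $k$ and singly exponential in $m$. Feeding these into \autoref{thm:incl}, whose space cost is polynomial in the number of states and exponential in the degrees, the exponential-in-degree term inflates to doubly exponential in $k$ precisely because $d$ is already exponential in $k$. The delicate point is to verify that all these nested exponentials are confined to the parameter: for every fixed $k$ each bound collapses to singly exponential space in $m$ and in the size of $A$, so the procedure is in \ExpSpace, i.e.\ para-\ExpSpace, while dropping the restriction on $k$ yields only the coarser $\textsc{2ExpSpace}$ bound. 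Establishing this cleanly requires tracking the $k$- and $m$-dependencies separately through the two automata translations and through the inclusion algorithm, and---for item~(2)---being careful to use reachability rather than the full inclusion algorithm in order to obtain the sharper para-\PSpace bound.
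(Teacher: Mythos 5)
Your proposal is correct and follows essentially the same route as the paper: item (2) via reachability of an accepting or $\top$-state in the formula automaton plus Savitch's theorem, with validity and refinement under global freshness reduced to satisfiability by negation and intersection, and item (1) by combining \autoref{thm:mubar-ernna} with \autoref{thm:incl}, treating validity under local freshness as model checking against a one-state automaton for $(\newletter a)^*$ exactly as the paper does. The only cosmetic differences are that the paper performs the reachability check on the name-dropped automaton $\mathsf{nd}(A(\phi))$ (so that the literal language is closed under $\alpha$-equivalence) where you argue directly on $A(\phi)$ --- harmless, since any word with a run from the initial state is automatically closed --- and that your explicit $k$-versus-$m$ bookkeeping for item (1) merely spells out what the paper declares immediate.
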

We leave the complexity (and in fact the decidability) of refinement
checking under local freshness semantics as an open problem.



\section{Conclusions}
\noindent We have defined a specification logic~$\muBar$ for finite
words over infinite alphabets, modelled in the framework of nominal
sets, which covers both local and global freshness.  $\muBar$ features
freeze quantification in the shape of name-binding modalities, and as
such relates to Freeze LTL. It combines comparatively low complexity
of the main reasoning problems with reasonable expressiveness, in
particular unboundedly many registers, full nondeterminism, and
closure under complement.  Freshness is based on $\alpha$-equivalence
in nominal words; this entails certain expressive limitations on
local freshness, which however seem acceptable in relation to the
mentioned good computational properties. 

An important issue for future work is the behaviour of $\muBar$ over
infinite words; also, we will investigate whether our methods extend
to languages of data trees~(e.g.~\cite{FigueiraSegoufin09}).

\bibliographystyle{plainurl}
\bibliography{coalgml}

\clearpage
\appendix

\noindent{\bfseries\sffamily\LARGE Appendix}
\par
\bigskip


\section{Details for \autoref{sec:prelim}}
Given a nominal set~$X$, we generally write
\begin{equation*}
  \fix x = \{ \pi \in G \mid \pi x = x\} 
  \quad
  \text{and}
  \quad
  \textstyle\Fix A = \bigcap_{x \in A} \fix x
\end{equation*}
for $A\subseteq X$, $x\in X$ (so~$A$ is a support of~$x$ iff
$\Fix(A)\subseteq\fix(x)$). On functions $f\colon X \to Y$ between
nominal sets by $(\pi \star f) (x) = \pi\cdot (f(\pi^{-1}\cdot x))$
(we denote the action on functions by~$\star$ to avoid misreadings).
Thus, $f\colon X\to Y$ is equivariant if
$f(\pi\cdot x) = \pi\cdot f(x)$ for all $x \in X$ and $\pi \in G$;
similarly~$f$ has support~$S$ iff this condition holds
whenever~$\pi\in\Fix(S)$. For equivariant~$f$, we have
$\supp(f(x)) \subseteq \supp(x)$, that is, equivariant functions never
introduce new names.  The function $\supp$ itself is equivariant,
i.e.~$\supp(\pi\cdot x)=\pi\cdot (\supp(x))$ for $\pi\in G$. Hence
$|\supp(x_1)| = |\supp(x_2)|$ whenever $x_1$, $x_2$ are in the same
orbit of a nominal set.

\section{Details for \autoref{sec:databarl}}

\subsection*{Details for \autoref{rem:local-global}}

\noindent \emph{Preservation of intersection by~$N$:} Let $L,K$ be
closed bar languages. Since~$N$ is clearly monotone, we have
$N(L\cap K)\subseteq N(L)\cap N(K)$. Conversely, let
$v\in N(L)\cap N(K)$. Then we have clean $w\in L$ and $u\in L$ such
that $v=\ub(w)=\ub(u)$; since~$w,u$ are moreover closed, injectivity
of~$\ub$ on clean closed bar
strings~\cite[Lemma~A.4]{SchroderEA17-full} implies that $w=u$, so
$v\in N(L\cap K)$.

\emph{Preservation of complement by~$N$:} Recall for this claim and
the next that we restrict to closed bar strings; in particular,
complements are understood within the set of all closed bar
strings. The argument is then analogous to the one for intersection
once we show that $N(\free{\emptyset})=\names^*$. To this end, let
$w\in\names^*$. Take $\bar w$ to be the bar string obtained by
replacing the first occurrence of every letter~$a$ in~$w$ with
$\newletter a$. Then $\bar w\in\free{\emptyset}$, and $w=\ub(\bar w)$.

\emph{Failure of preservation of intersection and complement for~$D$}:
The bar languages $\newletter aa$ and $\newletter a\newletter a$ have
empty intersection, but
$D(\newletter aa)\cap D(\newletter a\newletter a)=\{aa\mid
a\in\names\}$. Similarly, the complement of the bar language
$\newletter a\newletter b$ contains $\newletter aa$, while
$D(\newletter a\newletter b)$ contains all words of length~$2$, so the
complement of $D(\newletter a\newletter b)$ does not contain any words
of length~$2$.

\section{Details for \autoref{sec:logic}}

\subsection*{Details for \autoref{rem:negation}}

We define $\neg\phi$ recursively by 
\begin{align*}
\neg(\epsilon) &= \negepsilon & \neg(\negepsilon)&=\epsilon\\
\neg(\psi_1\land\psi_2) &= \neg\psi_1\lor\neg\psi_2 & 
\neg(\psi_1\lor\psi_2) &= \neg\psi_1\land\neg\psi_2\\
\neg\Diamond_\barname\psi &= \Box_\barname\neg\psi & 
\neg\Box_\barname\psi &= \Diamond_\barname\neg\psi\\
\neg \mu X.\,\psi &= \mu X.\, \neg\psi[\neg X/X],
\end{align*}
where we understand
double negation $\neg\neg X$ as implicitly eliminated.
Since~$\mu$ takes unique fixpoints, it is self-dual.  For instance,
$\neg\mu X.\epsilon\lor\Diamond_a X=\mu X.\neg \epsilon\land \Box_a X$.
It remains to show that for all contexts $S\subseteq\names$, all formulae 
$\phi\in\formulae$
and all bar strings $w\in\barstrings$, we have
$S,w\models \neg\phi$ if and only if $S,w\not\models\phi$.
The proof of this is by induction over the tuple $(|w|,\mathsf{u}(\phi),|\phi|)$, in 
lexicographic order, where $\mathsf{u}(\phi)$ denotes the number
of unguarded fixpoint operators in $\phi$. We show the proof just for
selected illustrative cases.

E.g.~for $\phi=\psi_1\lor\psi_2$, we have
\begin{align*}
S,w\models\neg(\psi_1\lor\psi_2) & \Leftrightarrow
S,w\models\neg\psi_1\land\neg\psi_2\\
&\Leftrightarrow  S,w\models\neg\psi_1\text{ and }S,w\models\neg\psi_2\\
&\Leftrightarrow 
S,w\not\models\psi_1\text{ and }S,w\not\models\psi_2\\
&\Leftrightarrow  S,w\not\models \psi_1\lor\psi_2,
\end{align*}
where the third equivalence follows from the inductive hypothesis
since $|\psi_i|<|\phi|$ for $i\in\{1,2\}$.

For $\phi=\Diamond_a\psi$, we have by definition that 
$S,w\models\neg\Diamond_a\psi$ if and only if
$S,w\models\Box_a\neg\psi$. If~$w$ has the form either
$w=bv$ for $b\in\names$ such that $b\neq a$ or $w=|cv$ for $c\in\names$,
 then  $S,w\models\Box_a\neg\psi$ and
$S,w\not\models\Diamond_a\psi$ by the semantics of modal operators. 
If~$w$ has the form $w=av$,
then  $S,w\models\Box_a\neg\psi$ if and only if
$S,v\models\neg\psi$, which by the inductive hypothesis (which applies because $|v|<|w|$)  holds
if and only if $S,v\not\models\psi$, which in turn is the case
if and only if $S,w\not\models\Diamond_a\psi$.

For $\phi=\Box_{\scriptnew a}\psi$, we have 
\begin{align*}
S,w\models\neg\Box_{\scriptnew a}\psi & \Leftrightarrow
S,w\models\Diamond_{\scriptnew a}\neg\psi\\
&\Leftrightarrow\exists v\in\barstrings, c\in\names, \chi\in\formulae.\,
w\equiv_\alpha \newletter cv,
\langle a\rangle \neg\psi = \langle c\rangle \neg\chi,
S\cup\{c\},v\models\neg\chi\\
&\Leftrightarrow\exists v\in\barstrings, c\in\names, \chi\in\formulae.\,
w\equiv_\alpha \newletter cv,
\langle a\rangle \neg\psi = \langle c\rangle \neg\chi,
S\cup\{c\},v\not\models\chi\\
&\Leftrightarrow\neg(\forall v\in\barstrings, c\in\names, \chi\in\formulae.\,
w\equiv_\alpha \newletter cv\land
\langle a\rangle \neg\psi = \langle c\rangle \neg\chi\Rightarrow
S\cup\{c\},v\models\chi)\\
&\Leftrightarrow S,w\not\models\Box_{\scriptnew a}\psi,
\end{align*}
where the third equivalence follows from the inductive hypothesis since
$|v|<|w|$.

For $\phi=\mu X.\,\psi$, we have 
\begin{align*}
S,w\models\neg\mu X.\,\psi & \Leftrightarrow  S,w\models\mu X.\,\neg\psi[\neg X/X]\\
&\Leftrightarrow  S,w\models(\neg\psi[\neg X/X])[\mu X.\,\neg\psi[\neg X/X]/X]\\
&\Leftrightarrow  S,w\models\neg(\psi[\mu X.\,\psi/X])\\
&\Leftrightarrow  S,w\not\models \psi[\mu X.\,\psi/X]\\
&\Leftrightarrow  S,w\not\models \mu X.\,\psi,
\end{align*} 
where the fourth equivalence holds by the inductive
hypothesis since the fixpoint variable~$X$ is guarded by modal operators in $\psi$
so that $\mu X.$ is unguarded in $\mu X.\,\psi$
but guarded in $\psi[\mu X.\,\psi/X]$, hence 
$\mathsf{u}(\psi[\mu X.\,\psi/X])<\mathsf{u}(\mu X.\,\psi)$.

\subsection*{Proof of \autoref{lem:alpheq}}

\begin{proof}
  In both claims, the proof is by induction along the recursive
  definition $S,w\models\phi$, exploiting that this
  definition terminates as explained just after the definition of the
  semantics.
  \begin{enumerate}
  \item 
    For 
    $\epsilon$ and $\negepsilon$ the statement
    holds trivially. 
    For conjunction, disjunction, $a$-modalities, and fixpoints, one
    simply uses the induction hypothesis. For
    $\newletter a$-modalities, the claim is immediate from the
    definition of the semantics.

  \item Again, the claim is trivial for 
    $\epsilon$ and $\neg\epsilon$, and the inductive steps for
    $\land$ and~$\lor$ are straightforward. The case for fixpoints is
    immediate from the inductive hypothesis and
    \autoref{lem:alpha-unfolding}.

    For $\phi=\Diamond_a\chi$, we have $S,w\models\Diamond_a\chi$ so
    that there is $w'$ such that $w=aw'$ and $S,w'\models\chi$. As
    $\phi\equiv_\alpha \phi'$, we have $\phi'=\Diamond_a\chi'$ for
    some $\chi'$ such that $\chi\equiv_\alpha\chi'$.  The induction
    hypothesis finishes the case.

    For $\phi=\Diamond_{\scriptnew a}\chi$, we have
    $S,w\models\Diamond_{\scriptnew a}\chi$ so that there are
    $b\in\names$, $\psi\in\formulae$, and $v\in\barstrings$ such that
    $w=\newletter bv$,
    $\Diamond_{\scriptnew a}\chi \equiv_\alpha \Diamond_{\scriptnew
      b}\psi$ and $S\cup\{b\},v\models\psi$.  Since
    $\phi\equiv_\alpha\phi'$, there are $c\in\names$,
    $\chi'\in\formulae$ such that $\phi'=\Diamond_{\scriptnew c}\chi'$
    and $\chi= (bc)\cdot\chi'$.  Hence, it suffices to show that there
    are $\psi'\in\formulae$, $d\in\names$ and $v'\in\barstrings$ such
    that $w=\newletter dv'$,
    $\Diamond_{\scriptnew c}\chi' \equiv_\alpha \Diamond_{\scriptnew
      d}\psi'$ and $S\cup\{d\},v\models\psi'$.  Pick $\psi'=\psi$,
    $v'=v$ and $d=b$ so that $w=\newletter bv=\newletter dv'$ and
    $S\cup\{d\},v'\models\psi'$. We also have
    $\Diamond_{\scriptnew c}\chi'\equiv_\alpha \Diamond_{\scriptnew
      a}\chi \equiv_\alpha\Diamond_{\scriptnew
      b}\psi=\Diamond_{\scriptnew d}\psi'$, as required. \qedhere
  \end{enumerate}
\end{proof}
%
  
%

\takeout{
\subsection*{Proof of \autoref{lem:boxfree}}

\begin{proof}
We have $S,w\models\negepsilon$ if and only if $w$ is
a non-empty bar string
whose free names are in $S$. This is the
case if and only if $w$ is non-empty and
the first
letter in $w$ is either a plain name $b$ 
such that $b\in S$ or a bar name $\newletter c$ for some $c\in\names$.
This in turn is the case if and only if
 $S,w\models\bigvee\nolimits_{b\in S}\Diamond_b\top\vee
 \Diamond_{\scriptnew c}\top$.

We have $a\in S$ so that $S,w\models \Box_a\psi$ if and only if
$w$ is the empty bar string $\epsilon$, or a non-empty bar
string that starts with some plain name $b\in S$ such that $a\neq b$ 
or with some bar name $\newletter c$, or 
we have $w=aw'$ and $S,w'\models\psi$.
This in turn is the case if and only if
$S,w\models\epsilon$ or
$S,w\models\bigvee\nolimits_{b\in S\setminus \{a\}}\Diamond_b\top$ or
$S,w\models\Diamond_{\scriptnew c}\top$ or $S,w\models
\Diamond_a\psi$.

Lastly, we have $S,w\models\Box_{\scriptnew a}\psi$ if
and only if $w=\epsilon$ or $w$ starts with some plain name $b\in S$,
or we have $w=|cw'$ for some $c\in\names$ and
 $S\cup\{c\},w'\models\chi$
for some $\chi\in\formulae$ such that
$\langle a\rangle\psi=\langle b\rangle\chi$.
This in turn is the case if and only if
$S,w\models \epsilon$ or
$S,w\models \bigvee\nolimits_{b\in S}\Diamond_b\top$
or $S,w\models \Diamond_{\scriptnew c}\psi$.

\end{proof}}

\section{Details for \autoref{sec:eauto}}

\begin{lemma}\label{lem:altauto}
Let $A$ be an ERNNA with set $Q$ of states, let $q,q'\in Q$, and let $a\in\names$. Then the following hold.
\begin{enumerate}
\item If $q\stackrel{\epsilon}{\rightarrow} q'$, then $\mathsf{supp}(q')\subseteq\mathsf{supp}(q)$.
\item If $q\stackrel{a}{\rightarrow} q'$, then $\mathsf{supp}(q')\cup\{a\}\subseteq\mathsf{supp}(q)$.
\item If $q\stackrel{\scriptnew a}{\rightarrow} q'$, then $\mathsf{supp}(q')\subseteq\mathsf{supp}(q)\cup \{a\}$.
\end{enumerate}
\end{lemma}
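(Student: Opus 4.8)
The plan is to derive all three inclusions from a single principle: along the equivariant transition relation, the \emph{finite} set of successors of a fixed state~$q$ (suitably decorated with the transition label) is supported by $\supp(q)$, and hence so is each of its elements. Concretely, I will use the standard fact that for any finite subset $W$ of a nominal set one has $\supp(W)=\bigcup_{x\in W}\supp(x)$, so that in particular $\supp(x)\subseteq\supp(W)$ for every $x\in W$. The only nontrivial inclusion here is proved by a freshness-swapping argument: if some $c\in\supp(x_0)$ lay outside $\supp(W)$, then a transposition $(c\,d)$ with $d$ fresh for $W$ would fix $\supp(W)$ pointwise yet move $x_0$ to an element carrying~$d$ in its support, which cannot lie in~$W$; this contradicts $\pi\cdot W=W$ for $\pi$ fixing $\supp(W)$.

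For item~(1) I set $W=\{q''\mid q\trans{\epsilon}q''\}$, which is finite by the finite-branching assumption. Applying any~$\pi$ that fixes $\supp(q)$ pointwise to a transition $q\trans{\epsilon}q''$ and using equivariance of~$\to$ (together with $\pi\cdot q=q$ and $\pi\cdot\epsilon=\epsilon$) yields $q\trans{\epsilon}\pi\cdot q''$; hence $\pi\cdot W=W$ and $\supp(W)\subseteq\supp(q)$. Since $q'\in W$, this gives $\supp(q')\subseteq\supp(W)\subseteq\supp(q)$. Item~(2) is the same with $W=\{(b,q'')\mid q\trans{b}q''\}\subseteq\names\times Q$, again finite by finite branching; for~$\pi$ fixing $\supp(q)$, equivariance turns $q\trans{b}q''$ into $q\trans{\pi(b)}\pi\cdot q''$, so $\pi\cdot W=W$ and $\supp(W)\subseteq\supp(q)$. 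As $(a,q')\in W$ and $\supp(a,q')=\{a\}\cup\supp(q')$, we immediately obtain $\{a\}\cup\supp(q')\subseteq\supp(q)$.

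Item~(3) is the case that needs the most care, and I expect it to be the main obstacle, because bound transitions must be handled through abstraction sets rather than through bare states. Here I take $W=\{\langle b\rangle q''\mid q\trans{\scriptnew b}q''\}\subseteq[\names]Q$, which is finite precisely by the ``finitely branching up to $\alpha$-equivalence'' clause. For~$\pi$ fixing $\supp(q)$, $\alpha$-invariance and equivariance of~$\to$ give $q\trans{\scriptnew{\pi(b)}}\pi\cdot q''$, and since the $G$-action on $[\names]Q$ satisfies $\pi\cdot\langle b\rangle q''=\langle\pi(b)\rangle(\pi\cdot q'')$, we again obtain $\pi\cdot W=W$, hence $\supp(W)\subseteq\supp(q)$. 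Finally, using $\langle a\rangle q'\in W$ together with the identity $\supp(\langle a\rangle q')=\supp(q')\setminus\{a\}$ from \autoref{def:abstraction}, we conclude $\supp(q')\setminus\{a\}\subseteq\supp(q)$, that is, $\supp(q')\subseteq\supp(q)\cup\{a\}$.

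The conceptual content is thus entirely concentrated in the finite-set support fact and in correctly matching each transition type with the right carrier for~$W$ ($Q$, $\names\times Q$, or $[\names]Q$); the bound-transition case is where one must be careful to invoke the $\alpha$-quotiented branching condition and the support formula for abstractions, and is the step I would write out most explicitly.
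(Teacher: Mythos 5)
Your proof is correct and follows essentially the same route as the paper's: in each case one forms the finite (hence uniformly finitely supported) set of label-decorated successors --- living in $Q$, $\names\times Q$, or $[\names]Q$ respectively --- whose support is contained in $\supp(q)$ by equivariance of $\to$, and then reads off the claimed inclusion from the support of the chosen element, using $\supp(\langle a\rangle q')=\supp(q')\setminus\{a\}$ in the bound case. One small repair: in your swapping argument for $\supp(W)=\bigcup_{x\in W}\supp(x)$, choose $d$ fresh for the finite union $\bigcup_{x\in W}\supp(x)$ rather than merely for $\supp(W)$, so that the conclusion $(c\,d)\cdot x_0\notin W$ does not presuppose the identity being proved.
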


\begin{proof} 
\begin{enumerate}
\item Let $Z=\{(\epsilon,q')\mid q\stackrel{\epsilon}{\rightarrow} q'\}$. Then we have 
\begin{align*}
\mathsf{supp}(q')=\mathsf{supp}(\epsilon,q')\subseteq\mathsf{supp}(Z)\subseteq \mathsf{supp}(q),
\end{align*}
where the first inclusion holds since $Z$ is ufs
and the second inclusion holds by equivariance of $\rightarrow$.
\item Let $Z=\{(a,q')\mid q\stackrel{a}{\rightarrow} q'\}$. Then we have 
\begin{align*}
\mathsf{supp}(q')\cup\{a\}=\mathsf{supp}(a,q')\subseteq\mathsf{supp}(Z)\subseteq \mathsf{supp}(q),
\end{align*}
where the first inclusion holds since $Z$ is ufs and the second inclusion holds by equivariance of $\rightarrow$.
\item
Let $Z=\{\langle a\rangle q' \mid q\stackrel{\scriptnew a}{\rightarrow} q'\}$. Then we have 
\begin{align*}
\mathsf{supp}(q')\subseteq \mathsf{supp}(\langle a\rangle q')\cup\{a\}\subseteq\mathsf{supp}(Z)\cup\{a\}\subseteq \mathsf{supp}(q)\cup\{a\},
\end{align*}
where the first inclusion holds since $\mathsf{supp}(\langle a\rangle q')=\mathsf{supp}(q')\setminus\{a\}$, the second inclusion holds since
$Z$ is ufs and the third inclusion holds by equivariance of the relation
$\rightarrow$.\qedhere
\end{enumerate}
\end{proof}

\begin{remark}
  It is easy to see that $\epsilon$-transitions can be eliminated from
  an ERNNA in the usual fashion; the key point to note is that by the
  above lemma, the support of states cannot increase along
  $\epsilon$-transitions, so that $\epsilon$-elimination preserves the
  conditions on finite branching.
\end{remark}

\subsection{Name Dropping}
\begin{expl}
  The literal languages of ERNNAs are in general not closed under
  $\alpha$-equivalence, as can be seen on, e.g., the ERNNA (similar to
  the corresponding example RNNA in~\cite{SchroderEA17}) denoted by
\begin{align*}
s\stackrel{|a}{\rightarrow}t(a)\stackrel{|b}{\rightarrow}u(a,b)\stackrel{b}{\rightarrow} v,
\end{align*}
where~$v$ is an $\epsilon$-state; by $t(a)$, $u(a,b)$ we mean to
denote orbits of the state space arising by renaming the indicated
names~$a,b$ in accordance with the equivariance and
$\alpha$-invariance conditions in the definition of ERNNA. The
automaton accepts the bar string $|a |b b$ since we have transitions
$s\stackrel{|a}{\rightarrow}t(a)$,
$t(a)\stackrel{|b}{\rightarrow}u(a,b)$ and
$u(a,b)\stackrel{b}{\rightarrow} v$. On the other hand, the automaton
does not accept the $\alpha$-equivalent bar string $|b |b b$ since
after the transition $s\stackrel{|b}{\rightarrow}t(b)$, we do have a
transition $t(b)\stackrel{|c}{\rightarrow}u(b,c)$, in which
however~$c$ cannot be renamed into~$b$ as~$b$ occurs freely in the
target state.
\end{expl}
\noindent We adapt the name-dropping construction for
RNNA~\cite{SchroderEA17}, which is aimed at regaining closure of the
literal language under $\alpha$-equivalence, to ERNNA. E.g.~in terms
of the above example, the name dropping construction will just offer
the option of forgetting the name~$b$ in $u(b,c)$, thus unblocking the
desired $\alpha$-renaming.
\begin{defn}
Given an $\epsilon$-free ERNNA $A=(Q,\rightarrow,s,f)$ of degree $k$ with $n$ orbits, we define
the \emph{name-dropping} ERNNA $\mathsf{nd(A)}=(Q',\rightarrow',s',f')$ 
by 
\begin{align*}
Q'=\{q|_N\mid q\in Q, N\subseteq\supp(q)\},
\end{align*}
where $q|_N:=\mathsf{Fix}(N)q$;
here, $\mathsf{Fix}(N)q$ denotes the orbit of $q$ under $\mathsf{Fix}(N)$.
We put $s'=s|_{\supp(s)}$ and $f(q|_N)=f(q)$. As transitions,
we take 
\begin{itemize}
\item $q|_N\stackrel{a}{\rightarrow'} q'|_{N'}$, whenever 
$q\stackrel{a}{\rightarrow} q'$, $N'\subseteq N$, and $a\in N$, and
\item $q|_N\stackrel{\scriptnew a}{\rightarrow'} q'|_{N'}$, whenever 
$q\stackrel{\scriptnew b}{\rightarrow} q''$, $N''\subseteq \supp(q'')\cap(N\cup\{b\})$, and $\langle a\rangle (q'|_{N'})=\langle b\rangle (q''|_{N''})$.
\end{itemize}
\end{defn}
As in~\cite{SchroderEA17},
the name-dropping construction does not change the bar languages of
automata, but it closes the accepted
literal language under $\alpha$-equivalence.

\begin{lemma}\label{lem:namedrop}
Given an $\epsilon$-free ERNNA $A$ of degree $k$ and with $n$ orbits, 
a bar string $w\in\barstrings$,
we have 
\begin{enumerate}
\item $L_\alpha(A)=L_\alpha(\mathsf{nd}(A))$,
\item $L_0(\mathsf{nd}(A))$ is closed under $\alpha$-equivalence.
\end{enumerate}
Furthermore, $\mathsf{nd}(A)$ is of degree $k$ and has at most 
$n\cdot 2^k$ orbits.
\end{lemma}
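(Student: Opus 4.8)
The statement has three parts: the two language claims and the bound on degree/orbits. I would dispose of the structural and counting claims first, then the bar-language equality, and finally the closure under $\alpha$-equivalence, which is the substantive one. Throughout I would exploit the basic fact that for $N\subseteq\supp(q)$ the restricted state $q|_N=\Fix(N)q$ has $\supp(q|_N)=N$, so that restrictions really do lose names. For the counting: since every state of $\mathsf{nd}(A)$ has the form $q|_N$ with $N\subseteq\supp(q)$, its support is $N$ and hence has size $\le k$, giving $\degree(\mathsf{nd}(A))\le k$ (and $=k$ when $A$ attains degree $k$). Each orbit of $A$, with representative $q$, contributes states $q|_N$ for $N$ ranging over the $\le 2^{|\supp(q)|}\le 2^k$ subsets of $\supp(q)$; since the $G$-orbit of $q|_N$ is determined by the orbit of $q$ together with $N$ up to symmetries of $q$, this yields at most $n\cdot 2^k$ orbits. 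One should also note in passing that $\mathsf{nd}(A)$ is again a legitimate (ERNNA) object: the bound transitions are defined in terms of the abstraction $\langle a\rangle(q'|_{N'})$, so $\alpha$-invariance of $\rightarrow'$ holds by construction, finite branching up to $\alpha$-equivalence is inherited, and the $\top$-state side conditions are preserved.

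\textbf{Bar-language equality.} For $L_\alpha(A)\subseteq L_\alpha(\mathsf{nd}(A))$ I would lift runs \emph{keeping all names}: map each $A$-state $q$ to $q|_{\supp(q)}$ (which is just $q$). Using \autoref{lem:altauto}, a free transition $q\trans{a}q'$ has $a\in\supp(q)$ and $\supp(q')\subseteq\supp(q)$, so it lifts to $q|_{\supp(q)}\stackrel{a}{\rightarrow'}q'|_{\supp(q')}$; a bound transition lifts with $N''=\supp(q'')$ since $\supp(q'')\subseteq\supp(q)\cup\{a\}$. Acceptance is preserved by $f'(q|_N)=f(q)$, so $L_0(A)\subseteq L_0(\mathsf{nd}(A))$ and hence the inclusion on bar languages. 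For the converse I would \emph{project} a run in $\mathsf{nd}(A)$ by forgetting the $N$-annotations: a free transition $q|_N\stackrel{a}{\rightarrow'}q'|_{N'}$ comes from $q\trans{a}q'$, while a bound transition $q|_N\stackrel{\scriptnew a}{\rightarrow'}q'|_{N'}$ comes from some $q\trans{\scriptnew b_0}q''$ with $\langle a\rangle(q'|_{N'})=\langle b_0\rangle(q''|_{N''})$. Threading these abstraction equalities through the run by a routine induction yields an $A$-run on a bar string $w''\equiv_\alpha w$ with the same acceptance verdict, so $[w]_\alpha\in L_\alpha(A)$; this gives $L_\alpha(\mathsf{nd}(A))\subseteq L_\alpha(A)$, and hence equality.

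\textbf{Closure under $\alpha$-equivalence.} This is the heart. I would prove, by induction on $|w|$, that for every state $q|_N$ the literal language accepted from $q|_N$ is closed under $\equiv_\alpha$. The Boolean-free nature of runs means only the leading letter matters. If $w$ begins with a plain name $a$, then any $w'\equiv_\alpha w$ also begins with $a$ (nothing binds it) and agrees on the remainder up to $\alpha$, so the first free transition is reused and the induction hypothesis handles the suffix. The critical case is $w=\newletter a\,v$ and $w'=\newletter b\,u$ with $\langle a\rangle v=\langle b\rangle u$, where WLOG $b\neq a$ and hence $b\notin\FN(v)$. We have a first transition $q|_N\stackrel{\scriptnew a}{\rightarrow'}q'|_{N'}$ followed by an accepting run $q'|_{N'}\trans{v}$. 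Since $\supp(q'|_{N'})=N'$, the abstraction $\langle a\rangle(q'|_{N'})$ can be $\alpha$-renamed to read $\newletter b$ exactly when $b\notin N'\setminus\{a\}$; if $b$ happens to lie in $N'$ this renaming is \emph{blocked}, and this is precisely where name dropping is used. I would first apply the key
\begin{quote}
\emph{Dropping sub-lemma.} If $q'|_{N'}\trans{v}$ is accepting and $b\notin\FN(v)$, then $q'|_{N'\setminus\{b\}}\trans{v}$ is accepting,
\end{quote}
proved by a separate induction on $|v|$: a free transition of the run never reads $b$ (as $b\notin\FN(v)$), while bound transitions re-introduce their freshly bound name independently of the current restriction, so dropping $b$ from the carried support never disables a step. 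Having replaced $N'$ by $N'\setminus\{b\}$, the renaming $\langle a\rangle(q'|_{N'\setminus\{b\}})=\langle b\rangle\big((a\,b)\cdot(q'|_{N'\setminus\{b\}})\big)$ is now unblocked, so $\alpha$-invariance of $\rightarrow'$ yields $q|_N\stackrel{\scriptnew b}{\rightarrow'}(a\,b)\cdot(q'|_{N'\setminus\{b\}})$, and equivariance together with the induction hypothesis transports the accepting run on $v$ to one on $u$. This produces an accepting run on $w'=\newletter b\,u$, completing the induction.

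\textbf{Main obstacle.} The easy parts are the counting and the two run-lifting arguments. The delicate point is the closure proof, and within it the dropping sub-lemma: one must verify carefully that discarding a name that does not occur freely in the remaining word never obstructs a later transition, in particular that bound transitions of $\mathsf{nd}(A)$ re-supply their bound name regardless of the incoming restriction (this uses that $N''$ may always be chosen to include the freshly bound name). Getting the interaction of this dropping step with the $\alpha$-renaming bookkeeping exactly right, so that equivariance cleanly transports the accepting suffix-run, is the part that requires the most care.
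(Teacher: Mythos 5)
Your proposal is correct and follows essentially the same route as the paper: the paper's own proof consists of deferring to the corresponding RNNA results (\cite[Lem.~5.7 and~5.8]{SchroderEA17}) and observing that $\top$-states are harmless because they have empty support and no outgoing transitions, and your argument is precisely a worked-out reconstruction of that delegated proof -- run lifting/projection for the bar-language equality, and the induction with the name-dropping sub-lemma to unblock $\alpha$-renamings for closure of $L_0(\mathsf{nd}(A))$ -- including the same remark about $\top$-states. The only point to watch is that before invoking $\alpha$-invariance you need the transition $q|_N\stackrel{\scriptnew a}{\rightarrow'}q'|_{N'\setminus\{b\}}$ itself, which does hold because the sets $N''$ in the definition of bound transitions range over all admissible subsets, as your parenthetical already indicates.
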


\begin{proof}
  The proof works just as the proof of the corresponding results in
  previous work~\cite[Lem.~5.7 and~5.8]{SchroderEA17}, with the only difference being that, in
  contrast to RNNA, ERNNA may contain states $q$ such that
  $f(q)=\top$. By definition of ERNNA, such states have empty support
  and no outgoing transitions so that the name-dropping construction
  does not change anything for these states.
\end{proof}

\subsection*{Proof of \autoref{thm:barnfas}}

\begin{proof}
  An extended bar NFA~$A$ generates an ERNNA~$\bar A$ as
  follows. Roughly speaking, we annotate each state~$q$ of~$A$ with
  the free names of $\prelang(A,q)$ in the evident generalized sense
  (i.e.~no longer insisting that bar strings in $\prelang(A,q)$ are
  closed), and then close under renaming. Formally, we take the state
  set~$\bar Q$ of~$\bar A$ to consist of pairs $(q,\pi \Fix(N_q))$
  where~$q\in Q$, $\pi\in G$, $N_q=\FN(A,q)$, and $\pi\Fix(N_q)$
  denotes a left coset (i.e.\
  $\pi\Fix(N_q)=\{\pi\rho\mid\rho\in\Fix(N_{q})\}$).  The acceptance
  map~$\bar f$ of~$\bar A$ is given by $\bar f(q,\pi\Fix(N_q))=f(q)$,
  and the initial state of~$\bar A$ is~$(s,\id\Fix(N_s))$. The set of
  left cosets of $\Fix(N_q)$ is in bijection with the set of injective
  maps $N_q\to\names$, so $N_q$ behaves like a set of registers. We
  let~$G$ act on left cosets by
  $\pi_1\cdot(\pi_2\Fix(N_q))=\pi_1\pi_2\Fix(N_q)$, and on states by
  $\pi_1\cdot(q,\pi_2\Fix(N_q))=(q,\pi_1\pi_2\Fix(N_q))$. Moreover,
  $\bar A$ has a free transition
  $(q,\pi \Fix(N_q))\trans{\pi(a)} (q',\pi \Fix(N_{q'}))$ whenever
  $q\trans{a}q'$ in~$A$ and $\pi\in G$, and a bound transition
  $(q,\pi\Fix(N_q))\trans{\scriptnew a}(q',\pi'\Fix(N_{q'}))$ whenever
  $q\trans{\scriptnew b}q'$ in~$A$, $\pi\in G$, and
  $\langle \pi(b)\rangle \pi\Fix(N_q)=\langle
  a\rangle\pi'\Fix(N_{q'})$. Applying the name dropping construction
  to $\bar A$ produces states of the more general form
  $(q,\pi\Fix(N))$ for $N\subseteq N_q$; such a state restricts
  $(q,\pi\Fix(N_q))$ to~$N$ (and thus represents the situation where
  the content of registers in $N_q\setminus N$ has been lost).
  Transitions are inherited from~$\bar A$ as prescribed by the
  definition of name dropping.

  In the converse direction, we extract an extended bar NFA~$A_0$ from
  an ERNNA~$A=(Q,\to,s,f)$ as follows. We first eliminate
  $\epsilon$-transitions from~$A$ in the standard manner, obtaining an
  $\epsilon$-free ERNNA of the same size and degree that we continue
  to designate as~$A$. Let~$A$ have degree~$k$, and fix a $k$-element
  set $\names_0\subseteq\names$ such that
  $\supp(s)\subseteq\names_0$. Pick an additional name
  $\ast\in\names\setminus\names_0$. We define the state set~$Q_0$
  of~$A_0$ as $Q_0=\{q\in Q\mid\supp(q)\subseteq\names_0\}$, with
  initial state~$s$. The acceptance function is obtained by
  restriction. Between two given states of~$A_0$, we include the same
  free transitions as in~$A$, which are necessarily labelled with
  names from~$\names_0$, and all bound transitions in~$A$ with
  labels~$\newletter a$ such that $a\in\names_0\cup\{\ast\}$.

  Both constructions preserve the accepted bar
  language. For a more detailed explanation of the constructions and 
  for the associated proofs, we refer to~\cite{SchroderEA17}, noting
  that the proofs for ERRNAs work in the same way as for RNNAs,
  with the exception   of the (unproblematic) treatment of $\top$-states.
\end{proof}

\subsection*{Proof of \autoref{thm:incl}}
\begin{proof}
Let $A$ be an RNNA 
and $A'$ an $\epsilon$-free ERNNA. 
By \autoref{thm:RNNAbarNFA}, we can transform
$A$ to a bar NFA $A_1$,
such that $L_\alpha(A)=L_\alpha(A_1)$ and
the number of states in $A_1$ is linear 
in the number of orbits of $A$ and exponential in $\degree(A)$, and
$\degree(A_1)\leq \degree(A)+1$.
By \autoref{lem:namedrop}, 
we have $L_\alpha(A')=L_\alpha(\mathsf{nd}(A'))$ and
$\mathsf{nd}(A')$ is of degree $\degree(A')$ and has at most 
$n\cdot 2^{\degree(A')}$ orbits, where $n$ is the number of orbits in $A'$;
furthermore, the literally accepted language of $\mathsf{nd}(A')$ is closed
under $\alpha$-equivalence.

We exhibit a $\NExpSpace$ procedure to check that $L_\alpha(A_1)$
is \emph{not} a subset of $L_\alpha(\mathsf{nd}(A'))$.
The claimed bound then follows by Savitch's theorem.
Our algorithm guesses a closed bar string $w$ and an accepting run of $A_1$ on
$w$ for which $\mathsf{nd}(A')$ has no accepting run.
To this end, our construction maintains a pair
$(q,\Gamma)$, where $q$ is a state of $A_1$ that is reachable
by the bar string read so far, and
$\Gamma$ is a set of states of $\mathsf{nd}(A')$, with the intuition
that all states from $\mathsf{nd}(A')$ are reachable by the bar string that
has been read so far. The algorithm
initializes $q$ to the initial state of $A_1$, $S$ to $\emptyset$ 
 and $\Gamma$ to $\{s'\}$ where $s'$ is the initial state of
 $\mathsf{nd}(A')$, and then iterates the following:
\begin{enumerate}
\item Guess a transition $q\stackrel{\barname}{\rightarrow}q'$ in $A_1$
and update $q$ to $q'$.
\item 
Compute the set $\Gamma'$ of all states of $\mathsf{nd}(A)$
that are reachable from some state in $\Gamma$ by an $\barname$-transition.
If $\barname=\newletter a$ for some $a\in\names$, then update $S$
to $S\cup\{a\}$.
If $\barname=\newletter a$ or $\barname\in S$, then put
$\Gamma''=\{q'\in\Gamma\mid f(q')=\top\}$, otherwise put 
$\Gamma''=\emptyset$.
Update $\Gamma$ to 
$\Gamma'\cup\Gamma''$.
\end{enumerate}
The algorithm terminates successfully and reports $L_\alpha(A)\not\subseteq
L_\alpha(A')$ if it reaches $(q,S,\Gamma)$ such that
$q$ is a final state of $A_1$ and there is
no state $q'\in\Gamma$ such that $f(q')=1$ or $f(q')=\top$.

Correctness follows from the proof of Theorem 7.1 in~\cite{SchroderEA17}
together with the observation that whenever the algorithm reaches
$(q,S,\Gamma)$ such that there is $q'\in\Gamma$ such that $f(q')=\top$, 
the state $q'$ accepts all bar strings from $\free{S}$, 
in particular the empty bar string.
Since the algorithm has to ensure that there is no accepting run
of $\mathsf{nd}(A')$ for the guessed word,
any further $\barname$-transitions do not remove $q'$ 
from $\Gamma$ unless $\barname\in\names\setminus S$. This prevents the algorithm from
 terminating successfully until
some free occurrence of a name from $\barname\in\names\setminus S$ is read 
without having previously been bound.

For space usage, we always have $|S|\leq \degree(A_1)$.
Regarding the size of $\Gamma$, we first note
that for each state in $q'\in\Gamma$, we have
$\supp(q')\subseteq \supp(s')\cup\mathsf{names}(A_1)$, where 
$s'$ is the initial state in $\mathsf{nd}(A')$ and
$\mathsf{names}(A_1)$ denotes the set of all names $a$ that occur as 
label of a transition in $A_1$,
either in the form $\newletter a$ or $a$. This is the case
since we start with the initial state $s'$ of
$\mathsf{nd}(A')$ which has support $\supp(s')$ 
and since every $\barname$-transition in
$A_1$ that the algorithm tracks in $\mathsf{nd}(A')$ adds either
nothing (if $\barname\in\names$), or at most $ a$
 (if $\barname=\newletter a$)
to the support of states. Hence we always have $|\supp(q')|\leq
  \degree(A_1)+ \degree(A')$ for all $q'\in\Gamma$.
  Accounting for the order in which the names from $\supp(q')$
  may have been encountered, we hence have that $\Gamma$ contains
  at most $(\degree(A_1)+ \degree(A'))!p$ states, 
  where $p$ is the number of orbits of 
  $\mathsf{nd}(A')$. Recall that $\mathsf{nd}(A')$ has
  $n\cdot 2^{\degree(A')}$ orbits, so we can make do with space
\begin{align*}
  \mathcal{O}(m!\cdot n\cdot 2^{\degree(A')})\subseteq
  \mathcal{O}(n\cdot 2^{m\log m+\degree(A')}),
\end{align*}
  where $n$ is the number of orbits of $A'$
  and $m=\degree(A)+ \degree(A') +1$.
\end{proof}

\section{Details for \autoref{sec:muBar-ERNNA}}

\subsection*{Proof of~\autoref{lemma:guess-free}}

\begin{proof}
 The proof is by induction on $(|v|,\mathsf{u}(\phi),|\phi|)$,
 ordered by lexicographic ordering. Since $0\leq |v|<n$, we have $n\geq 1$.
 \begin{itemize}
 \item If we have $\phi=\phi^{B}_C(a)_n$
 then we are done. This covers the cases 
 $\phi=\epsilon$ and $\phi=\negepsilon$.
 \item If $\phi=\psi\wedge\chi$, then we have
 \begin{align*}
    S,v\models\psi\wedge\chi&\Leftrightarrow
   S,v\models\psi\text{ and }S,v,\models\chi\\
   &\Leftrightarrow S,v\models \psi^{B}_C(a)_n\text{ and }
   S,v\models \chi^{B}_C(a)_n\\
   &\Leftrightarrow S,v\models \psi^{B}_C(a)_n\wedge\chi^{B}_C(a)_n
   \\
   &\Leftrightarrow S,v\models (\psi\wedge\chi)^{B}_C(a)_n
 \end{align*}
 where the second equivalence is by the inductive hypothesis.
 The case $\phi=\psi\vee\chi$ is analogous.
 \item If $\phi=\Diamond_{\scriptnew c}\psi$, then we have
 \begin{align*}
    S,v\models\Diamond_{\scriptnew c}\psi&\Leftrightarrow
   \text{ there are }b, v', \chi\text{ such that }\\
   &\quad \quad v\equiv_\alpha 
   \newletter b v',
   \Diamond_{\scriptnew b}\chi\equiv_\alpha 
   \Diamond_{\scriptnew c}\psi, S\cup\{b\},v'\models\chi\\
&\Leftrightarrow
   \text{ there are }b, v', \chi'\text{ such that }\\
   &\quad\quad v\equiv_\alpha 
   \newletter b v',
   \Diamond_{\scriptnew b}\chi'\equiv_\alpha 
   \Diamond_{\scriptnew c}(\psi^{B\cup\{c\}}_{C\cup\{c\}}(a)_{n-1}), S\cup\{b\},v'\models\chi'\\
      &\Leftrightarrow S,v\models \Diamond_{\scriptnew c}(\psi^{B\cup\{c\}}_{C\cup\{c\}}(a)_{n-1})
   \\
   &\Leftrightarrow S,v\models (\Diamond_{\scriptnew c}\psi)^{B}_C(a)_n,
 \end{align*}
 where the second equivalence is by the inductive hypothesis since
 $|v|< n$ implies $|v'|< n-1$ and by picking $\chi'$ and $\chi$ such
 that $\chi'=\chi^{B\cup\{b\}}_{C\cup\{b\}}(a)_{n-1}$.  To see that
 the inductive hypothesis can be applied, note first that if
 $a\neq *$, then every plain name~$d$ that occurs before the first
 free occurrence of $a$ in $v'$ is contained in $B\cup\{b\}$ (namely,
 either~$d$ is free in $\newletter bv'\equiv_\alpha v$, and
 then~$d\in B$ by assumption, or $d=b$). Second, suppose that~$d$ is
 the first freely occurring letter in~$v'$ that is not contained in
 $B\cup\{b\}$ (so $d\neq b$), and that $d\in \FN(\phi)$.  Then
 the first occurrence of~$d$ is also free
 in~$\newletter bv'\equiv_\alpha v$ and we have
 $d\in\FN(\Diamond_{\scriptnew b}\phi)$ so that
 $d\notin C\subseteq C\cup\{b\}$.  The case
 $\phi=\Box_{\scriptnew c}\psi$ is analogous.
\item If $\phi=\mu X.\,\psi$, then we have
 \begin{align*}
    S,v\models\mu X.\,\psi&\Leftrightarrow
   S,v\models\psi[\mu X.\,\psi/X]\\
   &\Leftrightarrow S,v\models (\psi[\mu X.\,\psi/X])^{B}_C(a)_n\\
   &\Leftrightarrow S,v\models (\mu X.\,\psi)^{B}_C(a)_n,
 \end{align*}
where the second equivalence is by the inductive hypothesis
since $\mathsf{u}(\psi[\mu X.\,\psi/X])< \mathsf{u}(\mu X.\,\psi)$
by guardedness of fixpoint variables.
\item Case $\phi=\Diamond_b \psi$:
\begin{itemize}
\item If $b\notin C$, then we have $(\Diamond_b\psi)^{B}_C(a)_n=\bot$.
By assumption, $B\subseteq C$, so $b\notin B$.
Suppose that $d$ is the first freely occuring letter in $v$ that is not contained in $B$
and that $d\in \FN(\Diamond_b\psi)$. Then $d\in C$ by assumption, so 
$d\neq b$. Hence $v\neq bw'$ so that $S,v\not\models\Diamond_b\psi$,
as required.
\item
If $b\in C$ and $b\notin B$, then we have
 \begin{align*}
    S,v\models\Diamond_b \psi&\Leftrightarrow
   v=bv'\text{ and }S,v'\models\psi\\
   &\Leftrightarrow v=bv'\text{ and }S,v'\models\psi^{\emptyset}_{\FN(\psi)}(*)_{n-1}\\
   &\Leftrightarrow S,v\models (\Diamond_b\psi)^{B}_C(a)_n,
 \end{align*}
 where the second equivalence is by the inductive hypothesis since
 $|v|< n$ implies $|v'|<n-1$ and since the assumptions of the inductive
 hypothesis on the name sets hold trivially  for $\emptyset$ and $\mathsf{FN}(\psi)$.
\item
If $b\in C$ and $b\in B$, then we have
 \begin{align*}
    S,v\models\Diamond_b \psi&\Leftrightarrow
   v=bv'\text{ and }S,v'\models\psi\\
   &\Leftrightarrow v=bv'\text{ and }S,v'\models\psi^{B}_C(a)_{n-1}\\
   &\Leftrightarrow S,v\models (\Diamond_b\psi)^{B}_C(a)_n,
 \end{align*}
 where the second equivalence is by the inductive hypothesis; we check
 that the assumptions of the inductive hypothesis hold: Since $|v|< n$
 we have $|v'|< n-1$. In case $a\neq *$, all names that occur freely
 in $v'$ before the first free occurrence of $a$ remain free in
 $v=bv'$, and hence are contained in~$B$ by assumption. Lastly,
 suppose that $d$ is the first freely occurring letter in $v'$ that is
 not contained in $B$ but in $\FN(\psi)$.  Since $b\in B$,~$d$ is also
 the first freely occuring letter in $v=bv'$ that is not contained in
 $B$; moreover, $d\in\FN(\Diamond_b\psi)$ since $d\in\FN(\psi)$. Hence
 $d\in C$ by assumption, as required.
\end{itemize}

\item Case $\phi=\Box_b\psi$:
\begin{itemize}
\item

  If $b\notin C$ and $a\neq *$, then $(\Box_b\psi)^{B}_C(a)_n=
  \epsilon\vee\Diamond_{\newletter c}\vee
  \bigvee_{d\in B\cup\{a\}}\Diamond_{d}\top$.  
By assumption,
  $B\subseteq C$, so $b\notin B$.
   Suppose that $d$ is the first
  freely occuring letter in $v$ that is not contained in $B$ and that
  $d\in \FN(\Box_b\psi)$. Then we have $d\in C$ by assumption, so
  $d\neq b$. Hence, $b$ is not the first letter of $v$.
  Let $v=\sigma v'$. If $\sigma\in\names$, then
  $\sigma\in B\cup\{a\}$ by the assumptions on $B$ and $a$.
  Hence $S,v\models\Box_b\psi$ if and only if
  $S,v\models\epsilon\vee\Diamond_{\newletter c}\vee
  \bigvee_{d\in B\cup\{a\}}\Diamond_{d}\top$, as required.

\item

  If $b\notin C$ and $a=*$, then we have 
 \begin{align*}
    S,v\models\Box_b \psi&\Leftrightarrow
   v=bv'\text{ implies }S,v'\models\psi\\
   &\Leftrightarrow    v=bv'\text{ implies }S,v'
   \models\psi^\emptyset_{\mathsf{FN}(\psi)}(*)_{n-1}\\
   &\Leftrightarrow S,v\models \Box_b(\psi^\emptyset_{\mathsf{FN}(\psi)}(*)_{n-1})
   \end{align*}
 where the second equivalence holds by the inductive hypothesis since
 $|v'|<n-1$ and since the assumptions of the inductive hypothesis on
 the name sets hold trivially for $\emptyset$ and
 $\mathsf{FN}(\psi)$.
  
\item If $b\in C$ and $b\in B$, then the induction continues as in the
  last item of the previous~case.
  
\item If $b\in C$ and $b\notin B$, then 
$(\Box_b\psi)^{B}_C(a)_n=\chi(b)^{B}_C(a)_{n-1}$.

\begin{itemize}

\item[--] If $a=*$, then we have 
 \begin{align*}
    S,v\models\Box_b \psi&\Leftrightarrow
   v=bv'\text{ implies }S,v'\models\psi\\
   &\Leftrightarrow    v=bv'\text{ implies }S,v'
   \models\psi^\emptyset_{\mathsf{FN}(\psi)}(*)_{n-1}\\
   &\Leftrightarrow S,v\models \Box_b(\psi^\emptyset_{\mathsf{FN}(\psi)}(*)_{n-1})\\
   &\Leftrightarrow S,v\models \chi(b)^{B}_C(a)_{n-1},
 \end{align*}
 where the second equivalence holds by the inductive hypothesis since
 $|v'|<n-1$ and since the assumptions of the inductive hypothesis on
 the name sets hold trivially for $\emptyset$ and
 $\mathsf{FN}(\psi)$.

\item[--] If $a=b$, then we have
 \begin{align*}
    S,v\models\Box_a \psi&\Leftrightarrow
   v=av'\text{ implies }S,v'\models\psi\\
   &\Leftrightarrow S,v\models \epsilon\vee\Diamond_{\scriptnew c}\top
   \vee\bigvee\nolimits_{d\in B}\Diamond_d\top\vee\Diamond_a(\psi^{\emptyset}_{\FN(\psi)}(*)_{n-1})\\
   &\Leftrightarrow S,v\models \chi(b)^{B}_C(a)_{n-1},
 \end{align*}
 where the second equivalence is shown as follows. We have
 $v=\epsilon$ or $v=\newletter c v'$ for some letter $c$ and bar
 string $v'$, or $v=dv'$ for some letter $d$ such that $d=a$ or
 $d\neq a$ and $d\in B$ by assumption. All cases are trivial, except
 the case where $v=dv'$ and $d=a$. In this case we have
 $S,v'\models \psi$ if and only if
 $S,v'\models \psi^{\emptyset}_{\FN(\psi)}(*)_{n-1}$ by the inductive
 hypothesis since $|v'|< |n-1|$ and since the assumptions of the
 inductive hypothesis on the name sets hold trivially for $\emptyset$
 and $\mathsf{FN}(\psi)$.

\item[--] If $*\neq a\neq b$, then we have
 \begin{align*}
    S,v\models\Box_b \psi&\Leftrightarrow
   v=bv'\text{ implies }S,v'\models\psi\\
   &\Leftrightarrow S,v\models \epsilon\vee\Diamond_{\scriptnew c}\top
   \vee\bigvee\nolimits_{d\in B\cup\{a\}}\Diamond_d\top\\
   &\Leftrightarrow S,v\models \chi(b)^{B}_C(a)_{n-1},
 \end{align*}
 where the second equivalence holds since we have $v=\epsilon$ or
 $v=\newletter c v'$ for some letter $c$ and bar string $v'$, or
 $v=dv'$ for some letter $d$ such that $d=a$ or $d\neq a$ and $d\in B$
 by assumption. Again, all cases are trivial. Note that in cases where
 $d \neq a$ and $d \in B$ we have $d \neq b$ since we are in the case
 where $b\notin B$. Thus
 $\Diamond_b(\psi^{\emptyset}_{\FN(\psi)}(*)_{n-1})$ need not occur in
 the second line.\qedhere
\end{itemize}
 \end{itemize}
\end{itemize}
\end{proof}

\subsection*{Proof of~\autoref{lemma:name-restriction}}
\begin{proof}
  We first note that the formulae in the claim are actually defined,
  and that
\begin{align*}
(\pi\cdot\phi)^\emptyset_B(a)^\emptyset_{|w|+1}&=
(\pi\cdot\phi)^\emptyset_B(a)_{|w|+1} &\text{and}\qquad 
(\pi'\cdot\phi)^\emptyset_B(a)^\emptyset_{|w|+1}&=
(\pi'\cdot\phi)^\emptyset_B(a)_{|w|+1}.
\end{align*}
We distinguish cases.
\begin{enumerate}
\item If the first freely occurring letter in $w$ that is not
  contained in $B$ is also not contained in $A\cup A'$ (or if no such
  name exists), then we have $a=*$. Since $*\notin\emptyset$ and $\emptyset\subseteq B$,
  and since the
  first freely occurring letter in $w$ that is not contained in $B$ is
  also not contained in $A\cup A'$ and hence neither contained in $A=\FN(\pi\cdot \phi)$
  nor in $A'=\FN(\pi'\cdot \phi)$, we have the following equivalences
  by \autoref{lemma:guess-free}:
\begin{align*}
  S,w\models\pi\cdot\phi
  &\iff
  S,w\models(\pi\cdot\phi)^\emptyset_B(a)_{|w|+1}, \\
  S,w\models\pi'\cdot\phi
  &\iff
  S,w\models (\pi'\cdot\phi)^\emptyset_B(a)_{|w|+1}. 
\end{align*}
As $a\notin A'\cup A$, we also have 
\[
  (\pi\cdot \phi)^{B}_A(a)^{A'}_{|w|+1}
  =
  (\pi\cdot \phi)^{B}_A(a)_{|w|+1}
  \qquad\text{and}\qquad
  (\pi'\cdot \phi)^{B}_{A'}(a)_{|w|+1}^A
  =
  (\pi'\cdot \phi)^{B}_{A'}(a)_{|w|+1}.
\]
Since $a=*$, the assumptions of~\autoref{lemma:guess-free} on the
argument sets $B$ and $A$ (and $A'$, respectively), hold trivially: We
have 
$B\subseteq A$, $B\subseteq A'$ and since $\FN(\pi\cdot\phi)=A$ and
$\FN(\pi\cdot\phi)=A'$, the assumptions on the first freely occuring
letter that is not in $B$ hold trivially. Hence we also have the
following equivalences by \autoref{lemma:guess-free}:
\begin{align*}
  S,w\models\pi\cdot\phi
  &\iff
  S,w\models (\pi\cdot \phi)^{B}_A(a)_{|w|+1},
  \\
  S,w\models\pi'\cdot\phi
  &\iff
  S,w\models (\pi'\cdot \phi)^{B}_{A'}(a)_{|w|+1}.
\end{align*}
Thus, we are done.
\item Otherwise, we have by the assumptions of the lemma that $a$ is
  the first freely occurring letter in $w$ such that $a\notin B$, and
  $a\in A\cup A'$.  We assume w.l.o.g.~that $a\in A$; then
  $a\notin A'$ since $a\notin B$. Towards
  applying~\autoref{lemma:guess-free} to the formulae $\pi'\cdot\phi$
  and $(\pi'\cdot\phi)^\emptyset_B(a)_{|w|+1}$, suppose that $d$ is
  the first letter that occurs freely in~$w$ (and which is not
  contained in $\emptyset$) and that
  $d\in\FN(\pi'\cdot\phi)=A'\subseteq A\cup A'$; we then have to show
  $d\in B$.  But since $a\notin A'$, we have $d\neq a$, so~$d$ occurs
  freely in~$w$ strictly before~$a$, whence $d\in B$ by the
  assumptions on~$a$. Hence,  by~\autoref{lemma:guess-free} we have
  \[
    S,w\models\pi'\cdot\phi
    \iff
    S,w\models(\pi'\cdot\phi)^\emptyset_B(a)_{|w|+1}.
  \]
  Since $a\notin A'$, we furthermore have
  \[
    (\pi\cdot \phi)^{B}_A(a)_{|w|+1}^{A'}
    =
    (\pi\cdot \phi)^{B}_A(a)_{|w|+1}.
  \]
  Towards applying~\autoref{lemma:guess-free} to the formulae
  $\pi\cdot \phi$ and $(\pi\cdot \phi)^{B}_A(a)_{|w|+1}$, suppose
  that~$d$ is the first letter that occurs freely in $w$ and which is
  not contained in $B$ and that $d\in\FN(\pi\cdot\phi)=A$.  Then
  $d\in A$, trivially.   Also, every letter that occurs freely in
  $w$ before the first free occurrence of $a$ in $w$ is contained in
  $B$ by the assumptions of this lemma.  Hence,
  by~\autoref{lemma:guess-free} we have
  \[
    S,w\models\pi\cdot \phi \iff S,w\models(\pi\cdot
    \phi)^{B}_A(a)_{|w|+1}.
  \]
  This finishes the ``$\Rightarrow$''-direction of the proof.

  For ``$\Leftarrow$'', it follows from what we have just shown above
  that 
  \begin{align*}
    S,w\models (\pi\cdot\phi)^{B}_A(a)_{|w|+1}^{A'}\wedge(\pi'\cdot\phi)^\emptyset_B(a)^\emptyset_{|w|+1}
    \qquad\text{implies}\qquad
    S,w\models 
    (\pi\cdot\phi)\wedge (\pi'\cdot\phi).
  \end{align*}
  It remains to show that
  \begin{align*}
    S,w\models (\pi'\cdot\phi)^{B}_{A'}(a)_{|w|+1}^A\wedge(\pi\cdot\phi)_B^\emptyset(a)_{|w|+1}^\emptyset
    \qquad\text{implies}\qquad
    S,w\models 
    (\pi\cdot\phi)\wedge (\pi'\cdot\phi).
  \end{align*}
  Since $a\in A$, we have
  $(\pi'\cdot\phi)^{B}_{A'}(a)_{|w|+1}^A=\bot$, and so we are
  done.\qedhere
\end{enumerate}
\end{proof}
We assume w.l.o.g.\ that formulae~$\phi$ are~\emph{clean}, i.e.~bound
fixpoint variables in~$\phi$ are mutually distinct and distinct from
all free variables (this can be ensured by suitable renaming of bound
variables; in particular, this may be necessary when a fixpoint
$\mu X.\,\phi$ is unfolded to $\phi[\mu X.\phi/X]$), and then denote
by $\theta(X)$ \emph{the} subformula $\mu X.\,\psi$ of $\phi$ that
binds a bound variable~$X$. 
\begin{defn}[Closure]
We define the \emph{closure} $\clos(\phi)$ of a formula
 $\phi\in\formulae$
recursively by
\begin{align*}
\clos(\epsilon)&=\{\epsilon\} &
\clos(\negepsilon)&=\{\negepsilon\}\\
\clos(\psi_1\vee\psi_2)&=\{\theta^*(\psi_1\vee\psi_2)\}\cup\clos(\psi_1)\cup\clos(\psi_2) &
\clos(X)&=\emptyset\\
\clos(\psi_1\wedge\psi_2)&=\{\theta^*(\psi_1\wedge\psi_2)\}\cup\clos(\psi_1)\cup\clos(\psi_2) &
\clos(\mu X.\,\psi)&=\{\theta^*(\mu X.\,\psi)\}\cup\clos(\psi)\\
\clos(\Diamond_a\psi)&=\{\theta^*(\Diamond_a\psi)\}\cup\clos(\psi) &
\clos(\Box_a\psi)&=\{\theta^*(\Box_a\psi)\}\cup\clos(\psi)\\
\clos(\Diamond_{\scriptnew a}\psi)&=\{\theta^*(\Diamond_{\scriptnew a}\psi)\}\cup\clos(\psi) &
\clos(\Box_{\scriptnew a}\psi)&=\{\theta^*(\Box_{\scriptnew a}\psi)\}\cup\clos(\psi)
\end{align*}
where $\theta^*(\phi)$ denotes the formula that is obtained from $\phi$ by
repeatedly replacing free occurrences of fixpoint variables $X$ in $\phi$ with 
$\theta(X)$.
\end{defn}
\begin{fact}
For all formulae $\phi$, we have $|\clos(\phi)|\leq |\phi|$. 
\end{fact}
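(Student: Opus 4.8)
The plan is to prove the bound by structural induction on $\phi$, reading $|\phi|$ as the syntactic length of $\phi$, i.e.\ the number of subformula occurrences (equivalently, the number of nodes in its syntax tree, counting variable occurrences as leaves). The crucial observation is that the recursive clauses defining $\clos$ mirror the syntax tree of $\phi$ exactly: every clause recurses only into the immediate proper subformulae and contributes at most one formula of the shape $\theta^*(\cdot)$ to the result, while the clause for a bound variable $X$ contributes nothing. Hence each syntax-tree node accounts for at most one element of $\clos(\phi)$, and variable nodes account for none, which already gives the desired slack.

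Concretely, I would take as induction hypothesis $|\clos(\psi)|\le|\psi|$ for all immediate subformulae $\psi$ of $\phi$ and argue case by case. For the base cases $\phi\in\{\epsilon,\negepsilon\}$ we have $|\clos(\phi)|=1=|\phi|$, and for $\phi=X$ we have $|\clos(X)|=0\le 1=|X|$. For a binary connective, say $\phi=\psi_1\vee\psi_2$, the closure is $\{\theta^*(\phi)\}\cup\clos(\psi_1)\cup\clos(\psi_2)$, so its cardinality is at most $1+|\clos(\psi_1)|+|\clos(\psi_2)|\le 1+|\psi_1|+|\psi_2|=|\phi|$ by the induction hypothesis; the conjunction case is identical. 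For the unary cases, namely the modalities $\hearts_\sigma\psi$ and the fixpoint $\mu X.\,\psi$, the closure is $\{\theta^*(\phi)\}\cup\clos(\psi)$, whose cardinality is bounded by $1+|\clos(\psi)|\le 1+|\psi|=|\phi|$. In every case the inequality follows because a set union can only decrease cardinality, so ``at most one fresh element per clause'' is all that is needed.

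There is no serious obstacle here; the only point requiring a word of care is that the closure is built from the decorated formulae $\theta^*(\cdot)$ rather than from the literal subformulae, so one must confirm that this decoration does not disturb the node-by-node correspondence driving the count. This is immediate: $\theta^*$ only substitutes the binding fixpoint $\theta(X)=\mu X.\,\psi$ for free occurrences of each variable~$X$ and is applied pointwise to the already-computed closure element, so it neither triggers additional recursion nor merges the recursion branches for distinct subformulae. Consequently each recursive invocation of $\clos$ still corresponds to exactly one node of the syntax tree of $\phi$, and summing the per-node contributions yields $|\clos(\phi)|\le|\phi|$. (If instead one reads $|\phi|$ literally as $|\clos(\phi)|$ per the main-text convention, the statement holds with equality and the same induction specializes trivially.)
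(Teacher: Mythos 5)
Your proof is correct. The paper states this as a bare \emph{Fact} with no accompanying proof, and your induction---at most one new element $\theta^*(\cdot)$ per syntax-tree node, and none for variable nodes---is exactly the evident intended argument; your closing observation that $\theta^*$ is applied pointwise to the element added at each node, without triggering extra recursion, handles the one point needing care (note only that, since $\theta(X)$ depends on the ambient binding structure of the fixed clean formula, the induction is properly read as running over subformula occurrences of that formula, or equivalently as the direct inclusion $\clos(\phi)\subseteq\{\theta^*(\psi)\mid\psi\text{ a subformula occurrence of }\phi\}$, which your per-node count establishes).
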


\subsection*{Proof of \autoref{thm:mubar-ernna}}
We construct an ERNNA that uses macro-states 
that are built over micro-states $\pi\cdot(\psi^B_C(a))$, that
is, annotated formulae; here $\psi$ is a formula, $\pi$ a permutation, 
$B,C\subseteq \names$ are sets of names and 
$a\in(\names\setminus B)\cup\{*\}$;
we refer to $a$ as the \emph{distinguishing letter (component)} of the
respective annotated formula.
We will encode formulae of
the shape 
$(\pi\cdot\psi)^{\pi\cdot B}_{\pi\cdot C}(\pi\cdot a)_{n}$, 
as used in~\autoref{lemma:name-restriction},
by micro-states $\pi\cdot(\psi^{B}_{C}(a))$.
A macro-state will consist of a set of such micro-states plus one name,
the so-called \emph{reserve name}.

Recall that $\bot$ and $\top$ abbreviate $\epsilon\wedge\negepsilon$
and $\epsilon\vee\negepsilon$, respectively.

\begin{defn}[Name restriction]\label{defn:nameres}
Given a closed formula $\phi$, we put
\[
  \clo=\clos(\phi)\cup\{\Diamond_\barname\psi\mid \Box_\barname\psi\in\clos(\phi)\}
  \cup \{\epsilon,\bot\}\cup \{\Diamond_b\top\mid b\in\BN(\phi)\},
\]
noting $|\clo|\leq 4m$.
Furthermore, we put%
\begin{align*}
\mathsf{formulae}&=\{\psi^{B}_C(a)\mid \psi\in\clo, B\subseteq \N(\phi),C\subseteq\FN(\psi),
a\in C\cup\{*\}\},
\end{align*}

noting that the cardinality of $\mathsf{formulae}$ is linear in~$m$
and exponential in~$k$; specifically,
$|\mathsf{formulae}|\leq |\clo|\cdot (|\N(\phi)|+1)\cdot
2^{2|\N(\phi)|}\leq 2^{2k}\cdot 4m(k+1)\leq
2^{2m}\cdot 5m^2$ since we have $|\FN(\psi)|\leq |\N(\phi)|$ for all
$\psi\in\clo$ and since $k\leq m$.
Given a set $\Delta$ of annotated formulae (of the
general shape $\pi\cdot\psi^{B}_C(a)$) such that there are
$\psi\in\clo$, $\pi,\pi'\in G$, $B\subseteq\N(\phi)$,
$C\subseteq\FN(\psi)$ and $a\in C\cup\{*\}$ such that
$\pi\cdot A\neq \pi'\cdot A$ where $A:=\FN(\psi)\cap C$, and such that
$\Delta$ contains both $\pi\cdot(\psi^B_C(a))$ and
$\pi'\cdot(\psi^B_C(a))$, we say that $\Delta$ contains two
\emph{instances} of an annotated formula, and define a \emph{name
  restriction step}, following~\autoref{lemma:name-restriction} as
follows.  Put $D=(\pi\cdot A)\cap (\pi'\cdot A)$, $E=\pi^{-1}\cdot D$
(so $E\subseteq A$ and $\pi\cdot E = D$), and $E'=(\pi')^{-1}\cdot D$
(so $E'\subseteq A$ and $\pi'\cdot E' = D$).  For each $b\in \names$
such that $\pi(b)\in (\pi\cdot A)\setminus (\pi'\cdot A)$, and for $b=*$, put
\begin{align*}
\Delta(b)&=(\Gamma'\setminus\{\pi\cdot(\psi^B_C(a)),\pi'\cdot(\psi^B_C(a))\})\cup
\{\pi\cdot(\psi^{B\cap E}_{A}(b)),\pi'\cdot(\psi^\emptyset_{E'}(b))\}
\end{align*}
and for each $b\in\names$ such that $\pi'(b)\in (\pi'\cdot A)\setminus (\pi\cdot A)$, put
\begin{align*}
\Delta(b)&=(\Gamma'\setminus\{\pi\cdot(\psi^B_C(a)),\pi'\cdot(\psi^B_C(a))\})\cup
\{\pi'\cdot(\psi^{B\cap E'}_{A}(b)),\pi\cdot(\psi^\emptyset_{E}(b))\}.
\end{align*}
Then we say that $\Delta$ \emph{restricts} (in one step) to the set 
$\{\Delta(b)\mid b\in (((\pi\cdot A)\cup (\pi'\cdot A'))\setminus D)\cup\{*\}\}$.
The sets $\Delta(b)$ may in turn contain two instances of a (differently!) annotated formula.
Let $\Phi$ be a set of sets of annotated formulae.
The set of sets of annotated formulae to which $\Phi$ \emph{restricts} is obtained
by repeatedly applying name restriction steps to all elements in the set that contain
two instances of an annotated formula. 
\end{defn}

\begin{lemma}
Let $\Phi=\{\Delta_1,\ldots,\Delta_m\}$ be a set of sets of annotated formulae.
Then the name restriction process defined above terminates after finitely many name restriction
steps and results in a set $\mathsf{rest}(\Phi)=\{\Delta'_1,\ldots,\Delta'_o\}$ of sets of annotated formulae
such that no $\Delta'_i$ contains two instances of an annotated formula.
\end{lemma}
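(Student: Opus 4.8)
The plan is to treat the two assertions separately. The instance-freeness of $\mathsf{rest}(\Phi)$ is essentially built into the definition of the process: a name restriction step is only ever applied to an element that contains two instances of an annotated formula, and the process is iterated precisely until no element with this property remains. Hence, if the process halts at all, the resulting set of sets of annotated formulae cannot contain any element with two instances. The whole difficulty therefore lies in proving \emph{termination}, which I would establish by exhibiting a well-founded measure that strictly decreases with every name restriction step.

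Concretely, I would attach to each set $\Delta$ of annotated formulae the natural number $\mu(\Delta)=\sum_{\rho\cdot(\psi^B_C(a))\in\Delta}|\FN(\psi)\cap C|$, i.e.\ the total size of the relevant free-name sets $A=\FN(\psi)\cap C$ of its members (a finite sum, since $\Delta$ is finite). The key claim is that each set $\Delta(b)$ produced by one restriction step satisfies $\mu(\Delta(b))<\mu(\Delta)$. To see this, recall that a step acts on two members $\pi\cdot(\psi^B_C(a))$ and $\pi'\cdot(\psi^B_C(a))$ with $\pi\cdot A\neq\pi'\cdot A$, where $A=\FN(\psi)\cap C$. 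Since $|\pi\cdot A|=|\pi'\cdot A|=|A|$ and the two sets differ, their intersection $D$ is a proper subset of each, so $|D|<|A|$; consequently $|E|=|E'|=|D|<|A|$. The step removes these two members, each contributing $|A|$ to $\mu$, and replaces them by one member whose relevant set is again $A$ (contributing $|A|$) and one member whose relevant set is $E$ or $E'$ (contributing $|D|$). The net change is $|D|-|A|<0$; and because $\Delta(b)$ is a \emph{set}, any accidental coincidence of a new member with an existing one can only lower $\mu$ further. The same computation covers both branches of \autoref{defn:nameres} and the case $b=*$.

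Finally, I would lift this to the iteration on $\Phi$. A single restriction step replaces one element $\Delta\in\Phi$ by finitely many elements $\Delta(b)$, each of strictly smaller $\mu$-value. Passing to the multiset $\{\!\{\mu(\Delta)\mid\Delta\in\Phi\}\!\}$ of natural numbers, the step removes one occurrence of $\mu(\Delta)$ and adds only values strictly below it, which is a strict descent in the Dershowitz--Manna multiset ordering over $(\Nat,<)$. As this ordering is well-founded, no infinite sequence of steps is possible, so the process terminates; combined with the first paragraph this yields the lemma. I expect the only genuinely delicate point to be the cardinality argument $|D|<|A|$ together with the verification that one of the two replacement formulae retains the full relevant set $A$, so that it is the \emph{other} one, living on the common names $D$, that strictly shrinks; once this is pinned down, the multiset bookkeeping is routine. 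A useful sanity check along the way is that no step introduces names outside those already occurring in $\Phi$, so the entire process also stays within a finite state space.
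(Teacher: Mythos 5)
Your proof is correct and takes essentially the same route as the paper: a cardinality measure on the annotation name sets that strictly decreases with each restriction step, with strictness resting on exactly the same observation, namely that $\pi\cdot A\neq\pi'\cdot A$ with $|\pi\cdot A|=|\pi'\cdot A|$ forces $D=(\pi\cdot A)\cap(\pi'\cdot A)\subsetneq\pi'\cdot A$, hence $|E'|=|D|<|A|$ (the paper sums $|B|+|C|$ over the members where you sum $|\FN(\psi)\cap C|$, so in your variant the entire decrease is carried by the formula re-annotated with $E'$ resp.\ $E$, which you correctly identify). Your explicit Dershowitz--Manna multiset lifting to the set-of-sets level, handling the finite branching over choices of $b$, makes precise a bookkeeping step that the paper leaves implicit, but this is a refinement of the same argument rather than a different one.
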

\begin{proof}
  Each name restriction step removes two instances
  $\pi\cdot(\psi^B_C(a))$ and $\pi'\cdot(\psi^B_C(a))$ of an annotated
  formula from some set $\Delta_i$ and replaces it by
  $\pi\cdot(\psi^{B\cap E}_{A}(b))$ and
  $\pi'\cdot(\psi^\emptyset_{E'}(b))$ or by
  $\pi'\cdot(\psi^{B\cap E'}_{A}(b))$ and
  $\pi\cdot(\psi^\emptyset_{E}(b))$ for suitable $b, A, E$ and
  $E'$. We consider just the first case of restriction, where
  $\pi(b)\in (\pi\cdot A)\setminus (\pi'\cdot A)$; the termination
  argument for the second case (where
  $\pi'(b)\in (\pi\cdot A)\setminus (\pi'\cdot A)$) is
  analogous. Indeed we show that the sum of the sizes of the index
  sets yields a termination measure for the restriction procedure,
  i.e.~that $|B|+|C|+|B|+|C|>|E\cap B|+|A|+|E'|$. Since clearly
  $|B|\ge|E\cap B|$ and $|C|\ge|A|$ (recall that
  $A=\FN(\psi)\cap C\subseteq C$), this follows once we show that
  $E'\subsetneq C$. To this end, note that by the assumptions of the
  restriction step, $\pi\cdot A\neq \pi'\cdot A$ and
  $D=(\pi\cdot A)\cap (\pi'\cdot A)$. Therefore,
  $D\subsetneq \pi'\cdot A$, and hence
  $E'=(\pi')^{-1}\cdot D\subsetneq A\subseteq
  C$.  
\end{proof}

\noindent Next, we define what it means for an annotation of a formula to
be correct with respect to a given barstring. Correctness in this sense
will then be used to ensure applicability of~\autoref{lemma:name-restriction}
and~\autoref{lemma:guess-free}.

\begin{defn}
Given a barstring $w\in\barstrings$ and an annotated
formula $\psi^B_C(a)_n$, we say that the annotation
of $\psi$ with $B$, $C$, $a$ and $n$ is \emph{correct}
(w.r.t. $w$) if the following hold.
\begin{enumerate}
\item $B\subseteq C$, $a\in (\names\setminus B)\cup\{*\}$ and $|w|<n$.
\item If $a\neq *$, then~$a$ has a free occurrence in~$w$, and is the
  first freely occurring letter in $w$ that is not contained in $B$.
\item If $a=*$ and there is some freely occurring letter in $w$ that
  is not contained in $B$, then let $d$ be the first such
  letter. Then $d\in\FN(\psi^B_C(a)_n)$ implies $d\in C$.
\end{enumerate}
We say that a set $\Gamma$ of annotated formulae is \emph{correctly
  annotated} w.r.t.~$w$ if for all annotated formulae
$\pi\cdot((\psi)^B_C(a))\in\Gamma$, the annotation of $\pi\cdot\psi$ with $\pi\cdot B$,
$\pi\cdot C$, $\pi\cdot a$ and $|w|+1$ is correct w.r.t.~$w$. Moreover, $\Gamma$ is
\emph{satisfied} by $w$ (in context $S$)
if $S,w\models (\pi\cdot \psi)^{\pi\cdot B}_{\pi\cdot C}(\pi\cdot
a)_{|w|+1}$ for all $\pi\cdot((\psi)^B_C(a))\in\Gamma$.
\end{defn}

\begin{lemma}{\label{lemm:rest-annot}}
  Let $\Gamma$ be a set of annotated formulae, let $S$ be a context
  and let $w$ be a bar string such that $\Gamma$ is correctly
  annotated w.r.t.~$w$ and satisfied by~$w$ (in context $S$). Then
  there is some $\Delta\in\mathsf{rest}(\{\Gamma\})$ that is correctly
  annotated w.r.t.~$w$ and satisfied by $w$ (in context~$S$).
\end{lemma}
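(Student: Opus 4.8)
The plan is to induct on the number of name restriction steps needed to reach $\mathsf{rest}(\{\Gamma\})$, which is finite by the preceding lemma. If $\Gamma$ already contains no two instances of an annotated formula, then $\mathsf{rest}(\{\Gamma\})=\{\Gamma\}$ and we take $\Delta:=\Gamma$, which is correctly annotated and satisfied by hypothesis. Otherwise it suffices to establish a \emph{single-step claim}: whenever a correctly annotated and satisfied $\Gamma$ admits a restriction step, at least one of the resulting sets $\Delta(b)$ is again correctly annotated w.r.t.\ $w$ and satisfied by $w$ in context $S$. Chaining the single-step claim along the terminating restriction process then yields the desired $\Delta\in\mathsf{rest}(\{\Gamma\})$, since the formulae outside the two manipulated instances are left untouched at every step.

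For the single-step claim, write $\Gamma=\Gamma_0\cup\{\pi\cdot(\psi^B_C(a)),\pi'\cdot(\psi^B_C(a))\}$ with $\pi\cdot A\neq\pi'\cdot A$ and $A=\FN(\psi)\cap C$ (note $A=C$, as $C\subseteq\FN(\psi)$ for the annotated formulae in play). Reading the common base as the single formula $\phi:=\psi^B_C(a)_{|w|+1}$ and using equivariance of the operation of \autoref{def:restriction}, satisfaction of the two instances by $w$ says precisely $S,w\models\pi\cdot\phi$ and $S,w\models\pi'\cdot\phi$, hence $S,w\models(\pi\cdot\phi)\wedge(\pi'\cdot\phi)$. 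I then apply \autoref{lemma:name-restriction} to $\phi$ with $\pi,\pi'$. Letting $a_0$ be the distinguishing letter w.r.t.\ $D=(\pi\cdot A)\cap(\pi'\cdot A)$ and $w$ (with $a_0=*$ if none exists), the lemma rewrites the conjunction as a disjunction of two conjunctions, and the superscript convention ($\chi^{B'}_{C'}(c)^{D'}_n=\bot$ when $c\in D'$) leaves exactly one disjunct. Which one survives is governed by whether $a_0$ lies in $\pi\cdot A$ or in $\pi'\cdot A$, and this is exactly the case distinction ($a_0=\pi(b)$ versus $a_0=\pi'(b)$, or $a_0=*$) that indexes the two families $\Delta(b)$ in \autoref{def:restriction}. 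Selecting the matching $b$, the surviving disjunct is the conjunction of the two annotated formulae of $\Delta(b)$, so $\Delta(b)$ is satisfied by $w$; for correct annotation one checks, using the inclusions $B\cap E\subseteq A$ and $\emptyset\subseteq E'$ (with $E=\pi^{-1}\cdot D$, $E'=(\pi')^{-1}\cdot D$) for condition~(1), that the guessed $b$ meets the conditions of a correct annotation for both new formulae.

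The hard part will be these last two identifications. First, the conjuncts produced by \autoref{lemma:name-restriction} are \emph{nested} restrictions of $\phi=\psi^B_C(a)_{|w|+1}$, whereas the formulae of $\Delta(b)$ are \emph{single} restrictions $\psi^{B\cap E}_A(b)$ and $\psi^\emptyset_{E'}(b)$ of the original base $\psi$; matching them requires a compatibility-of-restrictions statement---intersecting the original allowed-set $B$ with the shared names $E$ to $B\cap E$, and collapsing the support to $E'\subsetneq A$ (the source of the strict decrease of the measure)---proved by induction along the recursive definition in \autoref{def:restriction}. Second, since a \emph{single} guessed name $b$ annotates both new formulae, whose allowed-sets and supports differ, I must verify that this one choice simultaneously yields correct annotations w.r.t.\ $w$; I expect to obtain this by routing each correctly annotated restriction through the plain base via \autoref{lemma:guess-free}, so that correctness of an annotation is exactly the hypothesis needed to equate its satisfaction by $w$ with that of $\psi$ itself. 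Carrying out this bookkeeping, and checking that the $\bot$-guards discard precisely the non-matching disjunct, is where the real work lies.
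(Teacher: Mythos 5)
Your skeleton matches the paper's: reduce to a single restriction step (chained along the terminating restriction process), and show that for each step there is a choice of new distinguishing letter preserving both correctness of annotations and satisfaction, with \autoref{lemma:name-restriction} as the engine. But the execution diverges exactly where you flag ``the real work'', and the divergence is a genuine gap rather than deferred bookkeeping. You apply \autoref{lemma:name-restriction} to the transformed formula $\phi=\psi^B_C(a)_{|w|+1}$, so its sets $A=\FN(\pi\cdot\phi)$, $A'=\FN(\pi'\cdot\phi)$ and their intersection are free-name sets of the \emph{already restricted} formula; these do not coincide with the step's sets $\pi\cdot A$, $\pi'\cdot A$, $D$ computed from $\psi$ and $C$ in \autoref{defn:nameres} (note that $\FN(\psi^B_C(a)_n)$ is not $C$: the clause for $\Diamond_b$ with $b\in C\setminus B$ resets the support to the full $\FN$ of the subformula, and the $\Box$-clauses introduce $B\cup\{a\}$). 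So even your claim that the surviving disjunct is indexed exactly as the families $\Delta(b)$ already depends on the ``compatibility of nested restrictions'' lemma, which is nowhere in the paper and whose formulation is delicate (the word-length indices $n$ also shift through the recursion, and the $\Box$-clauses of \autoref{def:restriction} do not compose in any evident way). The paper's proof avoids composing restrictions altogether: since the annotations are mere decorations and correctness of an annotation is precisely the hypothesis package of \autoref{lemma:guess-free}, satisfaction of any correctly annotated instance is equivalent to satisfaction of the \emph{plain} formula $\pi\cdot\psi$; hence satisfaction transfers for free once correctness of the new annotations is established, and the entire content of the paper's proof is the direct combinatorial verification of correctness conditions (1)--(3) for a suitably chosen letter.

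This points to the second gap: your choice of letter and your plan for correctness. You take $a_0$ to be the distinguishing letter w.r.t.\ $D$, but the new allowed-set of the first replacement formula is $B\cap E$, and $\pi\cdot(B\cap E)=\pi\cdot B\cap D$ may be a proper subset of $D$; a letter in $D\setminus(\pi\cdot B)$ --- for instance the old distinguishing letter $\pi\cdot a$ --- can occur freely in $w$ before $a_0$, which violates condition (2) of correct annotation for $\pi\cdot(\psi^{B\cap E}_A(a_0))$. The paper instead chooses $\pi\cdot b$ as the first freely occurring letter of $w$ not in $\pi\cdot(B\cap E)$ but in $\pi\cdot\FN(\psi)$, and sets $b=*$ otherwise; the choice is tailored to the new allowed-set, not to $D$. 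Finally, obtaining correctness ``by routing through \autoref{lemma:guess-free}'' is circular: that lemma \emph{consumes} the correctness conditions as hypotheses and yields only the satisfaction equivalence; correctness itself has to be verified directly from the old correctness assumption, by the case analysis on conditions (2) and (3) (tracking whether the first relevant free letter is the old distinguishing letter or lies in the old $B$) that occupies the bulk of the paper's proof and that your proposal checks concretely only for condition (1).
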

\begin{proof}
It suffices to show that for each single name restriction step,
there is a choice of new distinguishing letter that preserves
correctness of annotations w.r.t. $w$ and satisfaction by $w$.
The latter follows by two applications of~\autoref{lemma:name-restriction},
given that $\Gamma$ is correctly annotated w.r.t. $w$.
To see that for each restriction step, there is a choice
of the distinguishing letter that preserves the correctness of annotations,
let $\pi\cdot(\phi^B_C(a))\in\Gamma$ be a formula that is replaced
by some name restriction step
with $\pi\cdot(\phi^{B\cap E}_A(b))$ or with
$\pi\cdot(\phi^{\emptyset}_E(*))$,
where $\pi'$ is some permutation such that $\pi\cdot A\neq \pi'\cdot A$,
where $A=\FN(\phi)\cap C$, and where
$D=(\pi\cdot A)\cap(\pi'\cdot A)$, $E=\pi^{-1}\cdot D$ and
$b\in ((\pi\cdot A)\setminus(\pi'\cdot A)) \cup\{*\}$.
If $w$ contains a free occurrence of a letter not in $\pi\cdot (B\cap E)$ but in 
$\pi\cdot \FN(\phi)$,
then let $\pi\cdot b$ be the first such letter; otherwise, let $b=*$.
In the case that $\pi\cdot(\phi^B_C(a))$ is replaced with $\pi\cdot(\phi^{B\cap E}_A(b))$,
where $\pi\cdot b$ is the distinguishing letter w.r.t. $w$ (or $\pi\cdot b=*$),
the annotation of $\pi\cdot\phi$ with $\pi\cdot(B\cap E)$, $\pi\cdot A$, $\pi\cdot b$ and $|w|+1$ 
is correct w.r.t. $w$ since
\begin{enumerate}
\item $\pi\cdot (B\cap E)\subseteq \pi\cdot A=\pi\cdot (\FN(\phi)\cap C)$ since
$B\subseteq C$ and $E\subseteq A$; also, if $b=*$, then
$\pi\cdot b\in ((\pi\cdot A\setminus(\pi\cdot (B\cap E)))\cap\FN(\pi\cdot\phi))\cup\{*\}$ trivially, and
if $b\neq *$, then
$\pi\cdot b\in ((\pi\cdot A\setminus(\pi\cdot (B\cap E)))\cap\FN(\pi\cdot\phi))\cup\{*\}$ since
we have $\pi\cdot b\in (\pi\cdot A)$ so that $\pi\cdot b\in(\pi\cdot \FN(\phi))=\FN(\pi\cdot \phi)$, and
furthermore $\pi\cdot b\in (\pi \cdot A)\setminus
(\pi' \cdot A)$, $D\subseteq (\pi'\cdot A)$, $E=\pi^{-1}\cdot D$,
and we have $\pi\cdot b\notin (\pi\cdot B)$ by the assumptions of this case.
Furthermore, $|w|+1>|w|$.
\item If $\pi\cdot b\neq *$, then $\pi\cdot b$ is
the first freely occurring letter in $w$ that is not contained in $\pi\cdot (B\cap E)$
by the assumptions of this case.
\item If $\pi\cdot b= *$ and there is some freely occurring letter in $w$ that is not
contained in $\pi\cdot (B\cap E)$, then let $d$ be the first such letter.
If $d\in\FN(\pi\cdot \phi)$, then we have to show $d\in (\pi\cdot A)$. But
$(\pi\cdot A)=\pi\cdot(\FN(\phi)\cap C)=\FN(\pi\cdot \phi)\cap(\pi\cdot C)$, so that we are done
if $d\in(\pi\cdot C)$.
Let $e$ be the first freely occuring letter in $w$ that is not in $(\pi\cdot B)$
and let $e\in\FN(\pi\cdot \phi)$. Then $e\in (\pi\cdot C)$ since $\Gamma$ is correctly annotated w.r.t. $w$ so that the annotation of $\pi\cdot \phi$ with 
$\pi\cdot B$, $\pi\cdot C$, $\pi\cdot a$ and $|w|+1$ is correct 
w.r.t. $w$.
If $d=e$, then we are done. Otherwise, we have $d\in (\pi\cdot B)$ 
since $d$ has a free occurrence before the first free occurrence of $e$ and $e$ is the
first freely occuring letter that is not contained in $(\pi\cdot B)$. 
We are done since $B\subseteq C$.
\end{enumerate}
In the case that $\pi\cdot(\phi^B_C(a))$ is replaced with $\pi\cdot(\phi^{\emptyset}_E(*))$,
the annotation of $\pi\cdot\phi$ with $(\pi\cdot\emptyset)=\emptyset$, $(\pi\cdot E)$, $(\pi\cdot *)=*$ and $|w|+1$ 
is correct w.r.t. $w$ since
\begin{enumerate}
\item $\emptyset\subseteq (\pi\cdot E)$, $*\in (((\pi\cdot E)\setminus\emptyset)\cap\FN(\pi\cdot \phi))\cup\{*\}$ 
and $|w|+1>|w|$, trivially.
\item The distinguishing letter component is $*$, so that there is nothing to show.
\item Let $d$ be the first freely occuring letter in $w$ that is not in $\emptyset$
(that is, $d$ is just the first freely occuring letter in $w$). Also let $d\in\FN(\pi\cdot \phi)$.
It remains to show that $d\in (\pi\cdot E)$ which follows,
since $E\subseteq A$, if $d\in (\pi\cdot A)=
\pi\cdot (\FN(\phi)\cap C)$, which in turn follows if $d\in(\pi\cdot C)$.
Let $e$ be the first freely occuring letter in $w$ that is not in $(\pi\cdot B)$
and let $e\in\FN(\pi\cdot \phi)$. 
Then $e\in (\pi\cdot C)$ since $\Gamma$ is correctly annotated w.r.t. $w$
so that the annotation of $\pi\cdot \phi$ with $\pi \cdot B$, $\pi\cdot C$, $\pi\cdot a$
and $|w|+1$ is correct w.r.t. $w$. 
If $d=e$, then we are done. 
Otherwise, we have $d\in (\pi\cdot B)$ 
since $d$ has a free occurrence before the first free occurrence of $e$ and $e$ is the
first freely occuring letter that is not contained in $(\pi\cdot B)$. 
We are done since $B\subseteq C$.\qedhere
\end{enumerate}
\end{proof}
\noindent We proceed to give the construction of automata from
formulae in detail:
\begin{defn}[Formula automaton]
Let $\phi$ be a closed formula.
We construct an ERRNA $A(\phi)=(Q,\to,s,f)$ with state set
\begin{align*}
Q=\big\{(\{\pi_1\cdot\phi_1,\ldots,\pi_n\cdot\phi_n\},
a)
\mid &\,\pi_1,\ldots,\pi_{n}\in G,\\
&
\{\phi_1,\ldots,\phi_n\}\subseteq\mathsf{formulae}, a\in \names\cup\{*\}\big\}, 
\end{align*}
Here,
the $\pi_i$ act on names as usual and we define $\pi_i(*)=*$.
Furthermore, we put $s=(\{\mathsf{id}\cdot(\phi^\emptyset_{\emptyset}(*))\},*)$.
The shape of an annotated formula $\pi\cdot(\psi^B_C(a))$ is just the
shape of $\psi$.  We define $\prestates$, the set of
\emph{propositional states}, to be the set of all automaton states
$(\Gamma,b)\in Q$ such that $\Gamma$ contains some annotated formula
of the shape $\psi_1\vee\psi_2$, $\psi_1\wedge\psi_2$, or
$\mu X.\,\psi_1$. On the other hand, a \emph{quasimodal state} is a state
that is not a propositional state but contains some annotated formula of the
shape $\negepsilon$ or $\Box_\sigma\psi$; we write $\quasistates$ for
the set of quasimodal states. Finally, a \emph{modal state} is a state that
is neither a propositional state nor a quasimodal state, and we write $\states$ for the
set of modal states.  That is, states contain neither propositional
formulae nor $\Box$-formulae nor $\negepsilon$. For
$(\Gamma,b)\in\prestates\,\cup\,\quasistates$, we put $f(\Gamma,b)=0$.

The outgoing transitions for
$(\Gamma,b)\in\prestates$ are defined as follows. Let
$\psi_1,\psi_2\in\clo$. For every annotated formula $\pi\cdot (\psi^B_C(a))\in\Gamma$
(where $\pi\in G$, $B\subseteq \N(\phi)$, $C\subseteq \FN(\psi)$ and
$a\in C\cup\{*\}$)
such that $\psi=\psi_1\vee\psi_2$, $\psi=\psi_1\wedge\psi_2$, or $\psi=\mu X.\,\psi_1$, we do
the following.
If $\psi=\psi_1\vee\psi_2$, then put
\begin{align*}
\Gamma'_i=((\Gamma\setminus\{\pi\cdot((\psi_1\vee\psi_2)_C^B(a))\}) \
\cup\{\pi\cdot((\psi_i)^B_C(a))\},b),
\end{align*}
for $i\in\{1,2\}$ and
add transitions $(\Gamma,b)\stackrel{\epsilon}{\rightarrow}(\Delta,b)$ 
for all $\Delta\in\mathsf{rest}(\{\Gamma'_1,\Gamma'_2\})$;
if $\psi=\psi_1\wedge\psi_2$, then put 
\begin{align*}
\Gamma'=((\Gamma\setminus\{\pi\cdot((\psi_1\wedge\psi_2)^B_C(a))\})\cup\{\pi\cdot ((\psi_1)^B_C(a)),\pi\cdot((\psi_2)^B_C(a))\},b)
\end{align*}
and add transitions $(\Gamma,b)\stackrel{\epsilon}{\rightarrow}
(\Delta,b)$ for all $\Delta\in\mathsf{rest}(\{\Gamma'\})$;
and
if $\psi=\mu X.\,\psi_1$ then put
\begin{align*}
\Gamma'=(\Gamma\setminus\{\pi\cdot((\mu
X.\,\psi_1)^B_C(a))\}) \cup\{\pi\cdot((\psi_1[\mu X.\,\psi_1/X])^B_C(a))\}
\end{align*}
and add a transition $(\Gamma,b)\stackrel{\epsilon}{\rightarrow}
(\Delta,b)$ for each $\Delta\in\mathsf{rest}(\{\Gamma'\})$.
In all these cases, we always have $(\Delta,b)\in Q$. 
For $(\Gamma,b)\in\quasistates$ we define the outgoing transitions of
$(\Gamma,b)$ as follows:
\begin{enumerate}
\item If $\Gamma$ contains some annotated $\Diamond$-formula, then let
  $\mathsf{s}_\Gamma$ initially consist of exactly the annotated
  formulae from~$\Gamma$ whose shape is neither $\Box_\sigma\psi$ nor
  $\negepsilon$.  For each
  $\pi\cdot((\Box_{\scriptnew a}\phi)^B_C(b))\in\Gamma$ such that
  there is some
  $\pi'\cdot((\Diamond_{\scriptnew c}\psi)^{B'}_{C'}(b'))\in \Gamma$,
  add $\pi\cdot((\Diamond_{\scriptnew a}\phi)^B_C(b))$ to
  $\mathsf{s}_\Gamma$.  For each
  $\pi\cdot ((\Box_a\phi)_C^B(b))\in\Gamma$ such that
  there is some annotated
  formula $\pi'\cdot((\Diamond_{a'}\psi)_{C'}^{B'}(b'))\in \Gamma$
  such that $\pi(a)=\pi'(a')$, we distinguish cases as follows,
  following the definition of annotated
  $\Box_{\pi(a)}$-formulae.
  \begin{itemize}
  \item If $a\notin C$ and $b\neq *$, then create a new 
  set $s_\Gamma(\theta)$ for each 
  \begin{align*}\theta\in\{\mathsf{id}
  \cdot(\epsilon^\emptyset_\emptyset(*)),
  \mathsf{id}\cdot(\Diamond_{\scriptnew c}
  \top^\emptyset_\emptyset(*)),
      \pi\cdot ((\Diamond_d\top)^\emptyset_{\{d\}}(*)) 
      \mid d\in B\cup\{a\} \}
  \end{align*}
  by adding $\theta$ to $s_\Gamma$ and continue the replacement of 
  boxes in each  $s_\Gamma(\theta)$; 
  \item If $a\notin C$ and $b=*$,  then add $\pi\cdot
   ((\Diamond_a\phi)^\emptyset_{\FN(\phi)}(*))$ to
  $\mathsf{s}_\Gamma$;
  \item If $a\in C$ and $a\in B$, then add $\pi\cdot 
  ((\Diamond_a\phi)_C^B(b))$ to
  $\mathsf{s}_\Gamma$; 
  \item if $a\in C$, $a\notin B$ and $b=*$, then add $\pi\cdot 
  ((\Diamond_a\phi)^\emptyset_{\FN(\phi)}(*))$ to
  $\mathsf{s}_\Gamma$;
  \item if $a\in C$, $a\notin B$ and $b=a$, then create a new set 
  $s_\Gamma(\theta)$
  for each 
  \begin{align*}\theta\in\{\mathsf{id}\cdot(\epsilon^\emptyset_\emptyset(*)),
  \mathsf{id}\cdot(\Diamond_{\scriptnew c}\top^\emptyset_\emptyset(*)),
      \pi\cdot ((\Diamond_d\top)^\emptyset_{\{d\}}(*)),
    \pi\cdot(\Diamond_{b}\psi^{\emptyset}_{\FN(\psi)}(*)) \mid d\in B \}
  \end{align*}
  by adding $\theta$ to $s_\Gamma$ and continue the replacement of boxes
  in each $s_\Gamma(\theta)$;
    \item if $a\in C$, $a\notin B$ and $*\neq b\neq a$, then create a new set $s_\Gamma(\theta)$
  for each 
  \begin{align*}\theta\in\{\mathsf{id}\cdot(\epsilon^\emptyset_\emptyset(*)),
  \mathsf{id}\cdot(\Diamond_{\scriptnew c}\top^\emptyset_\emptyset(*)),
      \pi\cdot ((\Diamond_d\top)^\emptyset_{\{d\}}(*)) \mid d\in B\cup\{a\} \}
  \end{align*}
  by adding $\theta$ to $s_\Gamma$ and continue the replacement of boxes in each
  $s_\Gamma(\theta)$.
  \end{itemize}  
  The above process results in a set $\Phi$ of sets of annotated formulae.
  Add transitions
  $(\Gamma,b)\stackrel{\epsilon}{\to}(\Delta,b)$ for all
  $\Delta\in\mathsf{rest}(\Phi)$. Note that
  $(\mathsf{s}_\Gamma,b)$ is a modal state by construction.
  
\item If $\Gamma$ contains no annotated $\Diamond$-formula, then we define the
  outgoing transitions of $(\Gamma,b)$ as follows: If 
  $\{\pi\cdot(\epsilon_C^B(a)),\pi'\cdot (\negepsilon^{B'}_{C'}(a'))\}\subseteq \Gamma$
  for some $\pi,\pi',B,B',C,C',a$ and $a'$, then we add the
  single transition $(\Gamma,b)\stackrel{\epsilon}{\to}(\{\mathsf{id}\cdot
  (\epsilon_\emptyset^\emptyset(*)),\mathsf{id}\cdot (\negepsilon^{\emptyset}_{\emptyset}(*))\},b)$.
  Otherwise, if there are no $\pi,B,C$ and $a$ such that
  $\pi\cdot(\negepsilon^B_C(a)))\in \Gamma$, then we add a transition
  \begin{align*}
  (\Gamma,b)\stackrel{\epsilon}{\to}(\{\mathsf{id}\cdot (\epsilon^\emptyset_\emptyset(*))\},b).
  \end{align*} 
  If there are no $\pi,B,C$ and $a$ such that
  $\pi\cdot(\epsilon^B_C(a)))\in \Gamma$ (but possibly $\pi\cdot(\negepsilon_C^B(a))\in \Gamma$
  for some $\pi,B,C$ and $a$), then let
  $\{\pi_1\cdot((\Box_{|a_1}\psi_1)^{B_1}_{C_1}(b_1)),\ldots,
  \pi_n\cdot((\Box_{|a_n}\psi_n)^{B_n}_{C_n}(b_n))\}\subseteq\Gamma$ be
  the set of annotated $\Box_\scriptnew$-formulae in $\Gamma$.  If this set is
  non-empty, then we add a transition
  \begin{align*}
  (\Gamma,b)\stackrel{\epsilon}{\to}(\{\pi_1\cdot((\Diamond_{|a_1}\psi_1)^{B_1}_{C_1}(b_1)),\ldots,
  \pi_n\cdot((\Diamond_{|a_n}\psi_n)^{B_n}_{C_n}(b_n))\},b).
  \end{align*}
  If the   above set is empty, then we add a transition
  $(\Gamma,b)\stackrel{\epsilon}{\to} (\{\mathsf{id}\cdot
   (\Diamond_{|c}\top)^\emptyset_\emptyset(*)\},b)$.  For
  each $a\in\names$, let
  $\{\pi_1\cdot((\Box_{a_1}\psi_1)^{B_1}_{C_1}(b_1)),\ldots,\pi_n\cdot((\Box_{a_n}\psi_n)^{B_n}_{C_n}(b_n))\}\subseteq\Gamma$ be the set
  of annotated $\Box$-formulae from $\Gamma$ whose index is the plain
  name~$a$ (that is, we require $\pi_i(a_i)=a$ for $1\leq i\leq n$).
   Whenever this set is non-empty, we add a transition
  \begin{align*}
  (\Gamma,b)\stackrel{\epsilon}{\to}
  (\Delta,b)
  \end{align*}
  for each $\Delta\in \Phi$, where $\Phi$ is the set of sets of annotated formulae
  obtained from
  $\Gamma$  by replacing annotated $\Box_a$-formulae according to the definition
  of annotated formulae, as described in the previous item.
  In the case that there are no $\pi,B,C$ and $a$ such that 
  $\pi\cdot(\epsilon_C^B(a))\in\Gamma$, and we have $b\neq *$, and if $\Gamma$ furthermore
  contains no annotated
  $\Box_b$-formula, then we also add a transition
  $(\Gamma,b)\stackrel{\epsilon}{\to} (\{\mathsf{id}\cdot((\Diamond_b\top)^\emptyset_{\{b\}}(*))\},b)$. 
\end{enumerate}
For $(\Gamma,b)\in\states$, we put $f(\emptyset,b)=\top$, $f(\Gamma,b)=1$ if $\Gamma\neq\emptyset$ and
all annotated formulae in $\Gamma$ are of the shape $\epsilon$,
and $f(\Gamma,b)=0$ otherwise. Let
\begin{align*}
\Gamma=\{\pi_1\cdot((\psi_1)^{B_1}_{C_1}(b_1)),\ldots,\pi_o\cdot((\psi_o)^{B_o}_{C_o}(b_o))\}.
\end{align*}
If there is $a\in\names$ such that $\psi_i=\Diamond_{a_i}\chi_i$,
$(\pi_i\cdot a_i)=a$ 
and $a_i\in C_i$
for all $1\leq i\leq o$, then we distinguish cases.  If
$\chi_i\neq\top$
for some~$i$, then
we add a transition
\begin{align*}
(\Gamma,b)\stackrel{a}{\rightarrow}(\{\pi_1\cdot((\chi_1)^{B_1}_{C_1}(b_1)),
\ldots,\pi_o\cdot((\chi_o)^{B_o}_{C_o}(b_o))\},b);
\end{align*}
otherwise $\chi_i=\top$
for all~$i$, and we add a transition
\begin{align*}
(\Gamma,b)\stackrel{a}{\rightarrow}(\emptyset,b).
\end{align*}
On the other hand, if $\psi_i=\Diamond_{\scriptnew b_i}\chi_i$ for all
$1\leq i\leq o$, then we add, for each $a\in\names$ such that
$a\neq b$ and all annotated formulae $\tau_1\cdot((\theta_1)^{B'_1}_{C'_1}(b'_1)),\ldots,
\tau_o\cdot((\theta_o)^{B'_o}_{C'_o}(b'_o))$ such that
\begin{align*}
\langle b_i\rangle((\pi_i\cdot\chi_i)^{\pi_i\cdot B_i}_{\pi_i\cdot C_i}(\pi_i\cdot b_i)_{n})=\langle a\rangle
((\tau_i\cdot\theta_i)^{\tau_i\cdot B'_i}_{\tau_i\cdot C'_i}(\tau_i\cdot b'_i)_{n})
\end{align*}
for all
$1\leq i\leq o$ and all $n>0$, transitions
$(\Gamma,b)\stackrel{\scriptnew
  a}{\rightarrow}(\Delta,b)$
  for each
  \begin{align*}
  \Delta\in\mathsf{rest}(\{\tau_1\cdot((\theta_1)^{B'_1}_{C'_1}(b'_1)),\ldots,
\tau_o\cdot((\theta_o)^{B'_o}_{C'_o}(b'_o))\}).
\end{align*} 
   If $b=*$, then we also add transitions
$(\Gamma,b)\stackrel{\scriptnew a}{\rightarrow}
(\Delta,a)$; all other modal states have no outgoing
transition.
\end{defn}
\noindent We note first that this construction realizes the claimed
size bounds:
\begin{lemma}
  The number of orbits of the nominal set~$Q$ is at most doubly
  exponential in the degree~$k$ of the formula~$\phi$, and at most
  singly exponential in the formula size~$m$ of~$\phi$. Moreover, the
  degree of~$Q$ (i.e.~the maximal size of the support of elements
  of~$Q$) is at most exponential in~$k$, and polynomial in~$m$.
\end{lemma}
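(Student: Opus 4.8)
The plan is to derive both bounds from two ingredients: the cardinality estimate $|\mathsf{formulae}|\le 2^{2k}\cdot 5m^2$ already recorded above (which is exponential in~$k$ and polynomial in~$m$), and a bound on how many annotated formulae a single state $(\Gamma,b)\in Q$ may contain. For the latter I would first establish the key invariant that each shape $\psi^B_C(a)\in\mathsf{formulae}$ occurs at most once in~$\Gamma$: since $C\subseteq\FN(\psi)$ we have $A:=\FN(\psi)\cap C=C$, so that the effective support of an annotated formula $\pi\cdot(\psi^B_C(a))$ is pinned down by $\pi\cdot C$ (together with $\pi\cdot B$); as name restriction (\autoref{lemma:name-restriction} and the accompanying $\mathsf{rest}$ operation) removes any two \emph{instances}, i.e.\ any pair of copies of a shape disagreeing on $\pi\cdot A$, at most one copy of each shape survives. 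Consequently $|\Gamma|\le|\mathsf{formulae}|$.

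For the degree bound, note that a single annotated formula $\pi\cdot(\psi^B_C(a))$ has support contained in $\pi\cdot(C\cup B\cup\{a\})$, which has size at most $|C|+|B|+1\le k+k+1=\mathcal{O}(k)$ since $|C|\le|\FN(\psi)|\le k$ and $|B|\le|\N(\phi)|=k$. Summing over the at most $|\mathsf{formulae}|$ distinct annotated formulae in~$\Gamma$ and adding the single name in the second component~$b$ gives
\[
  |\supp(\Gamma,b)|\ \le\ |\mathsf{formulae}|\cdot\mathcal{O}(k)+1\ \le\ 2^{2k}\cdot 5m^2\cdot\mathcal{O}(k)+1,
\]
which is exponential in~$k$ and polynomial in~$m$, as claimed.

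For the orbit bound, the guiding observation is that two states lie in the same $G$-orbit precisely when they carry the same set $\Phi\subseteq\mathsf{formulae}$ of shapes and the same pattern of name-coincidences, both among the supports of the various annotated formulae and between those supports and~$b$. Formally I would attach to a state a subset $\Phi\subseteq\mathsf{formulae}$ together with an equivalence relation on the disjoint union of the name-slots of the shapes in $\Phi$, enlarged by one slot for~$b$; since $G$ acts by bijective renaming, each such combination determines at most one orbit. The number of these data is at most $2^{|\mathsf{formulae}|}$ (the subsets) times the number of equivalence relations on a set of size $N\le\mathcal{O}(k)\cdot|\mathsf{formulae}|+1$, the latter bounded by the Bell number $B_N\le N^N=2^{N\log N}$. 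With $|\mathsf{formulae}|=2^{\mathcal{O}(k)}\cdot\mathrm{poly}(m)$ we get $N\log N=2^{\mathcal{O}(k)}\cdot\mathrm{poly}(m)$, so both factors are of the form $2^{2^{\mathcal{O}(k)}\cdot\mathrm{poly}(m)}$; this is doubly exponential in~$k$ and singly exponential in~$m$, giving the stated orbit bound.

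The main obstacle is the first step, namely justifying rigorously that each annotated-formula shape occurs at most once in every (reachable) state: this is what makes $Q$ orbit-finite in the first place and simultaneously controls both the degree and, through the subset-plus-equivalence-relation encoding, the orbit count. It hinges on the identity $A=C$ and on the invariant, maintained by every transition of $A(\phi)$ via the $\mathsf{rest}$ operation (\autoref{lemm:rest-annot}, \autoref{lemma:name-restriction}), that no state contains two instances of a shape; once this is in place, the remaining counting is routine.
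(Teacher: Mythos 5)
Your counting is essentially the paper's own argument: an orbit of~$Q$ is determined by a subset $\Phi\subseteq\mathsf{formulae}$ together with an equivalence relation on the disjoint union $U:=1+\sum_{\psi\in\Phi}\FN(\psi)$ (one block of name-slots per shape, made disjoint because the shapes are renamed independently, plus one slot for the reserve name~$b$); with $n=|\mathsf{formulae}|$ exponential in~$k$ and polynomial in~$m$ one gets $|U|=\CO(nk)$, so the subset factor and the equivalence-relation factor are both of the form $2^{2^{\CO(k)}\cdot\mathrm{poly}(m)}$, and the degree bound is just $|U|$. Your final estimates match this exactly.

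Where you diverge is the step you yourself flag as the main obstacle, and there you go slightly astray. In the paper, uniqueness of each shape within a state is not an invariant to be proved: the state set~$Q$ is \emph{defined} so that each annotated formula $\psi^B_C(a)\in\mathsf{formulae}$ contributes at most one renamed instance $\pi_i\cdot\phi_i$ per state; the $\mathsf{rest}$ operation is invoked when constructing \emph{transitions}, to ensure their targets land back in~$Q$, and plays no role in the proof of this lemma. This matters for two reasons. First, the lemma bounds the orbits and degree of the nominal set~$Q$ itself, not of the reachable part of~$A(\phi)$, so a dynamically maintained invariant could at best bound the latter. Second, your justification of the invariant is not airtight: a name restriction step fires only on a pair $\pi\cdot(\psi^B_C(a))$, $\pi'\cdot(\psi^B_C(a))$ with $\pi\cdot A\neq\pi'\cdot A$ (where indeed $A=\FN(\psi)\cap C=C$, as you observe), so two renamed copies whose images of~$C$ agree as sets but which act differently on $\FN(\psi)\setminus C$, or within~$C$, do not count as ``two instances'' in the technical sense and are not merged by $\mathsf{rest}$; hence $\mathsf{rest}$ alone does not literally enforce one copy per shape. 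Once you replace the invariant argument by the definitional reading of~$Q$, the remainder of your proof --- support $\CO(k)$ per annotated formula, $|\supp(\Gamma,b)|\le\CO(k)\cdot n+1$, and the orbit count $2^{n}\cdot B_N$ with $N=\CO(nk)$ and $B_N\le 2^{N\log N}$ --- is correct and coincides with the paper's.
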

\begin{proof}
  As noted above, the cardinality~$n$ of $\mathsf{formulae}$ is
  exponential in~$k$ and linear in~$m$.  An orbit of~$Q$ is determined
  by the choice of a subset $\Phi\subseteq\mathsf{formulae}$, giving a
  factor that is exponential in~$n$, and an equivalence relation (with
  additional properties) on the disjoint union
  $U:=1+\sum_{\psi\in\Phi}\FN(\psi)$ (the sets~$\FN(\psi)$ are made
  disjoint to reflect the fact that the formulae in~$\Phi$ are renamed
  independently, and the summand~$1$ reflects the reserve name). The
  set~$U$ has size~$\CO(nk)$, so the factor for equivalence relations
  is also exponential in~$n$. This establishes the claimed bound on
  the number of orbits. The support of an element of~$Q$ is at
  most~$|U|=\CO(nk)$, confirming the claimed bound on the degree.
\end{proof}

\begin{lemma}
Let $\phi$ be a closed formula.
The structure~$A(\phi)$ is an ERRNA 
and we have $L_\alpha(A(\phi))=\sem{\phi}$.
\end{lemma}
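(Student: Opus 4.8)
The plan is to prove the two assertions separately: first that $A(\phi)$ meets all the axioms of an ERNNA, and then that its accepted bar language coincides with $\sem{\phi}$ via a tight run/satisfaction correspondence. For the ERNNA conditions, the nominal-set structure on $Q$ is the pointwise permutation action on micro-states and reserve names, and orbit-finiteness together with the degree bounds were already established by the preceding lemma. The initial state $s=(\{\id\cdot(\phi^\emptyset_\emptyset(*))\},*)$ has empty support because $\phi$ is closed, so $\FN(\phi)=\emptyset$. Equivariance of $\to$ and of $f$ holds because every transition rule is phrased purely in terms of the \emph{shapes} of annotated formulae and of relations such as $\pi_i(a_i)=a$, all of which are stable under applying one common permutation to source, label and target; $\alpha$-invariance of the bound transitions is built directly into the defining clause $\langle b_i\rangle(\cdots)=\langle a\rangle(\cdots)$. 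Finite branching up to $\alpha$-equivalence follows from finiteness of $\mathsf{formulae}$ (hence of the reachable shapes) together with the fact that each of the finitely many tableau rules adds only finitely many successors and that the $\mathsf{rest}$ process terminates and branches finitely. The only $\top$-states are the deadlocks $(\emptyset,b)$, which by construction have no outgoing transitions; one checks the required empty-support condition (if necessary by identifying the reserve name with $*$ once the formula component is empty, which does not affect acceptance).

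For the language equality I would prove, by induction on the length of the \emph{remaining} word, the following invariant: if $s\trans{u}(\Gamma,b)$ and $S$ is the set of names read as bar names along $u$, then for every bar string $v$ with $uv$ closed and clean such that $(\Gamma,b)$ is correctly annotated w.r.t.\ $v$, the state $(\Gamma,b)$ literally accepts $v$ in $A(\phi)$ if and only if $v$ satisfies $\Gamma$ in context $S$, i.e.\ $S,v\models(\pi\cdot\psi)^{\pi\cdot B}_{\pi\cdot C}(\pi\cdot a)_{|v|+1}$ for every $\pi\cdot(\psi^B_C(a))\in\Gamma$. Applying this to the initial state, where $\Gamma=\{\phi^\emptyset_\emptyset(*)\}$ and $S=\emptyset$, and invoking \autoref{lemma:guess-free} (whose hypotheses hold trivially since $\FN(\phi)=\emptyset$) to identify $\emptyset,w\models\phi^\emptyset_\emptyset(*)_{|w|+1}$ with $\emptyset,w\models\phi$, yields $w\in L_0(A(\phi))\iff\emptyset,w\models\phi$ for closed clean $w$. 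Passing to $\equiv_\alpha$-classes—each of which has a clean representative, and using that $\sem{\phi}_0$ is $\equiv_\alpha$-closed by \autoref{lem:alpheq}—then gives $L_\alpha(A(\phi))=\sem{\phi}$.

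The induction splits along the three kinds of states. For propositional states I would unfold the matching semantic clause for $\vee$, $\wedge$ or $\mu$ (using \autoref{lem:alpha-unfolding} in the fixpoint case), and then invoke \autoref{lemm:rest-annot} to carry the correct-annotation-and-satisfaction property through the name-restriction step, so that among the $\epsilon$-successors produced by $\mathsf{rest}$ there is a satisfied one (completeness) and only a satisfied one yields a model back (soundness, via the biconditional of \autoref{lemma:name-restriction}). For modal states the transition consumes one letter, so $|v|$ strictly decreases and the induction hypothesis applies directly: the free case $\Diamond_a$ matches the semantic clause for $\Diamond_a$, the empty and all-$\epsilon$ cases match the acceptance function, and the bound case $\Diamond_{\scriptnew a}$ is handled through the $\alpha$-renaming clause of the semantics together with \autoref{lem:alpheq} and the $\alpha$-invariant definition of bound transitions. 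The quasimodal case discharges $\Box$-formulae and $\negepsilon$ by $\epsilon$-transitions into modal states, where the reserve name $b$ records the guess of the last relevant plain name, and the sub-cases of the transition rule are designed to mirror exactly the case distinction for $\Box$-formulae in \autoref{def:restriction}, so that correctness again reduces to \autoref{lemma:guess-free}.

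I expect the quasimodal $\Box$-case to be the main obstacle: one must simultaneously track (i) the distinguishing-letter annotation that bounds the number of surviving copies of a subformula, (ii) the reserve name $b$ encoding when a previously bound name reappears as a plain name and thereby discharges a box vacuously, and (iii) the repeated name-restriction steps collapsing two permuted instances of a single annotated formula into one. Verifying that the nondeterministic guesses made in these transitions are both complete (every model of a $\Box_\barname\psi$ induces some accepting branch) and sound (every accepting branch genuinely witnesses the semantics) is precisely the content that \autoref{lemma:guess-free}, \autoref{lemma:name-restriction} and \autoref{lemm:rest-annot} were engineered to supply; the delicate part is assembling them consistently across all sub-cases while respecting the changing annotations.
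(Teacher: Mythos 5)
Your overall skeleton (ERNNA-hood checked directly from the construction; language equality by induction over the remaining word with a case split on propositional/quasimodal/modal states, powered by \autoref{lemma:guess-free}, \autoref{lemma:name-restriction} and \autoref{lemm:rest-annot}) matches the paper, but your central invariant is genuinely wrong, and the paper deliberately avoids it. You propose a \emph{biconditional}: under correct annotation, ``$(\Gamma,b)$ literally accepts $v$ iff $v$ satisfies $\Gamma$''. The right-to-left half fails because of the reserve-name component $b$: the guess of the last relevant plain name is made \emph{early}, in the bound transitions from modal states (where $b$ is set when a bar name is read), not at the quasimodal state where it is consumed. Concretely, take a quasimodal state $(\Gamma,*)$ with $\Gamma$ consisting only of $\negepsilon$ and a remaining word $v=av'$ whose first letter $a$ is a plain name bound earlier along $u$; then $\Gamma$ is satisfied by $v$, and the annotations are correct, but $(\Gamma,*)$ has no run on $v$, since the only escape transition $(\Gamma,b)\trans{\epsilon}(\{\mathsf{id}\cdot((\Diamond_b\top)^\emptyset_{\{b\}}(*))\},b)$ requires $b\neq *$ and all other $\epsilon$-successors read a bar name. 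This is exactly why the paper's completeness statement ($\dagger$) only asserts the existence of \emph{some} reserve name $b'$ with $s\trans{v}(\Gamma,b')$ from which an accepting run exists, and why it needs \autoref{lem:guessfn} to re-thread the run prefix so that the correct guess is made at the last occurrence of $\newletter a$. A per-state biconditional cannot absorb this retroactive re-guessing.

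The left-to-right half of your invariant also does not survive the induction: along an \emph{arbitrary} accepting run, the automaton's nondeterministic choices of distinguishing letters (in the $\mathsf{rest}$ steps and bound transitions) need not be correct w.r.t.\ the actual suffix, so at successor states your hypothesis ``correctly annotated w.r.t.\ $v$'' may simply fail and the induction hypothesis is inapplicable. The paper sidesteps this by making soundness an asymmetric statement that drops annotations altogether: any accepting run from $(\Gamma,b)$ on $v$ yields $\BN(u),v\models\pi\cdot\psi$ for the \emph{unannotated} formulae, with \autoref{lem:runtosat} transporting satisfaction backwards through restriction steps; correctness of annotations is only ever assumed in the completeness direction, where it is an invariant you control because you build the run yourself. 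Finally, a smaller but real gap: induction ``on the length of the remaining word'' alone does not terminate, since propositional and quasimodal steps are $\epsilon$-transitions that leave $|v|$ unchanged; the paper uses the lexicographic measure $(|u|,\kappa_{(\Gamma,b)},\mathsf{u}(\Gamma),|\Gamma|)$, where $\kappa$ flags unguarded boxes/$\negepsilon$ and $\mathsf{u}$ counts unguarded fixpoints, and guardedness makes the fixpoint-unfolding step decrease $\mathsf{u}(\Gamma)$. To repair your proof you should replace the single biconditional by the paper's two implications (with the existential over $b'$ in one and unannotated satisfaction in the other) and adopt such a refined measure.
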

\begin{proof}
It is easy to see that~$A(\phi)$ is indeed an ERNNA: The set
\begin{align*}
Q=\big\{(\{\pi_1\cdot\phi_1,\ldots,\pi_n\cdot\phi_n\},
a)
\mid &\,\pi_1,\ldots,\pi_{n}\in G,\\
&
\{\phi_1,\ldots,\phi_n\}\subseteq\mathsf{formulae}, a\in \names\cup\{*\}\big\}, 
\end{align*}
is a nominal set by construction.  The sets $\states$, $\prestates$
and $\quasistates$ are equivariant, and since renaming does not affect
the shape of annotated formulae of shape $\epsilon$, $f$ is
equivariant as well. The transition relation~$\rightarrow$ is clearly
equivariant, $\alpha$-invariant and finitely branching up to
$\alpha$-equivalence by construction. 

It remains to show that~$L_\alpha(A(\phi))=\sem{\phi}$. Without loss
of generality, we assume that $w$ is clean, that is, that every name
is bound at most once in $w$ (indeed this is  without loss of
generality since both the bar languages of automata and the languages
of formulae are closed under $\alpha$-equivalence).

For the inclusion $L_\alpha(A(\phi))\supseteq\sem{\phi}$, let
$\emptyset,w\models\phi$ and $w\in\free{\emptyset}$.  Recall that the
initial state of $A(\phi)$ is
$s=(\{\mathsf{id}\cdot(\phi^\emptyset_\emptyset(*)),*\})$.  It
suffices to show that either $(w,1)\in \prelang(A(\phi))$ or there are
$v,u\in\barstrings$ such that $w=vu$, $(v,\top)\in \prelang(A(\phi))$.
This amounts to showing that there is a state $q$ such that
$s\stackrel{w}{\to} q$ and $f(q)=1$, or $s\stackrel{v}{\to} q$ and
$f(q)=\top$ where $v$ is a prefix of $w$. We show the following more
general property:
\begin{quote}
  ($\dagger$) Let $v,u\in\barstrings$ such that $w\equiv_\alpha vu$ and let $(\Gamma,b)$ be a
  state such that $s\stackrel{v}{\to}(\Gamma,b)$ and such
  that~$\Gamma$ is correctly annotated w.r.t.~$u$ and satisfied by~$u$
  in context $\BN(v)$.  Then there is some name $b'$ and a state $q$ (possibly,
  $q=(\Gamma,b')$) such that $s\stackrel{v}{\to}(\Gamma,b')$
  and either $(\Gamma,b')\stackrel{u}{\to} q$ and
  $f(q)=1$, or $(\Gamma,b')\stackrel{u'}{\to} q$ and $f(q)=\top$ where
  $u'$ is a prefix of $u$.
\end{quote}
Since $\emptyset,w\models\phi$ and $\phi$ is closed, we have
$\emptyset,w\models(\mathsf{id}\cdot\phi)^{\mathsf{id}\cdot\emptyset}_{\mathsf{id}\cdot\emptyset}(\mathsf{id}(*))_{|w|+1}$
by~\autoref{lemma:guess-free}; also the annotation of $\phi$ with
$\emptyset$, $\emptyset$, $*$ and $|w|+1$ is trivially correct
w.r.t. $w$.  Finally pick $(\Gamma,b)=s$, $v=\epsilon$, $b'=b=*$ and $u=w$ to
see that the stated property indeed implies our proof goal.  Next we
prove the property by induction, using the measure
$(|u|,\kappa_{(\Gamma,b)},\mathsf{u}(\Gamma),|\Gamma|)$, in lexicographic
ordering, where $\kappa_{(\Gamma,b)}=1$ if $\Gamma$ contains some annotated
formula that contains an unguarded $\Box$-formula or formula
of the shape $\negepsilon$ and
$\kappa_{(\Gamma,b)}=0$ otherwise, $\mathsf{u}(\Gamma)$ denotes the sum of
the numbers of unguarded fixpoint operators in annotated formulae in
$\Gamma$, and $|\Gamma|$ denotes the sum of the sizes of annotated
formulae in $\Gamma$ (measured as just the size of the corresponding
unannoted formulae).  We proceed by case distinction on
whether $(\Gamma,b)$ is in $\states$, in $\quasistates$, or in
$\prestates$. We show ($\dagger$) by picking $b'=b$ in all cases below, except 
for the case  where the reserve letter is used
(that is, the case where the next letter in $u$ is some free letter $a$,
$(\Gamma,b)$ is a quasimodal state, and $\Gamma$ does neither contain an
annotated $\Diamond$-formulae nor an annotated $\Box_a$-formula;
in this case we will pick $b'=a$).

If $(\Gamma,b)\in\states$, then we distinguish the following subcases
(if $|u|=0$, then we have $u=\epsilon$ and every annotated formula in
$\Gamma$ is satisfied by $\epsilon$ by assumption, that is, we are
either in case (1) or (2) below, showing that the inductive
construction of an accepting run, which we effectively perform in the
proof, eventually terminates).
\begin{enumerate}
\item If $\Gamma=\emptyset$, then we have $f(\Gamma,b)=\top$ so that we are done.
\item If $\Gamma\neq\emptyset$ and every annotated formula in $\Gamma$ is of the
shape $\epsilon$, then we have 
$\BN(v),u\models\epsilon$ by assumption. Hence
$u=\epsilon$. We are done
since $f(\Gamma,b)=1$ by definition of $f$.
\item If $\Gamma$ 
contains some annotated formula $\rho\cdot(\Diamond_{a'}\xi)^B_C(b)$, then we have
\begin{align*}
\BN(v),u\models(\rho\cdot \Diamond_{a'}\xi)^{\rho\cdot B}_{\rho\cdot C}(\rho\cdot b)_{|u|+1}
\end{align*}
by assumption.
Thus $(\rho\cdot \Diamond_{a'}\xi)^{\rho\cdot B}_{\rho\cdot C}(\rho\cdot b)_{|u|+1}\neq\bot$, 
hence $a:=\rho(a')\in C$. If $a\in B$,
then
$\BN(v),u\models\Diamond_{a}((\rho\cdot \xi)^{\rho\cdot B}_{\rho\cdot C}(\rho\cdot b)_{|u|})$.
If $a\notin B$,
then
$\BN(v),u\models\Diamond_{a}((\rho\cdot \xi)^{\emptyset}_{\FN(\psi)}(*)_{|u|})$.
In both cases, $u=au'$ and for all annotated formulae $\pi\cdot(\psi^B_C(b))\in\Gamma$,
we have $\psi=\Diamond_{a'}\theta$ where $\pi(a')=a$,
since we have $\BN(v),u\models((\pi\cdot\psi)^{\pi\cdot B}_{\pi\cdot C}(\pi\cdot b)_{|u|+1})$ for each $\pi\cdot(\psi^B_C(b))\in\Gamma$ 
by assumption (that is, $\Gamma$ does not contain an annotated formula
of the shape $\epsilon$ or $\Diamond_{\scriptnew b}\xi$). 
Let $\Gamma=\{\pi_1\cdot((\Diamond_{a'_1}\xi_1)^{B_1}_{C_1}(b_1)),\ldots,
\pi_o\cdot((\Diamond_{a'_o}\xi_o)^{B_o}_{C_o}(b_o))\}$ such that
$\pi_i(a'_i)=a$ for $1\leq i\leq o$
and put 
\begin{align*}
\Gamma'=\{\pi_1\cdot(\xi_1)^{B'_1}_{C'_1}(b'_1)),\ldots,\pi_o\cdot((\xi_o)^{B'_o}_{C'_o}(b'_o))\},
\end{align*}
where, for $1\leq i\leq o$, we put $B'_i=B_i$, $C'_i=C_i$  and $b'_i=b_i$ if $a\in\pi_i\cdot B_i$,
and $B'_i=\emptyset$, $C'_i=\FN(\xi_i)$  and $b'_i=*$ if $a\notin\pi_i\cdot B_i$.
Towards applying the inductive hypothesis, we note that 
by the semantics of annotated $\Diamond_a$-modalities and since $\BN(va)=\BN(v)$, we have
$\BN(va),u'\models(\pi_i\cdot\xi_i)^{\pi_i\cdot B'_i}_{\pi_i\cdot C'_i}(\pi_i\cdot b'_i))_{|u'|+1}$ for each $\pi_i\cdot((\Diamond_{a'_i}\xi_i)^{B_i}_{C_i}(b_i))\in\Gamma'$;
furthermore, the annotation of $\pi_i\cdot\xi_i$ with 
${\pi_i\cdot B'_i}$, ${\pi_i\cdot C'_i}$, $\pi_i\cdot b'_i$ and ${|u'|+1}$ is correct w.r.t.
$u'$ since
\begin{enumerate}
\item 
$\pi_i\cdot B'_i\subseteq {\pi_i\cdot C'_i}$, $b_i\in
(({\pi_i\cdot C'_i}\setminus{\pi_i\cdot B'_i})\cap\FN(\pi_i\cdot\xi_i))\cup\{*\}$
and $|u'|+1>|u'|$,
\item if $\pi_i\cdot b'_i\neq *$, then 
$a\in \pi_i\cdot B_i$, $b'_i=b_i$ and $B'_i=B_i$.
By assumption,
the annotation of $\pi_i\cdot\Diamond_{a_i}\xi_i$ with 
${\pi_i\cdot B_i}$, ${\pi_i\cdot C_i}$, $\pi_i\cdot b_i$ and ${|u|+1}$ is correct w.r.t.
$u=au'$ so that $\pi_i\cdot b_i$ is the first
freely occuring letter in $au'$ that is not in $\pi_i\cdot B_i$ 
(but in $\FN(\pi_i\cdot \Diamond_{a_i}\xi_i)$).
Hence $a\neq \pi_i\cdot b_i$ so that $\pi\cdot b_i$ also is the first
freely occuring letter in $u'$ that is not in $\pi_i\cdot B'_i$ (but in 
$\FN(\pi_i\cdot\xi_i)$),
\item if $\pi_i\cdot b'_i=*$, then let $d$ 
be the first freely occuring letter in $u'$ that is not in $\pi_i\cdot B'_i$,
and let this letter exist. Also let $d\in \FN(\pi_i\cdot \xi_i)$. 
It remains to show that $d\in\pi_i\cdot C'_i$.
If $a\notin \pi_i\cdot B_i$,
then $C'_i=\FN(\xi_i)$ so that trivially $d\in\pi_i\cdot C'_i=\pi_i\cdot\FN(\xi_i)=
\FN(\pi_i\cdot \xi_i)$. If $a\in \pi_i\cdot B_i$, then~$d$ also is the first freely occuring letter in $u=au'$ that is not in $\pi_i\cdot B_i$.
Also $d\in \FN(\pi_i\cdot \Diamond_{a_i}\xi_i)$ 
since $d\in \FN(\pi_i\cdot \xi_i)$ so that $d\in \pi_i\cdot C_i=\pi_i\cdot C'_i$ by assumption.
\end{enumerate}
Hence we have shown that $\Gamma'$ is correctly annotated w.r.t.~$u'$
and satisfied by $u'$.  By~\autoref{lemm:rest-annot}, we have some
$\Delta\in\mathsf{rest}(\{\Gamma'\})$ that is correctly annotated
w.r.t. $u'$ and satisfied by $u'$.  By assumption, we have
$s\stackrel{v}{\to}(\Gamma,b)$ and by definition of $A(\phi)$, we have
$(\Gamma,b)\stackrel{a} {\to}(\Delta,b)$, hence
$s\stackrel{va}{\to}(\Delta,b)$.  As we have $|u'|< |u|$, there is --
by the inductive hypothesis -- a state $q'$ such that either
$(\Delta,b)\stackrel{u'}{\to}q'$ and $f(q')=1$, or
$q\stackrel{u''}{\to}q'$ and $f(q')=\top$ where $u''$ is a prefix of
$u'$.  Since $(\Gamma,b)\stackrel{a}{\rightarrow}(\Delta,b)$, we have
either $(\Gamma,b)\stackrel{au'}{\to}q'$ and $f(q')=1$, or
$(\Gamma,b)\stackrel{au''}{\to}q'$ and $f(q')=\top$ where $u''$ is a
prefix of $u'$, as required.
\item If $\Gamma$ contains some annotated formula 
$\rho\cdot(\Diamond_{\scriptnew a}\xi^B_C(b))$,
then we have 
\begin{align*}
\BN(v),u\models(\rho\cdot\Diamond_{\scriptnew a}\xi)^{\rho\cdot B}_{ \rho\cdot C}(\rho\cdot b)_{|u|+1} 
\end{align*} 
by assumption.
Then $\BN(v),u\models(\Diamond_{\scriptnew a}\rho\cdot\xi)^{\rho\cdot B}_{ \rho\cdot C}(\rho\cdot b)_{|u|+1}$ since $\Gamma$ is satisfied by $u$
 so that
$u=\newletter cu'$ for some $c\in\names$
by the semantics of $\Diamond_{\scriptnew a}$-modalities.
Then all formulae in $\Gamma$ are of this shape since
$\BN(v),u\models(\pi\cdot\psi)^{\pi\cdot B}_{\pi\cdot C}(\pi\cdot b)$ 
for each $\pi\cdot(\psi^B_C(b))\in\Gamma$. 
Let 
\begin{align*}
\Gamma'=\{\pi_1\cdot((\Diamond_{\scriptnew b_1}\xi_1)^{B_1}_{C_1}(a_1)),\ldots,
\pi_o\cdot((\Diamond_{\scriptnew b_o}\xi_o)^{B_o}_{C_o}(a_o))\}.
\end{align*}
Towards applying the inductive hypothesis, we note that
by the semantics of $\Diamond_{\scriptnew{}}$-modalities
and since $\BN(v\newletter c)=\BN(v)\cup\{c\}$,
we have some $d\in \names$ and $u''\in\barstrings$ such that $u\equiv_\alpha \newletter d u''$ and, for each $\pi_i\cdot((\Diamond_{\scriptnew b_i}\xi_i)^{B_i}_{C_i}(a_i))\in\Gamma$, 
some $\theta_i$, $B'_i$, $C'_i$, $a'_i$ and some $\rho_i$ such that 
\begin{align*}
\langle b_i\rangle((\pi_i\cdot\xi_i)^{(\pi_i\cdot B_i)\cup\{b_i\}}_{(\pi_i\cdot C_i)\cup\{b_i\}}(\pi_i\cdot a_i)_{|u|})=
\langle d\rangle((\rho_i\cdot\theta_i)^{(\rho_i\cdot B'_i)\cup\{d\}}_{(\rho_i\cdot C'_i)\cup\{d\}}(\rho_i\cdot a'_i)_{|u|})
\end{align*}
 and
$\BN(v\newletter d),u''\models(\rho_i\cdot\theta_i)^{(\rho_i\cdot B'_i)\cup\{d\}}_{(\rho_i\cdot C'_i)\cup\{d\}}(\rho_i\cdot a'_i)_{|u''|+1}$, where there is some $d'$ such that
$\phi_i\cdot d_i'=d$, noting $|u|=|u''|+1$.
Furthermore, the annotation of $\rho_i\cdot\theta_i$ with 
${\rho_i\cdot (B'_i\cup\{d'_i\})}$, ${\rho_i\cdot (C'_i\cup\{d'_i\})}$, $\rho_i\cdot a'_i$ and ${|u''|+1}$ is correct w.r.t.
$u''$ since
\begin{enumerate}
\item ${\rho_i\cdot (B'_i\cup\{d'_i\})}\subseteq {\rho_i\cdot (C'_i\cup\{d'_i\})}$
since $\pi_i\cdot B_i\subseteq \pi_i\cdot C_i$ which implies
$\rho_i\cdot B'_i\subseteq \rho_i\cdot C'_i$; also
$\rho_i\cdot a_i'\in
(({\rho_i\cdot (C'_i\cup\{d'_i\})}\setminus {\rho_i\cdot (B'_i\cup\{d'_i\})})\cap
\FN(\rho_i\cdot\theta_i))
\cup \{*\}$ since
$\pi_i\cdot a_i\in
(((\pi_i\cdot C_i)\setminus (\pi_i\cdot B_i))\cap
\FN(\pi_i\cdot\xi_i))
\cup \{*\}$ and
$|u''|+1>|u''|$.
\item if $(\rho_i\cdot a_i')\neq *$, then we have to show that
$\rho_i\cdot a'_i$ has a free occurrence in $u''$, and is the first freely occuring
letter in $u''$ that is not contained in ${\rho_i\cdot (B'_i\cup\{d'_i\})}$.
Then $(\pi_i\cdot a_i)\neq *$ so that 
$\pi_i\cdot a_i$ has a free occurrence in $u=\newletter c u'$, and is the first freely occuring
letter in $u$ that is not contained in ${\pi_i\cdot B_i}$.
In particular, $(\pi_i\cdot a_i)\neq b_i$.
Since \begin{align*}
\langle b_i\rangle((\pi_i\cdot\xi_i)^{(\pi_i\cdot B_i)\cup\{b_i\}}_{(\pi_i\cdot C_i)\cup\{b_i\}}(\pi_i\cdot a_i))=
\langle d\rangle((\rho_i\cdot\theta_i)^{(\rho_i\cdot B'_i)\cup\{d\}}_{(\rho_i\cdot C'_i)\cup\{d\}}(\rho_i\cdot a'_i)),
\end{align*}
we have $(\pi_i\cdot a_i)=(\rho_i\cdot a'_i)$.
Since $u\equiv_\alpha\newletter d u''$, 
$\rho_i\cdot a'_i$ has a free occurrence in $u''$ and for
any letter $e$ that has a free occurrence in $u''$ before the first free
occurrence of $\rho_i\cdot a'_i$, we either have $e\in\pi_i\cdot B_i$,
hence $e\in\rho_i\cdot B'_i$, or $e=d=\rho_i\cdot d'_i$; in both cases,
we have $e\in {\rho_i\cdot (B'_i\cup\{d'_i\})}$, as required.
\item if $\rho_i\cdot a_i'=*$ and there is some freely occurring letter in $u''$
that is not contained in ${\rho_i\cdot (B'_i\cup\{d'_i\})}$, then let $e$ be the
first such letter. Also let $e\in\FN(\rho_i\cdot \theta)$
so that there is some $e'\in\FN(\theta)$ such that $\rho_i\cdot e'=e$. It remains
to show that $e\in{\rho_i\cdot (C'_i\cup\{d'_i\})}$.
Then $e$ also is the first freely occuring letter in $\newletter d u''$.
Let $f$ be the first freely occuring letter in $u$
that is not contained in ${\pi_i\cdot B_i}$ but in
$\FN(\pi_i\cdot \Diamond_{\scriptnew b_i}\xi_i)$ so that
$f\in (\pi_i\cdot C_i)$ by assumption.
Then there is $f'\in\FN(\Diamond_{\scriptnew b_i}\xi_i)$ such that
$\pi_i\cdot f'=f$.
If $f'=e'$, then we are done since $\pi_i\cdot f'=\rho_i\cdot e'$ 
since 
\begin{align*}
\langle b_i\rangle((\pi_i\cdot\xi_i)^{(\pi_i\cdot B_i)\cup\{b_i\}}_{(\pi_i\cdot C_i)\cup\{b_i\}}(\pi_i\cdot a_i))=
\langle d\rangle((\rho_i\cdot\theta_i)^{(\rho_i\cdot B'_i)\cup\{d\}}_{(\rho_i\cdot C'_i)\cup\{d\}}(\rho_i\cdot a'_i)),
\end{align*}
so that
$\pi_i\cdot f'\in (\pi_i\cdot C_i)$ implies $\rho_i\cdot e'\in
(\rho_i\cdot C'_i)$.
If $f'\neq e'$, then $e=d$ so that
$e\in{\rho_i\cdot (C'_i\cup\{d'_i\})}$ since $\rho_i\cdot d'_i=d$.
\end{enumerate}
Put
\begin{align*}
\Gamma'=\{\rho_1\cdot((\theta_1)^{B'_1\cup\{d'_i\}}_{C'_1\cup\{d'_i\}}(a'_1)),\ldots,\rho_o\cdot((\theta_o)^{B'_o\cup\{d'_i\}}_{C'_o\cup\{d'_i\}}(a'_o))\}.
\end{align*}
Hence we have shown that $\Gamma'$ is correctly annotated w.r.t. $u''$
and satisfied by $u''$.  By \autoref{lemm:rest-annot}, we have some
$\Delta\in\mathsf{rest}(\{\Gamma'\})$ that is correctly annotated
w.r.t.~$u''$ and satisfied by $u''$.  By assumption, we have
$s\stackrel{v}{\to}(\Gamma,b)$ and by definition of $A(\phi)$, we have
$(\Gamma,b)\stackrel{\scriptnew d} {\to}(\Delta,b)$, hence
$s\stackrel {v\scriptnew d}{\rightarrow}(\Delta,b)$.  
Also we have $w\equiv_\alpha v\newletter d u''$.
As $|u''|< |u|$,
there is -- by the inductive hypothesis -- a state $q'$ such that
either $(\Delta,b)\stackrel{u''}{\to}q'$ and $f(q')=1$, or
$(\Delta,b)\stackrel{u'''}{\to}q'$ and $f(q')=\top$ where $u'''$ is a
prefix of $u''$.  Since
$(\Gamma,b)\stackrel {\scriptnew d}{\rightarrow}(\Delta,b)$, we have
either $(\Gamma,b)\stackrel{\scriptnew d u''}{\to}q'$ and $f(q')=1$,
or $(\Gamma,b)\stackrel{\scriptnew d u'''}{\to}q'$ and $f(q')=\top$
where $u'''$ is a prefix of $u''$, as required.
\end{enumerate}
If $(\Gamma,b)\in\quasistates$, then we distinguish the following cases.
\begin{enumerate}
\item If $\Gamma$ contains some annotated formula of the shape
  $\rho\cdot((\Diamond_\sigma\xi)^B_C(b))$, then recall that
  $\BN(v),u\models(\rho\cdot\Diamond_\sigma\xi)^{\rho\cdot
    B}_{\rho\cdot C}(\rho\cdot b)_{|u|+1}$ by assumption so that
  either there is some plain name $a\in\names$ such that all
  $\Diamond$-formulae in $\Gamma$ are of the form
  $\pi\cdot((\Diamond_c\xi)^B_C(b)$ such that $\pi(c)=a$, or all
  $\Diamond$-formulae are of the general form
  $\pi\cdot((\Diamond_{\scriptnew c}\xi)^B_C(b)$. We distinguish
  cases accordingly.
\begin{enumerate}
\item Suppose that there is some $a\in\names$ such that all annotated
  $\Diamond$-formulae in $\Gamma$ are of the form
  $\pi\cdot((\Diamond_c\xi)^B_C(b)$ where $\pi(c)=a$. Pick one such
  annotated formula $\pi\cdot((\Diamond_c\xi)^B_C(b)$.  By
  assumption, we have
  $\BN(v),u\models (\pi\cdot \Diamond_c\xi)^{\pi\cdot B}_{\pi\cdot
    C}(\pi\cdot b)_{|u|+1}$, which is the case if and only if
  $\BN(v),u\models \Diamond_a((\pi\cdot\xi)^{\pi\cdot B}_{\pi\cdot C}(\pi\cdot
  b)_{|u|})$ so that we have $u=au'$.  Let
\begin{align*}
\{\pi_1\cdot((\Box_{a_1}\psi_1)^{B_1}_{C_1}(b_1)),\ldots,\pi_o\cdot((\Box_{a_o}\psi_o)^{B_o}_{C_o}(b_o))\}\subseteq\Gamma
\end{align*}
be the set of all  annotated $\Box$-formulae in $\Gamma$ such that
$\pi_i(a_i)=a$.
Towards applying the inductive hypothesis, we note that
for each $\pi_i\cdot((\Box_{a_i}\psi_i)^{B_i}_{C_i}(b_i))$ from this set, we have
either $a\in\pi_i\cdot C_i$, or not.
If $a\notin\pi_i\cdot C_i$ and $a=*$,
then proceed as in item $\mathsf{i}.$ below. If
$a\notin\pi_i\cdot C_i$ and $a\neq *$,
then proceed as in item $\mathsf{iii}.$ below, but
with the first letter in $u$ being from the set $\pi_i\cdot B_i
\cup\{\pi_i\cdot b_i\}$.
Otherwise, we have $a\in\pi_i\cdot C_i$ and
proceed as follows: 
Either we have $a\in \pi_i\cdot B_i$ and then
$\BN(v),u\models (\pi_i\cdot \Box_{a_i}\psi_i)^{\pi_i\cdot B_i}_{\pi_i\cdot C_i}(\pi_i\cdot b_i)_{|u|+1}$ 
by assumption so that
$\BN(va),u'\models 
(\pi_i\cdot \psi_i)^{\pi_i\cdot B_i}_{\pi_i\cdot C_i}(\pi_i\cdot b_i)_{|u|}$, which in turn implies
$\BN(v),u\models (\pi_i\cdot \Diamond_{a_i}\psi_i)^{\pi_i\cdot B_i}_{\pi_i\cdot C_i}(\pi_i\cdot b_i)_{|u|+1}$;
put $\theta_i=\pi_i\cdot((\Diamond_{a_i}\psi_i)^{B_i}_{C_i}(b_i))$.
Or we have $a\notin \pi_i\cdot B_i$ 
in which case we have $\BN(v),u\models \chi(a)_{\pi_i\cdot C_i}^{\pi_i\cdot B_i}(\pi_i\cdot b_i)_{|u|}$ and distinguish cases as follows.
\begin{enumerate}
\item If $(\pi_i\cdot b_i)=*$, then
  $\BN(v),u\models \Box_a((\pi_i\cdot \psi_i)^{\pi_i\cdot
    \emptyset}_{\pi_i\cdot \FN(\psi)}(\pi_i\cdot *)_{|u|})$ by the
  definition of~$\chi(a)$ (\autoref{def:restriction}), so that
  $\BN(va),u'\models (\pi_i\cdot \psi_i)^{\pi_i\cdot
    \emptyset}_{\pi_i\cdot \FN(\psi)}(\pi_i\cdot *)_{|u|}$ which in
  turn implies
  $\BN(v),u\models (\pi_i\cdot \Diamond_{a_i}\psi_i)^{\pi_i\cdot
    B_i}_{\pi_i\cdot C_i}(\pi_i\cdot b_i)_{|u|+1}$ by
  \autoref{def:restriction}; put
  $\theta_i=\pi_i\cdot((\Diamond_{a_i}\psi_i)^{B_i}_{C_i}(b_i))$.
\item If $(\pi_i\cdot b_i)=a$, then 
\begin{align*}
\chi(a)_{\pi_i\cdot C_i}^{\pi_i\cdot B_i}(\pi_i\cdot b_i)_{|u|}=
\epsilon\vee\Diamond_{\scriptnew c}\top\vee\bigvee_{d\in \pi_i\cdot B_i}\Diamond_d\top\vee
\Diamond_{b_i}((\pi_i\cdot \psi_i)^\epsilon_{\FN(\pi_I\cdot \psi)}(*)_{|u|}).
\end{align*}
Since $u=au'$ and $\pi_i\cdot b_i=a$, this means that
$\BN(v),u\models (\pi_i\cdot \Diamond_{a_i}\psi_i)^{\pi_i\cdot B_i}_{\pi_i\cdot C_i}(\pi_i\cdot b_i)_{|u|+1}$;
put $\theta_i=\pi_i\cdot((\Diamond_{a_i}\psi_i)^{B_i}_{C_i}(b_i))$.
\item If $*\neq (\pi_i\cdot b_i)\neq a$, then
\begin{align*}
\chi(a)_{\pi_i\cdot C_i}^{\pi_i\cdot B_i}(\pi_i\cdot b_i)_{|u|}=
\epsilon\vee\Diamond_{\scriptnew c}\top\vee\bigvee_{d\in (\pi_i\cdot B_i)\cup\{\pi_i\cdot b_i\}}\Diamond_d\top.
\end{align*}
Since $u=au'$ and $\pi_i\cdot b_i=a$, this means that
$\BN(v),u\models (\pi_i\cdot \Diamond_{a_i}\top)^{\pi_i\cdot B_i}_{\pi_i\cdot C_i}(\pi_i\cdot b_i)_{|u|+1}$;
put $\theta_i=\pi_i\cdot((\Diamond_{a_i}\top)^{B_i}_{C_i}(b_i))$.

\end{enumerate}

Let $s_\Gamma$ denote the set of formulae from $\Gamma$
that are not of the shape $\Box_\sigma\psi$ or $\negepsilon$.
Put 
\begin{align*}
\Gamma'=s_\Gamma\cup\{\theta_1,\ldots,\theta_o\}.
\end{align*}
\item All annotated $\Diamond$-formulae in $\Gamma$ are of the
form $\pi\cdot((\Diamond_{\scriptnew a_i}\psi_i)^{B_i}_{C_i}(b_i))$.
Pick one such annotated formula $\pi\cdot((\Diamond_{\scriptnew a_i}\psi_i)^{B_i}_{C_i}(b_i))$.
Since $\BN(v),u\models 
(\pi\cdot \Diamond_{\scriptnew a_i}\psi_i)^{\pi\cdot B_i}_{\pi\cdot C_i}(\pi\cdot b_i)_{|u|+1}$
by assumption, we have
$\BN(v),u\models \Diamond_{\scriptnew a_i}
((\pi\cdot\psi_i)^{\pi\cdot B_i}_{\pi\cdot C_i}(\pi\cdot b_i)_{|u|+1})$ and hence $u=|cu'$
for some $c\in\names$.
Let 
\begin{align*}
\{\pi_1\cdot((\Box_{\scriptnew a_1}\psi_1)^{B_1}_{C_1}(b_1)),\ldots,
\pi_o\cdot((\Box_{\scriptnew a_o}\psi_o)^{B_o}_{C_o}(b_o))\}\subseteq\Gamma
\end{align*} 
be the set of all annotated $\Box_\scriptnew$-formulae in $\Gamma$.
Towards applying the inductive hypothesis, we note that
for each $\pi_i\cdot((\Box_{\scriptnew a_i}\psi_i)^{B_i}_{C_i}(b_i))$ from this set, 
we have $\BN(v),u\models 
(\pi_i\cdot \Box_{\scriptnew a_i}\psi_i)^{\pi_i\cdot B_i}_{\pi_i\cdot C_i}(\pi_i\cdot b_i)_{|u|+1}$ 
by assumption, which implies
$\BN(v)\cup\{c\},u'\models 
(\rho_i\cdot \theta_i)^{\rho_i\cdot B'_i}_{\rho_i\cdot C'_i}(\rho_i\cdot b'_i)_{|u|}$ 
for some $\theta_i$, $\rho_i$, $B'_i$, $C'_i$ and $b'_i$, such that
$\langle a_i\rangle(\pi_i\cdot((\psi_i)^{B_i}_{C_i}(b_i)))=
\langle c\rangle(\rho_i\cdot((\theta_i)^{B'_i}_{C'_i}(b'_i)))$. This in turn implies
$\BN(v),u\models (\pi_i\cdot \Diamond_{\scriptnew a_i}\psi_i)^{\pi_i\cdot B_i}_{\pi_i\cdot C_i}(\pi_i\cdot b_i)_{|u|+1}$.
Let $s_\Gamma$ denote the set of formulae from $\Gamma$
that are not of the shape $\Box_\sigma\psi$ or $\negepsilon$. Put 
\begin{align*}
\Gamma'=S_\Gamma\cup\{\pi_1\cdot((\Diamond_{\scriptnew a_1}\psi_1)^{B_1}_{C_1}(b_1)),\ldots,
\pi_o\cdot((\Diamond_{\scriptnew a_o}\psi_o)^{B_o}_{C_o}(b_o))\}.
\end{align*}
\end{enumerate}
In both cases, $\Gamma'$ is satisfied by $u$,
and in both cases, $\Gamma'$ is obtained from $\Gamma$ by replacing annotated
box formulae with identically annotated diamond formulae;
since $\Gamma$ is correctly annotated w.r.t. $u$,
$\Gamma'$ is correctly annotated w.r.t.~$u$ as well.
By~\autoref{lemm:rest-annot}, we have some
$\Delta\in\mathsf{rest}(\{\Gamma'\})$ that is correctly annotated w.r.t. $u$
and satisfied by $u$ by~\autoref{lemm:rest-annot}. 
By assumption, we have $s\stackrel{v}{\to}(\Gamma,b)$ and
by definition of $A(\phi)$, we have $(\Gamma,b)\stackrel{\epsilon}
{\to}(\Delta,b)$, hence $s\stackrel {v\epsilon}{\rightarrow}(\Delta,b)$.
We have $\kappa_{(\Gamma,b)}=1$ and $\kappa_{(\Delta,b)}=0$ so that,
by the inductive hypothesis, there is a state $q'$ such that either
$(\Delta,b)\stackrel{u}{\to}q'$ and $f(q')=1$, or
$(\Delta,b)\stackrel{u'}{\to}q'$ for some prefix $u'$ of $u$ and $f(q')=\top$.
Then we have $(\Gamma,b)\stackrel{\epsilon u}{\to}q'$
and $f(q')=1$, or
$(\Gamma,b)\stackrel{\epsilon u'}{\to}q'$ for some prefix $u'$ of $u$ and $f(q')=\top$,
as required.

\item If $\Gamma$ contains no annotated formula of the form $\pi\cdot((\Diamond_\sigma\psi)^B_C(a))$,
then we distinguish
the following cases, noting that $\Gamma$ does not contain
both an annotated formula of the shape~$\epsilon$ and
an annotated formula of the shape $\negepsilon$, by assumption:
\begin{enumerate}
\item If $u=\epsilon$, then $\Gamma$ does not contain an annotated formula 
of the form $\pi\cdot(\negepsilon^B_C(a))$
since $\BN(v),\epsilon\not\models(\pi\cdot\negepsilon)^{\pi\cdot B}_{\pi\cdot C}(\pi\cdot a)_{|u|+1}$. Hence there
is a transition $(\Gamma,b)\stackrel{\epsilon}{\to}(\{\mathsf{id}\cdot(\epsilon^\emptyset_\emptyset(*))\},b)$
by definition of $A(\phi)$. We have $\kappa_{(\Gamma,b)}=1$ and 
$\kappa_{(\{\mathsf{id}\cdot(\epsilon^\emptyset_\emptyset(*))\},b)}=0$ 
so that we are done by the inductive hypothesis.
\item If $u=au'$ for some $a\in\names$, then
let 
\begin{equation}\label{eq:free-boxes}
\{\pi_1\cdot((\Box_{a_1}\psi_1)^{B_1}_{C_1}(b_1)),\ldots,\pi_o\cdot((\Box_{a_o}\psi_o)^{B_o}_{C_o}(b_o))\}\subseteq\Gamma
\end{equation}
be the set of all annotated $\Box_a$-formulae in $\Gamma$ 
for which $a\in\pi_i\cdot C_i$
(that is, let this set contain all annotated box formulae 
$\rho\cdot((\Box_{c}\xi)^{B'}_{C'}(b))\in\Gamma$ such that $\pi(c)=a$). 
We first deal with the case that this set is non-empty. We note 
towards applying the inductive hypothesis that
for each $\pi_i\cdot((\Box_{a_i}\psi_i)^{B_i}_{C_i}(b_i))$ from this set, we have
either $a\in\pi_i\cdot C_i$, or not.
If $a\notin\pi_i\cdot C_i$ and $a=*$,
then proceed as in item $\mathsf{i}.$ below. If
$a\notin\pi_i\cdot C_i$ and $a\neq *$,
then proceed as in item $\mathsf{iii}.$ below, but
with the first letter in $u$ being from the set $\pi_i\cdot B_i
\cup\{\pi_i\cdot b_i\}$.
Otherwise, we have $a\in\pi_i\cdot C_i$ and
proceed as follows: 
We have that
either $a\in \pi_i\cdot B_i$ and then
$\BN(v),u\models (\pi_i\cdot \Box_{a_i}\psi_i)^{\pi_i\cdot B_i}_{\pi_i\cdot C_i}(\pi_i\cdot b_i)_{|u|+1}$ 
by assumption so that
$\BN(va),u'\models 
(\pi_i\cdot \psi_i)^{\pi_i\cdot B_i}_{\pi_i\cdot C_i}(\pi_i\cdot b_i)_{|u|}$ which in turn implies
$\BN(v),u\models (\pi_i\cdot \Diamond_{a_i}\psi_i)^{\pi_i\cdot B_i}_{\pi_i\cdot C_i}(\pi_i\cdot b_i)_{|u|+1}$;
then put $\theta_i=\pi_i\cdot((\Diamond_{a_i}\psi_i)^{B_i}_{C_i}(b_i))$.
Or we have $a\notin \pi_i\cdot B_i$ 
in which case we have $\BN(v),u\models \chi(a)_{\pi_i\cdot C_i}^{\pi_i\cdot B_i}(\pi_i\cdot b_i)_{|u|}$
and distinguish cases as follows.
\begin{enumerate}
\item If $\pi_i\cdot b_i=*$, then 
$\BN(v),u\models 
\Box_a((\pi_i\cdot \psi_i)^{\pi_i\cdot \emptyset}_{\pi_i\cdot \FN(\psi)}(\pi_i\cdot *)_{|u|})$ by the
  definition of~$\chi(a)$ (\autoref{def:restriction}), 
so that
$\BN(va),u'\models 
(\pi_i\cdot \psi_i)^{\pi_i\cdot \emptyset}_{\pi_i\cdot \FN(\psi)}(\pi_i\cdot *)_{|u|}$ which in turn implies
$\BN(v),u\models (\pi_i\cdot \Diamond_{a_i}\psi_i)^{\pi_i\cdot B_i}_{\pi_i\cdot C_i}(\pi_i\cdot b_i)_{|u|+1}$ by
  \autoref{def:restriction};
put $\theta_i=\pi_i\cdot((\Diamond_{a_i}\psi_i)^{B_i}_{C_i}(b_i))$.
\item If $\pi_i\cdot b_i=a$, then 
\begin{align*}
\chi(a)_{\pi_i\cdot C_i}^{\pi_i\cdot B_i}(\pi_i\cdot b_i)_{|u|}=
\epsilon\vee\Diamond_{\scriptnew c}\top\vee\bigvee_{d\in \pi_i\cdot B_i}\Diamond_d\top\vee
\Diamond_{b_i}((\pi_i\cdot \psi_i)^\epsilon_{\FN(\pi_I\cdot \psi)}(*)_{|u|}).
\end{align*}
Since $u=au'$ and $\pi_i\cdot b_i=a$, this means that
$\BN(v),u\models (\pi_i\cdot \Diamond_{a_i}\psi_i)^{\pi_i\cdot B_i}_{\pi_i\cdot C_i}(\pi_i\cdot b_i)_{|u|+1}$;
put $\theta_i=\pi_i\cdot((\Diamond_{a_i}\psi_i)^{B_i}_{C_i}(b_i))$.
\item If $*\neq (\pi_i\cdot b_i)\neq a$, then
\begin{align*}
\chi(a)_{\pi_i\cdot C_i}^{\pi_i\cdot B_i}(\pi_i\cdot b_i)_{|u|}=
\epsilon\vee\Diamond_{\scriptnew c}\top\vee\bigvee_{d\in (\pi_i\cdot B_i)\cup\{\pi_i\cdot b_i\}}\Diamond_d\top.
\end{align*}
Since $u=au'$ and $\pi_i\cdot b_i=a$, this means that
$\BN(v),u\models (\pi_i\cdot \Diamond_{a_i}\top)^{\pi_i\cdot B_i}_{\pi_i\cdot C_i}(\pi_i\cdot b_i)_{|u|+1}$;
put $\theta_i=\pi_i\cdot((\Diamond_{a_i}\top)^{B_i}_{C_i}(b_i))$.
\end{enumerate}
Put 
\begin{align*}
\Gamma'=\{\theta_1,\ldots,\theta_o\}.
\end{align*}
Then $\Gamma'$ is satisfied by $u$. 
Furthermore, $\Gamma'$ is obtained from $\Gamma$ by replacing
annotated box formulae with identically annotated diamond formulae;
since $\Gamma$ is correctly annotated w.r.t.~$u$, $\Gamma'$ is correctly
annotated w.r.t.~$u$ as well.  By~\autoref{lemm:rest-annot}, we have some
$\Delta\in\mathsf{rest}(\{\Gamma'\})$ that is correctly annotated
w.r.t. $u$ and satisfied by $u$.  By assumption, we have
$s\stackrel{v}{\to}(\Gamma,b)$ and by definition of $A(\phi)$, we have
$(\Gamma,b)\stackrel{\epsilon} {\to}(\Delta,b)$, hence
$s\stackrel {v\epsilon}{\rightarrow}(\Delta,b)$.  Again we have
$\kappa_{(\Gamma,b)}=1$ and $\kappa_{(\Delta,b)}=0$ so that we are done by inductive
hypothesis.

It remains to deal with the case that the set defined
in~\eqref{eq:free-boxes} is empty, i.e.~$\Gamma$ contains no annotated
$\Box_a$-formula. In this case, we show that all annotated $\Box$-modalities 
(and all $\epsilon$-formulae) in~$\Gamma$ can be evaded 
by accordingly guessing the reserve name component. 
Since $(\Gamma,b)\in\quasistates$, $\Gamma$ must either contain some other modality (that is, either some
annotated formula $\rho\cdot(\Box_c\xi^{B'}_{C'}(b))$ such that
$\rho(c)\neq a$ or some annotated formula
$\rho\cdot(\Box_{\scriptnew c}\xi^{B'}_{C'}(b))$) or some formula of
the shape $\negepsilon$. Note that $a\in\BN(v)$ (since $w$ is
closed and $a\in\FN(u)$). The assumptions of our generalized property~($\dagger$) hold, by
\autoref{lem:guessfn} below, for the state $(\Gamma,a)$ in which the reserve name
component stores the letter $a$ that comes next in the word $u$ (recall $u=au''$);
in particular, $s\stackrel{v}{\to} (\Gamma,a)$.  Put
\begin{align*}
q=(\{\mathsf{id}\cdot((\Diamond_{a}\top)^\emptyset_{\{a\}}(*))\},a)
\end{align*}
so that there is a transition $(\Gamma,a)\stackrel{\epsilon}{\to} q$
by definition of $A(\phi)$. We have
$\BN(v),u\models
(\mathsf{id}\cdot\Diamond_a\top)^{\mathsf{id}\cdot\emptyset}_{\mathsf{id}\cdot{\{a\}}}(\mathsf{id}\cdot
*)_{|u|+1}$ since 
$(\mathsf{id}\cdot\Diamond_a\top)^{\mathsf{id}\cdot\emptyset}_{\mathsf{id}\cdot{\{a\}}}(\mathsf{id}\cdot
*)_{|u|+1}
=\Diamond_a((\mathsf{id}\cdot\top)^{\mathsf{id}\cdot \emptyset}_
{\mathsf{id}\cdot \{a\}}(\mathsf{id}\cdot *)_{|u|})=\Diamond_a\top$ and
$u=au'$. 
Furthermore, $\kappa_{(\Gamma,b)}=1$, $\kappa_q=0$
and the set
$\{\mathsf{id}\cdot((\Diamond_{a}\top)^\emptyset_{\{a\}}(*))\}$ is
satisfied by $u$ and trivially correctly annotated w.r.t.~$u$  so that,
picking $b'=a$,
the inductive hypothesis finishes the case.
\item In the case that $u=\newletter cu'$ for some $c\in\names$, let 
\begin{align*}
\{\pi_1\cdot((\Box_{\scriptnew a_1}\psi_1)^{B_1}_{C_1}(b_1)),\ldots,\pi_o\cdot((\Box_{\scriptnew a_1}\psi_o)^{B_o}_{C_o}(b_o))\}\subseteq\Gamma
\end{align*}
be the set of all annotated $\Box_\scriptnew$-formulae in $\Gamma$. If this set is non-empty,
then put 
\begin{align*}
\Gamma'=\{\pi_1\cdot((\Diamond_{\scriptnew a_1}\psi_1)^{B_1}_{C_1}(b_1)),\ldots,
\pi_o\cdot((\Diamond_{\scriptnew a_1}\psi_o)^{B_o}_{C_o}(b_o))\},
\end{align*}
having $\BN(v),u\models 
(\pi_i\cdot \Box_{\scriptnew a_i}\psi_i)^{\pi_i\cdot B_i}_{\pi_i\cdot C_i}(\pi_i\cdot b_i)_{|u|+1}$
 for $1\leq i\leq o$
by assumption, which implies
$\BN(v)\cup\{d\},u''\models 
(\rho_i\cdot\xi_i)^{\rho_i\cdot B'_i}_{\rho_i\cdot C'_i}(\rho_i\cdot b'_i)_{|u|}$
for suitable $d$, $u''$ and $\rho_i\cdot((\xi_i)^{B'_i}_{C'_i}(b'_i))$.
Then $\BN(v),u\models (\pi_i\cdot 
\Diamond_{\scriptnew a_i}\psi_i)^{\pi_i\cdot B_i}_{\pi_i\cdot C_i}(\pi_i\cdot b_i)_{|u|+1}$.
Hence $\Gamma'$ is satisfied by $u$.
Since $\Gamma$ is correctly annotated w.r.t. $u$ by assumption
and since $\Gamma'$ is obtained from $\Gamma$ by simply replacing boxes with diamonds, 
$\Gamma'$ is correctly annotated w.r.t. $u$ as well.
By~\autoref{lemm:rest-annot}, we have some
$\Delta\in\mathsf{rest}(\{\Gamma'\})$ that is correctly annotated w.r.t. $u$
and satisfied by $u$.
By assumption, we have $s\stackrel{v}{\to}(\Gamma,b)$ and
by definition of $A(\phi)$, we have $(\Gamma,b)\stackrel{\epsilon}
{\to}(\Delta,b)$, hence $s\stackrel {v\epsilon}{\rightarrow}(\Delta,b)$.
Again we have $\kappa_{(\Gamma,b)}=1$ and $\kappa_{(\Delta,b)}=0$ so that
the inductive hypothesis finishes the case.
If $\Gamma$ contains no $\Box_\scriptnew$-formula, then
 $\Gamma$ contains a formula of the shape $\negepsilon$
 since $(\Gamma,b)\in\quasistates$. 
Put $q=(\{\mathsf{id}\cdot((\Diamond_{\scriptnew c}\top)_\emptyset^\emptyset(*))\},b)$
so that there is a transition
$(\Gamma,b)\stackrel{\epsilon}{\to} q$ by definition of $A(\phi)$. Again, the set
$\{\mathsf{id}\cdot((\Diamond_{\scriptnew c}\top)_\emptyset^\emptyset(*))\}$ is
trivially correctly annotated w.r.t. $u$ and satisfied by $u=\newletter c u'$ 
so that the inductive hypothesis finishes the proof.
\end{enumerate}
\end{enumerate}
If $(\Gamma,b)\in\prestates$, then we pick some annotated formula
$\pi\cdot(\psi^B_C(a))\in\Gamma$ such that the top-level operator in $\psi$ is propositional.
\begin{enumerate}
\item If $\psi=\psi_1\vee\psi_2$, then we have
$\BN(v),u\models(\pi\cdot\psi)^{\pi\cdot B}_{\pi\cdot C}(\pi\cdot a)_{|u|+1}$ if and only if 
$\BN(v),u\models (\pi\cdot\psi_1)^{\pi\cdot B}_{\pi\cdot C}(\pi\cdot a)_{|u|+1}$ or 
$\BN(v),u\models (\pi\cdot\psi_2)^{\pi\cdot B}_{\pi\cdot C}(\pi\cdot a)_{|u|+1}$. Pick $i\in\{1,2\}$ such that 
$\BN(v),u\models(\pi\cdot\psi_i)^{\pi\cdot B}_{\pi\cdot C}(\pi\cdot a)_{|u|+1}$.
The annotation of formulae stays correct w.r.t. $u$. Put 
\begin{align*}
\Gamma'=(\Gamma\setminus\{\pi\cdot(\psi^B_C(a))\})\cup\{\pi\cdot((\psi_i)^B_C(a))\}.
\end{align*}
Hence $\Gamma'$ is correctly annotated w.r.t. $u$ and satisfied by $u$.
By~\autoref{lemm:rest-annot}, we have some
$\Delta\in\mathsf{rest}(\{\Gamma'\})$ that is correctly annotated w.r.t. $u''$
and satisfied by $u''$.
By assumption, we have $s\stackrel{v}{\to}(\Gamma,b)$ and
by definition of $A(\phi)$, we have $(\Gamma,b)\stackrel{\epsilon}
{\to}(\Delta,b)$, hence $s\stackrel {v\epsilon}{\rightarrow}(\Delta,b)$.
This implies $s\stackrel{u\epsilon}{\to}q$ since
$s\stackrel{v}{\to}(\Gamma,b)$ by assumption.
We have $\mathsf{u}(\Gamma)=
\mathsf{u}(\Delta)$ and
$|\psi_i|<|\psi|$ so that the inductive hypothesis yields
a state $q'$ such that either $(\Delta,b)\stackrel{u}{\to}q'$ and
$f(q')=1$, or $(\Delta,b)\stackrel{u'}{\to}q'$ and $f(q')=\top$
where $u'$ is a prefix of $u$.
Hence have 
either $(\Gamma,b)\stackrel{\epsilon u}{\to}q'$ and
$f(q')=1$, or $(\Gamma,b)\stackrel{\epsilon u'}{\to}q'$ and $f(q')=\top$
where $u'$ is a prefix of $u$, as required.

\item If $\psi=\psi_1\wedge\psi_2$, then we have
$\BN(v),u\models(\pi\cdot\psi)^{\pi\cdot B}_{\pi\cdot C}(\pi\cdot a)_{|u|+1}$ if and only if 
$\BN(v),u\models (\pi\cdot\psi_1)^{\pi\cdot B}_{\pi\cdot C}(\pi\cdot a)_{|u|+1}$ and 
$\BN(v),u\models (\pi\cdot\psi_2)^{\pi\cdot B}_{\pi\cdot C}(\pi\cdot a)_{|u|+1}$. 
The annotation of formulae stays correct w.r.t. $u$. Put 
\begin{align*}
\Gamma'=(\Gamma\setminus\{\pi\cdot(\psi^B_C(a))\})\cup
\{\pi\cdot((\psi_1)^B_C(a)),\pi\cdot((\psi_2)^B_C(a))\}.
\end{align*}
Hence $\Gamma'$ is correctly annotated w.r.t. $u$ and satisfied by $u$.
By~\autoref{lemm:rest-annot}, we have some 
$\Delta\in\mathsf{rest}(\{\Gamma'\})$ that is correctly annotated w.r.t. $u''$
and satisfied by $u''$.
By assumption, we have $s\stackrel{v}{\to}(\Gamma,b)$ and
by definition of $A(\phi)$, we have $(\Gamma,b)\stackrel{\epsilon}
{\to}(\Delta,b)$, hence $s\stackrel {v\epsilon}{\rightarrow}(\Delta,b)$.
This implies $s\stackrel{u\epsilon}{\to}q$ since
$s\stackrel{v}{\to}(\Gamma,b)$ by assumption.
We have $\mathsf{u}(\Gamma)=
\mathsf{u}(\Delta)$,
$|\psi_1|<|\psi|$ and $|\psi_2|<|\psi|$ so that the inductive hypothesis yields
a state $q'$ such that either $(\Delta,b)\stackrel{u}{\to}q'$ and
$f(q')=1$, or $(\Delta,b)\stackrel{u'}{\to}q'$ and $f(q')=\top$
where $u'$ is a prefix of $u$.
Hence have 
either $(\Gamma,b)\stackrel{\epsilon u}{\to}q'$ and
$f(q')=1$, or $(\Gamma,b)\stackrel{\epsilon u'}{\to}q'$ and $f(q')=\top$
where $u'$ is a prefix of $u$, as required.

\item 
If $\psi=\mu X.\,\psi_1$, then we have
$\BN(v),u\models(\pi\cdot\psi)^{\pi\cdot B}_{\pi\cdot C}(\pi\cdot a)_{|u|+1}$ if and only if 
$\BN(v),u\models(\pi\cdot\psi_1[\mu X.\,\psi_1/X])^{\pi\cdot B}_{\pi\cdot C}(\pi\cdot a)_{|u|+1}$
by the semantics of fixpoint operators. 
The annotation of formulae stays correct w.r.t. $u$.
Put 
\begin{align*}
\Gamma'=(\Gamma\setminus\{\pi\cdot(\psi^B_C(a))\})\cup\{\pi\cdot((\psi_1[\mu X.\,\psi_1/X])^B_C(a))\}.
\end{align*}
Hence $\Gamma'$ is correctly annotated w.r.t. $u$ and satisfied by $u$.
By~\autoref{lemm:rest-annot}, we have some
$\Delta\in\mathsf{rest}(\{\Gamma'\})$ that is correctly annotated w.r.t. $u''$
and satisfied by $u''$.
By assumption, we have $s\stackrel{v}{\to}(\Gamma,b)$ and
by definition of $A(\phi)$, we have $(\Gamma,b)\stackrel{\epsilon}
{\to}(\Delta,b)$, hence $s\stackrel {v\epsilon}{\rightarrow}(\Delta,b)$.
This implies $s\stackrel{u\epsilon}{\to}q$ since
$s\stackrel{v}{\to}(\Gamma,b)$ by assumption.
Fixpoint variables are guarded by modal operators
so that $\mu X.\,\psi_1$ is guarded in $\psi_1[\mu X.\,\psi_1/X]$.
Hence we have
$\mathsf{u}(\Gamma)>
\mathsf{u}(\Delta)$
and the inductive hypothesis yields
a state $q'$ such that either $(\Delta,b)\stackrel{u}{\to}q'$ and
$f(q')=1$, or $(\Delta,b)\stackrel{u'}{\to}q'$ and $f(q')=\top$
where $u'$ is a prefix of $u$.
Hence we have 
either $(\Gamma,b)\stackrel{\epsilon u}{\to}q'$ and
$f(q')=1$, or $(\Gamma,b)\stackrel{\epsilon u'}{\to}q'$ and $f(q')=\top$
where $u'$ is a prefix of $u$, as required.
\end{enumerate}
For the converse inclusion $L_\alpha(A(\phi))\subseteq\sem{\phi}$, let
$[w]_\alpha\in L_\alpha(A(\phi))$ so that either
$(w,1)\in \prelang(A(\phi))$ or $(v,\top)\in \prelang(A(\phi))$ where
$v$ is some prefix of $w$.  The result follows from the more general
property that
\begin{quote}
for all states $(\Gamma,b)\in Q$ and all
$u,v\in\barstrings$ such that $w=uv$ and
$s\stackrel{u}{\to}(\Gamma,b)$,
if there is a state $q$ such that either
$(\Gamma,b)\stackrel{v}{\to}q$ and $f(q)=1$ or
$(\Gamma,b)\stackrel{v'}{\to}q$ and $f(q)=\top$ where $v'$ is a prefix
of $v$, then for all $\pi\cdot(\psi^B_C(a))\in\Gamma$
we have $\BN(u),v\models\pi\cdot\psi$.
\end{quote}
We show this as follows.
Let $\tau=(\Gamma_0,b_0),(\Gamma_1,b_1),\ldots,(\Gamma_n,b_n)$ where
$(\Gamma_0,b_0)=(\Gamma,b)$, $(\Gamma_n,b_n)=q$ and
$f(q)\in\{1,\top\}$ be an accepting path as per the assumption.  The
proof of the result is by induction on the length $|\tau|$ of $\tau$.
If $|\tau|=1$, then $(\Gamma,b)$ is, by assumption an accepting state
or a $\top$-state, that is we have either $f(\Gamma)=\top$ or
$f(\Gamma)=1$.  In the former case, we have $\Gamma=\emptyset$ by
definition of $f$ so that there is nothing to show. 
In the latter case, we have $v=\epsilon$ since $\tau$ is an
accepting path; furthermore we have that every annotated
formula in $\Gamma$ is of the shape $\epsilon$ by definition of $f$.
Then we have $\psi=\epsilon$, $\pi\cdot\epsilon=\epsilon$ and
$\BN(u),v\models\epsilon$. 
If $|\tau|>1$, then there are at least two pairs
$(\Gamma_0,b_0),(\Gamma_1,b_1)$ in $\tau$.  If
$(\Gamma,b)\in\prestates$ and $\pi\cdot(\psi^B_C(a))\in\Gamma_1$, then the inductive
hypothesis finishes the case.  If $(\Gamma,b)\in\prestates$ and
$\Gamma_1\in\mathsf{rest}(\{\Gamma'\})$ where $\Gamma'$ is a set such that
$\pi\cdot(\psi^B_C(a))\notin \Gamma'$, then 
$v=\epsilon v''$ and
the $\epsilon$-transition of the automaton
manipulates $\pi\cdot(\psi^B_C(a))$ which then is of the shape $\psi_1\vee\psi_2$,
$\psi_1\wedge\psi_2$ or $\mu X.\,\psi_1$. Also we have
$\BN(u)=\BN(u\epsilon)$. 
\begin{itemize}
\item If $\psi=\psi_1\vee\psi_2$, then we have $\pi\cdot((\psi_1)^B_C(a))\in\Gamma'$
or $\pi\cdot((\psi_2)^B_C(a))\in\Gamma'$ by definition of $A(\phi)$. Since
$\BN(u),v\models\pi\cdot(\psi_1\vee\psi_2)$ 
if and only if 
$\BN(u\epsilon),v''\models \pi\cdot \psi_1$ or 
$\BN(u\epsilon),v''\models \pi\cdot \psi_2$,
and since $\Gamma_1\in\mathsf{rest}(\{\Gamma'\})$ 
we are by~\autoref{lem:runtosat} done if $\Gamma_1$ is satisfied by $v''$;
this is the case by the inductive hypothesis.
\item If $\psi=\psi_1\wedge\psi_2$, then we have $\pi\cdot((\psi_1)^B_C(a))\in\Gamma'$
and $\pi\cdot((\psi_2)^B_C(a))\in\Gamma'$ by definition of $A(\phi)$. Since 
$\BN(u),v\models\pi\cdot (\psi_1\wedge\psi_2)$
if and only if 
$\BN(u\epsilon),v''\models\pi\cdot \psi_1$ and 
$\BN(u\epsilon),v''\models\pi\cdot \psi_2$, 
and since $\Gamma_1\in\mathsf{rest}(\{\Gamma'\})$ 
we are by~\autoref{lem:runtosat} done if $\Gamma_1$ is satisfied by $v''$;
this is the case by the inductive hypothesis.
\item If $\psi=\mu X.\,\psi_1$, then we have
$\pi\cdot((\psi_1[\mu X.\,\psi_1/X])^B_C(a))\in\Gamma'$ by definition of $A(\phi)$. 
Since $\BN(u),v\models\pi\cdot(\mu X.\,\psi_1)$ 
if and only if 
$\BN(u),v\models
\pi\cdot(\psi_1[\mu X.\,\psi_1/X])$,
and since $\Gamma_1\in\mathsf{rest}(\{\Gamma'\})$ 
we are by~\autoref{lem:runtosat} done if $\Gamma_1$ is satisfied by $v''$;
this is the case by the inductive hypothesis.
\end{itemize}
If $(\Gamma,b)\in\states$, then there is some $\barname\in\barA$ such
that $\Gamma$ has an outgoing $\barname$-transition since $\tau$ does
not end at $(\Gamma,b)$; then $v=\barname v'$.
As $\pi\cdot(\psi^B_C(a))\in\Gamma$,
we have either $\barname=d$, $\psi=\Diamond_c\psi_1$ and
$(\pi\cdot c)=d$ or 
$\sigma=\newletter d$ and $\psi=\Diamond_{\scriptnew c}\psi_1$ for some $c\in\names$
by the definition of $A(\phi)$.  If $\barname=d$, then we have
$(\pi\cdot c)\in (\pi\cdot C)$ since 
the run is accepting; also, $v=dv'$, 
$\pi\cdot((\psi_1)^{B'}_{C'}(a'))\in\Gamma'$ for suitable
$B'$, $C'$ and $a'$, and for some
$\Gamma'$ s.t. $\Gamma_1\in\mathsf{rest}(\{\Gamma'\})$. 
Furthermore, we have $\BN(u)=\BN(ud)$ so that 
$\BN(ud),v'\models \pi\cdot\psi_1$ 
by the inductive hypothesis in combination 
with~\autoref{lem:runtosat}, which 
in turn implies 
$\BN(u),v\models \pi\cdot(\Diamond_d\psi_1)$ 
and hence finishes the case.  If $\barname=\newletter d$,
then we have $v=\newletter d v'$. By definition of
$A(\phi)$, there are $\theta\in\formulae$, a permutation $\rho$,
sets $B',C'$, a name $a'$ and $n>0$ such that
$\langle c\rangle((\pi\cdot\psi_1)^{\pi\cdot B}_{\pi\cdot C}(\pi\cdot a)_n)=
\langle d\rangle ((\rho\cdot\theta)^{\rho\cdot B'}_{\rho\cdot C'}(\rho\cdot a')_n)$ and
$\rho\cdot(\theta^{B'}_{C'}(a'))\in\Gamma'$ for some $\Gamma'$
such that $\Gamma_1\in\mathsf{rest}(\{\Gamma'\})$. 
We have $\BN(u\newletter d)= \BN(u)\cup\{d\}$ so
that $\BN(u)\cup\{d\},v'\models \rho\cdot \theta$ by the inductive
hypothesis in combination with~\autoref{lem:runtosat}, which in turn implies 
$\BN(u),v\models \pi\cdot(\Diamond_{\scriptnew c}\psi_1)$
so that we are done.  

If $(\Gamma,b)\in\quasistates$, then we have
$(\Gamma,b)\stackrel{\epsilon}{\to}(\Gamma_1,b_1)$ and
$(\Gamma_1,b_1)\in\states$, where 
$\Gamma_1\in\mathsf{rest}(\Gamma')$ for suitable $\Gamma'$. 
If $\pi\cdot(\psi^B_C(a))\in\Gamma'$, 
then the inductive hypothesis finishes the
case in combination with~\autoref{lem:runtosat}.
Otherwise, $\psi=\negepsilon$, $\psi=\Box_c\phi$ or 
$\psi=\Box_{\scriptnew c}\phi$ for some $c\in\names$. 
 If $v=bv'$ and $\psi=\Box_{\scriptnew c}\phi$,
then $\BN(u),v\models \pi\cdot (\Box_{\scriptnew c}\phi)$ trivially.  Similarly,
if $v=bv'$ and $\psi=\Box_c\phi$ for $(\pi\cdot c)\neq b$, then
$\BN(u),v\models \pi\cdot(\Box_c\phi)$ trivially,
and if $v\neq\epsilon$ and $\psi=\negepsilon$, 
then $\BN(u),v\models(\pi\cdot\negepsilon)$
trivially. The case $v=\epsilon$ and $\psi=\negepsilon$
does not occur since in this case, the run
ends in $(\Gamma_1,b)$ but we have $f(\Gamma_1,b)=0$ since
$\Gamma_1\neq\emptyset$ which is a contradiction to the run being
accepting.
If $v=bv'$, $\psi=\Box_c\phi$ and $(\pi\cdot c)=b$,
then we distinguish cases:
\begin{enumerate}
\item $b\notin (\pi\cdot C)$, $c=*$: Then 
we have $(\pi\cdot a) = b$ 
so that $(\pi\cdot a)\in\FN(\pi\cdot\Box_c\phi)$ but
$(\pi\cdot a)\notin (\pi\cdot C)$ in which case
the run ends at $(\Gamma_1,b_1)$ -- in contradiction
to the run being accepting, showing that this case does not occur.
\item $b\notin (\pi\cdot C)$, $c\neq *$: Then we have
$\pi\cdot(\Diamond_{c}\phi^{B'}_{C'}(a'))\in\Gamma'$ for suitable
$B'$, $C'$ and $a'$ by
construction of $A(\phi)$.
\item $b\in (\pi\cdot C)$, $b\in(\pi\cdot B)$:  Then we have
$\pi\cdot(\Diamond_{c}\phi^{B'}_{C'}(a'))\in\Gamma'$ for suitable
$B'$, $C'$ and $a'$ by
construction of $A(\phi)$.
\item $b\in (\pi\cdot C)$, $b\notin(\pi\cdot B)$, $c=*$:  Then we have
$\pi\cdot(\Diamond_{c}\phi^{B'}_{C'}(a'))\in\Gamma'$ for suitable
$B'$, $C'$ and $a'$ by
construction of $A(\phi)$.
\item $b\in (\pi\cdot C)$, $b\notin(\pi\cdot B)$, $(\pi\cdot c)=b$: Then we have
$\pi\cdot(\Diamond_{c}\phi^{B'}_{C'}(a'))\in\Gamma'$ for suitable
$B'$, $C'$ and $a'$ by
construction of $A(\phi)$.
\item $b\in (\pi\cdot C)$, $b\notin(\pi\cdot B)$, $*\neq (\pi\cdot c)\neq b$: This case does occur since $(\pi\cdot c)=b$.
\end{enumerate}
In all cases that can occur, we have
$\pi\cdot(\Diamond_{a}\phi^{B'}_{C'}(a'))\in\Gamma'$ 
for suitable $B'$, $C'$ and $a'$ and the inductive hypothesis 
in combination with~\autoref{lem:runtosat} finishes the
respective case. 
If $v=\newletter c v'$ and $\psi=\Box_{a}\phi$ for some
$a\in \names$, then $\BN(u),v\models \pi\cdot (\Box_a\phi)$ trivially.  If
$v=\newletter c v'$ and $\psi=\Box_{\scriptnew a}\phi$ for some
$a\in \names$, then we have $\pi\cdot(\Diamond_{\scriptnew a}\phi^{B'}_{C'}(a'))\in\Gamma'$ for suitable $B'$, $C'$ 
and $a'$ by construction of $A(\phi)$ and the inductive hypothesis 
in combination with~\autoref{lem:runtosat} finishes the
case.
\end{proof}

\begin{lemma}\label{lem:runtosat}
Let $\Gamma_1\in\mathsf{rest}(\Gamma')$ and
let $u,v$ be bar strings such that $\BN(u),v\models\pi\cdot\psi$
for all $\pi\cdot(\psi^B_C(a))\in\Gamma_1$. Then we have
$\BN(u),v\models\pi\cdot\psi$
for all $\pi\cdot(\psi^B_C(a))\in\Gamma'$.
\end{lemma}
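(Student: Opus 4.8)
The plan is to reduce the statement to a single purely combinatorial invariant: the name-restriction process never alters the collection of \emph{underlying} (unannotated) formulae $\pi\cdot\psi$, but only reshuffles the annotations $B$, $C$ and the distinguishing letter~$a$. Since the statement to be proved speaks only about satisfaction of the bare formulae $\pi\cdot\psi$ (in contrast to \autoref{lemm:rest-annot}, which concerns the annotated versions and genuinely needs \autoref{lemma:name-restriction}), this invariant is all that is required. Note in particular that a single restriction step is a purely syntactic operation on sets of annotated formulae that does not depend on $u$ or $v$, so the fixed context $\BN(u)$ and word $v$ play no role in the combinatorics and enter only at the very end.

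Concretely, I would first unfold $\mathsf{rest}(\{\Gamma'\})$ into a finite chain of single name-restriction steps $\Gamma' = \Delta_0 \to \Delta_1 \to \cdots \to \Delta_k = \Gamma_1$, whose existence and finiteness are guaranteed by the preceding termination lemma. Writing, for any set $\Delta$ of annotated formulae,
\[
  U(\Delta) := \{\,\pi\cdot\psi \mid \pi\cdot(\psi^B_C(a))\in\Delta\,\},
\]
I then prove by induction on $k$ that $U(\Delta_i)=U(\Delta_{i+1})$ for each step, whence $U(\Gamma') = U(\Gamma_1)$. The base case $k=0$ is immediate. For a single step $\Delta_i \to \Delta_{i+1} = \Delta_i(b)$, \autoref{defn:nameres} deletes the two instances $\pi\cdot(\psi^B_C(a))$ and $\pi'\cdot(\psi^B_C(a))$ and inserts $\pi\cdot(\psi^{B\cap E}_A(b))$ and $\pi'\cdot(\psi^\emptyset_{E'}(b))$ (or the symmetric pair in the second clause), leaving all other members untouched. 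Since the base formula $\psi$ and the permutations $\pi,\pi'$ are unchanged and only the annotations are modified, the deleted and the inserted formulae have exactly the underlying formulae $\pi\cdot\psi$ and $\pi'\cdot\psi$; hence $U(\Delta_i)=U(\Delta_{i+1})$. Composing along the chain yields $U(\Gamma')=U(\Gamma_1)$, and the lemma follows at once, because the hypothesis $\BN(u),v\models\pi\cdot\psi$ for all members of $\Gamma_1$ is exactly the assertion that every element of $U(\Gamma_1)=U(\Gamma')$ is satisfied by $v$ in context $\BN(u)$.

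There is no genuine obstacle here; the only points needing care are bookkeeping. I must verify that the two removed instances really are distinct members of $\Delta_i$ — which holds since a restriction step presupposes $\pi\cdot A\neq\pi'\cdot A$, forcing $\pi\cdot\psi\neq\pi'\cdot\psi$ as $A\subseteq\FN(\psi)$ — and that neither inserted underlying formula is accidentally lost or newly created relative to the untouched part $\Delta_i\setminus\{\text{two instances}\}$. Both are immediate once one observes that insertion and deletion affect only annotations, so the multiset of underlying formulae contributed by the two replaced slots is $\{\pi\cdot\psi,\pi'\cdot\psi\}$ before and after the step alike.
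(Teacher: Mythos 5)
Your proof is correct and takes essentially the same route as the paper's own (three-sentence) proof: the paper likewise observes that name-restriction steps change only the annotations $B$, $C$, $a$ while leaving the permutation $\pi$ and the base formula $\psi$ untouched, so every underlying formula $\pi\cdot\psi$ arising from $\Gamma'$ still arises from some annotated formula in $\Gamma_1$, and the hypothesis applies directly. Your chain decomposition $\Gamma'=\Delta_0\to\cdots\to\Delta_k=\Gamma_1$ and the invariant $U(\Delta)$ merely make explicit the step-by-step induction (and the distinctness bookkeeping via $\pi\cdot A\neq\pi'\cdot A$) that the paper leaves implicit.
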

\begin{proof}
The set $\Gamma_1$ is obtained from $\Gamma'$ by a sequence of
restriction steps as defined in~\autoref{defn:nameres}, which
change the annotation, but not the shape of formulae.
Hence there are, for each $\pi\cdot(\psi^B_C(a))\in\Gamma'$,
sets $B'$, $C'$ and a name $a'$ 
such that $\pi\cdot(\psi^{B'}_{C'}(a'))
\in\Gamma_1$. Then $\BN(u),v\models\pi\cdot\psi$ by assumption
so that we are done.
\end{proof}

\begin{lemma}\label{lem:guessfn}
For all states $(\Gamma,b)\in A(\phi)$ 
such that $\Gamma\neq\{\mathsf{id}\cdot (\Diamond_b\top)^\emptyset_{\{b\}}(*)\}$,
for all $u\in\barstrings$ such that
$s\stackrel{u}{\to}(\Gamma,b)$ and 
for all $a\in\mathsf{BN(u)}$, we have $s\stackrel{u}{\to}(\Gamma,a)$.
\end{lemma}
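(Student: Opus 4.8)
The plan is to analyse how the reserve name component of a state evolves along a run of~$A(\phi)$ and to show it can be re-chosen freely among the bound names of~$u$, the only genuine interference coming from the reserve-letter $\epsilon$-transition. First I would record three structural facts. (O1)~The reserve starts as~$*$ in~$s$, and the only transitions that change it are the bound transitions from modal states $(\Gamma,*)\stackrel{\scriptnew c}{\to}(\Delta,c)$, which set it from~$*$ to the bound name~$c$ just read; all other transitions carry the reserve along unchanged. (O2)~With the single exception of the reserve-letter transition, the $\Gamma$-component of the target and the label of every transition are computed without inspecting the reserve---the propositional, box, diamond and fixpoint manipulations only read the distinguishing-letter annotations carried inside the formulae of~$\Gamma$; the sole residual dependence is that a bound transition $(\Gamma,b)\stackrel{\scriptnew c}{\to}(\Delta,b)$ is available only when $c\neq b$. (O3)~The reserve-letter transition $(\Gamma,b)\stackrel{\epsilon}{\to}(\{\mathsf{id}\cdot(\Diamond_b\top)^\emptyset_{\{b\}}(*)\},b)$ traps the run: its target is exactly the excluded reserve-letter state, which is modal with the single formula $\Diamond_b\top$, so its only successor is $(\emptyset,b)$, reached by reading the plain name~$b$, and $(\emptyset,b)$ is a deadlocked $\top$-state. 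Consequently any run that uses a reserve-letter transition ends either at the reserve-letter state or at $(\emptyset,b)$.

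Using these, I would split on whether the given run $s\stackrel{u}{\to}(\Gamma,b)$ uses a reserve-letter transition. If it does, then by~(O3) either $\Gamma=\{\mathsf{id}\cdot(\Diamond_b\top)^\emptyset_{\{b\}}(*)\}$, which the hypothesis excludes, or $\Gamma=\emptyset$; in the latter case $(\emptyset,b)$ is a $\top$-state and hence, by the ERNNA requirement that $\top$-states have empty support, all states $(\emptyset,\cdot)$ coincide, so $s\stackrel{u}{\to}(\Gamma,a)$ holds trivially. If the run uses no reserve-letter transition, then by~(O2) its entire sequence of $\Gamma$-components and labels is independent of the reserve, and I construct the required run explicitly: pick the \emph{last} bound transition of the run that reads $\newletter a$ (it exists as $a\in\BN(u)$); along the whole prefix up to that step keep the reserve equal to~$*$, taking at each bound transition the non-setting variant $(\cdot,*)\stackrel{\scriptnew c}{\to}(\cdot,*)$, which is available since the reserve is~$*$; at the chosen step take the setting variant $(\cdot,*)\stackrel{\scriptnew a}{\to}(\cdot,a)$; and carry~$a$ along thereafter. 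By~(O2) this uses the same labels and the same $\Gamma$-components as the original run, hence is a run $s\stackrel{u}{\to}(\Gamma,a)$; its only reserve-sensitive steps are the bound transitions after the chosen one, and these read bar names different from~$a$ because $\newletter a$ occurs there for the last time, so the side-condition $c\neq a$ of~(O2) is met.

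The hard part will be establishing~(O3) cleanly, as the reserve-letter transition is the one place where the reserve feeds back into the behaviour of $A(\phi)$; it is precisely the observation that this transition is terminal---leading only to the excluded reserve-letter state or to the empty-support $\top$-state $(\emptyset,b)$---that justifies excluding exactly the single reserve-letter state in the hypothesis and trivialising the case $\Gamma=\emptyset$. A secondary subtlety worth flagging is the choice of the \emph{last} occurrence of $\newletter a$ at which to set the reserve, which is what guarantees that the subsequent bound transitions never try to read $\newletter a$ again and so respect the freshness side-condition from~(O2).
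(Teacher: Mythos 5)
Your proof is correct and takes essentially the same route as the paper's: replay the witnessing path with reserve component~$*$ up to the \emph{last} occurrence of $\newletter a$ in~$u$, take the setting transition $(\cdot,*)\stackrel{\scriptnew a}{\to}(\cdot,a)$ there, and carry~$a$ afterwards, justified by the facts that transitions are reserve-independent except for the reserve-letter $\epsilon$-transition and the side condition $c\neq b$ on bound transitions, and that the last-occurrence choice prevents any later bound transition from being blocked. Your explicit case split (O3) for runs that do use a reserve-letter transition is in fact more careful than the paper's proof, which disposes of this situation only in a parenthetical remark; your appeal to the ERNNA requirement that $\top$-states have empty support (so that the second component of $(\emptyset,\cdot)$ is immaterial) is the right way to discharge the residual $(\emptyset,b)$ endpoint that the paper leaves implicit.
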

\begin{proof}
  There is a path $\tau$ witnessing
  $s\stackrel{u}{\to}(\Gamma,b)$. Traverse the automaton as in $\tau$,
  but keeping~$*$ as the second component of states until the last
  occurrence of $\newletter a$ in $u$ is reached. At this last
  occurrence, take the $\Diamond_{\scriptnew a}$-transition that
  guesses~$a$ in the second component, and afterwards again follow the
  transitions in $\tau$.  The transitions that are required to
  construct this path exist since
  $(\Gamma,b)\stackrel{\sigma}{\to}(\Gamma',b')$ and 
  $\Gamma'\neq \{\mathsf{id}\cdot (\Diamond_{b}\top)^\emptyset_{\{b\}}(*)\}$
  imply
  $(\Gamma,*)\stackrel{\sigma}{\to}(\Gamma',*)$
 (since the only transitions that are enabled only if
    $b\neq *$ are the transitions that use the reserve letter and
    lead to states $\{\mathsf{id}\cdot (\Diamond_b\top)^\emptyset_{\{b\}}(*)\}$)  
  and $(\Gamma,c)\stackrel{a}{\to}(\Gamma',c)$ for
  $c\in\BN(\phi)$, since we use the last occurrence of $\newletter a$
  in $u$ so that taking the transition with $b=*$ and $b'=a$ does not lead to
  blocked transitions. 
\end{proof}

 \subsection*{Proof of \autoref{lem:autformeq}}
 
 \begin{proof}
   Let $A=(Q,\to,s,f)$ be an extended bar NFA. We define the
   \emph{automaton formula} $\mathsf{form}(A)$ for $A$ to be
   $\mathsf{form}(s,\emptyset)$ where $\mathsf{form}(q,Q')$ is defined
   recursively for $q\in Q$ and $Q'\subseteq Q$ by 
   \begin{align*}
     \mathsf{form}(q,Q')=\begin{cases}
       \mu X_q.\,\bigvee_{(q',\barname), q\stackrel{\barname}{\to}q'}
       \Diamond_\barname\mathsf{form}(q',Q'\cup\{q\}) & q\notin Q', f(q)=0\\
       \epsilon & q\notin Q', f(q)=1\\
       \top & q\notin Q', f(q)=\top\\
       X_q & q\in Q'.
     \end{cases}
   \end{align*}
   For one direction of the claimed bar language equality, let
   $[w]_\alpha\in L_\alpha(A)$.  Then there is $v$ such that
   $v\equiv_\alpha w$ and $v\in L_0(A)$.  By~\autoref{lem:alpheq:1}
   of~\autoref{lem:alpheq} it suffices to show
   $\emptyset,v\models \mathsf{form}(A)$.  We show the more general
   property that for all $v',u\in\barstrings$ such that $v=uv'$ and
   all $q$ such that $s\stackrel{u}{\to}q$ and $v'\in L_0(A,q)$, we
   have $\BN(u),v'\models \theta^*(\mathsf{form}(q,Q'))$ where $Q'$ is
   the set of states of some $u$-path from $s$ to $q$.  We proceed by
   induction over $|v'|$. If $|v'|=0$, then $v'=\epsilon$ and since
   $v\in L_0(A)$, $f(q)=1$ or $f(q)=\top$ so that
   $\theta^*(\mathsf{form}(q,Q'))=\mathsf{form}(q,Q')=\epsilon$ and
   $\BN(u),\epsilon\models \epsilon$ trivially, or
   $\theta^*(\mathsf{form}(q,Q'))=\mathsf{form}(q,Q')=\top$ and
   $\BN(u),\epsilon\models \top$ trivially.  If $|v'|>0$, then
   $v'=av''$ or $v'=\newletter av''$ for some $a\in\names$,
   $v''\in\barstrings$. If $f(q)=\top$, then
   $\theta^*(\mathsf{form}(q,Q'))=\mathsf{form}(q,Q')=\top$ and
   $v'\in\free{\BN(u)}$ since $v\in L_0(A)$. Hence
   $\BN(u),v'\models \top$.  If $f(q)\neq\top$ and $v'=av''$, then
   there is some transition $q\stackrel{a}{\to}q'$ such that
   $v''\in L_0(A,q')$. We also have $s\stackrel{ua}{\to}q'$.  In this
   case,
   $\theta^*(\mathsf{form}(q,Q'))=\mu X_{q}.\,
   \bigvee_{(q',\barname)\mid q\stackrel{\barname}{\to}q'}
   \Diamond_\barname\theta^*(\mathsf{form}(q',Q'\cup\{q\}))$.  Hence it
   suffices to show that
   $\BN(u),v'\models
   \Diamond_a\theta^*(\mathsf{form}(q',Q'\cup\{q\}))$ which follows if
   $\BN(u),v''\models\theta^*(\mathsf{form}(q',Q'\cup\{q\}))$.  Since
   $\BN(u)=\BN(ua)$, this follows from the inductive hypothesis.  If
   $f(q)\neq\top$ and $v'=\newletter av''$, then there is some
   transition $q\stackrel{\scriptnew a}{\to}q'$ such that
   $v''\in L_0(A,q')$. We also have
   $s\stackrel{u\scriptnew a}{\to}q'$.  In this case,
   $\theta^*(\mathsf{form}(q,Q'))=\mu X_{q}.\,
   \bigvee_{(q',\barname)\mid q\stackrel{\barname}{\to}q'}
   \Diamond_\barname\theta^*(\mathsf{form}(q',Q'\cup\{q\}))$.  Hence it
   suffices to show that
   $\BN(u),v'\models \Diamond_{\scriptnew
     a}\theta^*(\mathsf{form}(q',Q'\cup\{q\}))$ which follows if
   $\BN(u)\cup\{a\},v''\models\theta^*(\mathsf{form}(q',Q'\cup\{q\}))$.
   Since $\BN(u)\cup\{a\}=\BN(u\newletter a)$, this follows from the
   inductive hypothesis.
  
For the converse direction, let $\emptyset,w\models\mathsf{form}(A)$.
We have to show that there is $v\equiv_\alpha w$ such that
$v\in L_0(A)$.  We show the more general property that for all
$v',u\in\barstrings$ such that $w\equiv_\alpha uv'$, all $q\in Q$, all
$Q'\subseteq Q$ and all permutations $\pi$ such that
$\BN(u),v'\models\pi\cdot\mathsf{form}(q,Q')$, $s\stackrel{u}{\to}q$
and such that $Q'$ is the set of states of a $u$-path from $s$ to $q$,
there is $v''\in\barstrings$ such that $v''\equiv_\alpha v'$ and
$v''\in L_0(A,q)$.  The proof is by induction over $|v'|$. If
$|v'|=0$, then $v'=\epsilon$ and
$\mathsf{form}(q,Q')\in\{\epsilon,\top\}$.  If
$\pi\cdot\mathsf{form}(q,Q')= \mathsf{form}(q,Q')=\epsilon$, then
$f(q)=1$ by definition of $\mathsf{form}(q,Q')$. Hence
$\epsilon\in L_0(A,q)$.  If
$\pi\cdot\mathsf{form}(q,Q')=\mathsf{form}(q,Q')=\top$, then
$f(q)=\top$ by definition of $\mathsf{form}(q,Q')$ so that
$\epsilon\in L_0(A,q)$. If $|v'|>0$, then $v'=av'''$ or
$v''\newletter av'''$ for some $a\in\names$. Hence
$\pi\cdot\mathsf{form}(q,Q')\neq \epsilon$. If
$\pi\cdot\mathsf{form}(q,Q')=\mathsf{form}(q,Q')=\top$, then
$f(q)=\top$.  Since $\BN(u),v'\models\top$, $v'\in\free{\BN(u)}$ so
that $uv'\in\free{\emptyset}$.  Hence $v'\in L_0(A,q)$. If
$\pi\cdot\theta^*(\mathsf{form}(q,Q'))= \pi\cdot(\mu
X_q.\,\bigvee_{(q',\barname)\mid q\stackrel{\barname}{\to}q'}
\Diamond_\barname\theta^*(\mathsf{form}(q',Q'\cup\{q\})))$, then we
distinguish cases: If $v'=av'''$, then
$\BN(u),v'\models\Diamond_a\theta^*(\pi\cdot
\mathsf{form}(q',Q'\cup\{q\}))$ for some $a\in\BN(u)$ and $q'\in Q$
such that $q\stackrel{a}{\to}q'$.  By the inductive hypothesis, there
is $v''''\equiv_\alpha v'''$ such that $v''''\in L_0(A,q')$. Since
$q\stackrel{a}{\to}q'$, we have $av''''\in L_0(A,q)$. Since
$av''''\equiv_\alpha av'''$, we are done.  If $v'=\newletter av'''$,
then
$\BN(u),v'\models\Diamond_{\scriptnew c} \theta^*(\pi'\cdot
\mathsf{form}(q',Q'\cup\{q\}))$ for some $c\in\names$, some $\pi'$ and
some $q$ such that there is is some $d\in\names$ such that
$q\stackrel{\scriptnew d}{\to}q'$.  
We have $\newletter d ((ad)v''')\equiv_\alpha v'$ 
and $\Diamond_{\scriptnew c}\theta^*(\pi'\cdot
\mathsf{form}(q',Q'\cup\{q\}))\equiv_\alpha\Diamond_{\scriptnew d}\theta^*((cd)\pi'\cdot
\mathsf{form}(q',Q'\cup\{q\}))$
so that
by~\autoref{lem:alpheq:1} 
and~\autoref{lem:alpheq:2} of~\autoref{lem:alpheq},
$\BN(u),\newletter d ((ad)v''')\models\Diamond_{\scriptnew d} \theta^*((cd)\pi'\cdot
\mathsf{form}(q',Q'\cup\{q\}))$ which implies
$\BN(u)\cup\{d\},(ad)v'''\models \theta^*((dcd)\pi'\cdot
\mathsf{form}(q',Q'\cup\{q\}))$. 
 Since
$s\stackrel{u}{\to}q$ and $q\stackrel{\scriptnew d}{\to}q'$, we have
$s\stackrel{u\scriptnew d}{\to}q'$. Also
$w\equiv_\alpha u\newletter av'''\equiv_\alpha u\newletter d((ad)v''')$.
By the inductive hypothesis, there
is $v''''\equiv_\alpha (ad)v'''$ such that $v''''\in L_0(A,q')$. Since
$q\stackrel{\scriptnew d}{\to}q'$, we have
$\newletter dv''''\in L_0(A,q)$. Since
$\newletter dv''''\equiv_\alpha\newletter d((ad)v''')\equiv_\alpha \newletter av'''$, 
we are done.
 \end{proof}

\section{Details for \autoref{sec:mc}}

We first recall the definition of nondeterministic
finite bar automata from
\cite{SchroderEA17}.

\begin{defn}
A \emph{nondeterministic finite bar automaton} (bar NFA) over
$\names$ is an NFA $A$ with alphabet $\barA$.
The bar NFA $A$ \emph{literally accepts} the language
$L_0(A)$ that $A$ accepts as an NFA with alphabet $\barA$.
The \emph{bar language} accepted by $A$ is defined by
\[
  L_\alpha(A)=  L_0(A)/\mathord{\equiv_\alpha}.
\]
The degree $\degree(A)$ of $A$ is the number of
names $a\in\names$ that occur in transitions
$q\stackrel{a}{\rightarrow}q'$ or 
$q\stackrel{\scriptnew a}{\rightarrow}q'$ in $A$.
\end{defn}
%
\begin{notheorembrackets}
\begin{theorem}[{\cite[Thm.~5.13]{SchroderEA17}}]\label{thm:RNNAbarNFA}
Given an RNNA $A_1$, there is a bar NFA $A_2$ such that
$L_\alpha(A_1)=L_\alpha(A_2)$ and
the number of states in $A_2$ is linear in the number of orbits of
$A_1$ and exponential in $\degree(A)$. Furthermore,
$\degree(A_2)\leq \degree(A_1)+1$.
\end{theorem}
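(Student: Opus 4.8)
The plan is to realize $A_2$ as a \emph{finite sub-automaton} of $A_1$ obtained by freezing the names that states may use into a fixed finite pool. Fix a pool $\names_0\subseteq\names$ with $|\names_0|=\degree(A_1)+1=k+1$; since $\supp(s)=\emptyset\subseteq\names_0$, the initial state is covered. I would then define the bar NFA $A_2=(Q_2,\to_2,s,f\restrict Q_2)$ by
\[
  Q_2=\{q\in Q\mid \supp(q)\subseteq\names_0\},
\]
letting $\to_2$ consist of exactly those free transitions $q\trans{a}q'$ and bound transitions $q\trans{\scriptnew a}q'$ of $A_1$ with $q,q'\in Q_2$ and $a\in\names_0$. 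This is a genuine finite NFA over $\barA$: every transition label lies in the finite set $\names_0\cup\{\newletter a\mid a\in\names_0\}$, so $\degree(A_2)\le|\names_0|=k+1=\degree(A_1)+1$; and $Q_2$ is finite because, within each orbit, the states with support inside $\names_0$ are in bijection with injections from the (fixed) support of an orbit representative into $\names_0$. An orbit whose representative has support of size $j\le k$ thus contributes at most $(k+1)!/(k+1-j)!\le(k+1)!$ states, giving $|Q_2|\le n\cdot(k+1)!$, linear in the number of orbits $n$ and exponential in $k$, as required.

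The inclusion $L_\alpha(A_2)\subseteq L_\alpha(A_1)$ is immediate: every state of $A_2$ is a state of $A_1$, every $A_2$-transition is an $A_1$-transition, and acceptance is inherited, so any literal run of $A_2$ is a literal run of $A_1$ and hence $L_0(A_2)\subseteq L_0(A_1)$, whence $L_\alpha(A_2)\subseteq L_\alpha(A_1)$.

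The substantial direction is $L_\alpha(A_1)\subseteq L_\alpha(A_2)$, which I would establish via the following claim, proved by induction on $|v|$: for every $q\in Q_2$ and every $v\in\barstrings$ with $q\trans{v}q_f$ and $f(q_f)=1$ in $A_1$, there is some $v'\equiv_\alpha v$ literally accepted from $q$ in $A_2$. The empty-word case is trivial since $f$ agrees on $Q_2$. For a free step $v=av_0$, \autoref{lem:altauto}(2) gives $a\in\supp(q)\subseteq\names_0$ and $\supp(q_1)\subseteq\names_0$ for the intermediate state $q_1$, so $q\trans{a}q_1$ is already an $A_2$-transition and the induction hypothesis applies to $v_0$. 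The crux is a bound step $v=\newletter b v_0$ with $q\trans{\scriptnew b}q_1\trans{v_0}q_f$: here \autoref{lem:altauto}(3) gives $\supp(q_1)\subseteq\supp(q)\cup\{b\}$ with $|\supp(q_1)|\le k$, so $|\supp(q_1)\setminus\{b\}|\le k-1<|\names_0|$ and one may pick a pool name $a\in\names_0$ fresh for $\langle b\rangle q_1$. Setting $q_1':=(a\,b)\cdot q_1$ yields $\langle a\rangle q_1'=\langle b\rangle q_1$ and $\supp(q_1')\subseteq\names_0$; by $\alpha$-invariance of $\to$ the transition $q\trans{\scriptnew a}q_1'$ is in $A_1$, hence—both endpoints lying in $Q_2$ and $a\in\names_0$—in $A_2$. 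By equivariance of $\to$ and of $f$, the permuted suffix $(a\,b)\cdot v_0$ is accepted from $q_1'$, so the induction hypothesis gives $v_0'\equiv_\alpha(a\,b)\cdot v_0$ accepted from $q_1'$ in $A_2$, and then $v':=\newletter a v_0'\equiv_\alpha\newletter b v_0=v$ is accepted from $q$ in $A_2$. Applying the claim to $q=s$ and using that $L_\alpha(\cdot)$ is the quotient by $\equiv_\alpha$ finishes this inclusion.

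The main obstacle is precisely this bound-transition step: one must simultaneously reroute the transition into the pool (via $\alpha$-invariance) and re-bind the read name without capturing any currently stored name, and then argue that the globally read bar string stays in the same $\alpha$-class—a point most cleanly handled by assuming, w.l.o.g., that the run reads a clean bar string, as the paper does elsewhere, so that the word-level renaming $\newletter a v_0'\equiv_\alpha\newletter b v_0$ is routine nominal bookkeeping. The spare $(k+1)$-st pool slot is what guarantees that a capture-free choice of $a$ is always available (since supports have size at most $k$), and this is exactly the source of the additive $+1$ in the degree bound; the counting and the easy inclusion are then entirely routine.
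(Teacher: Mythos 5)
Your construction is correct and is essentially the approach the paper relies on: the statement is imported from~\cite[Thm.~5.13]{SchroderEA17}, and the very same name-pool restriction appears in the paper's own proof of \autoref{thm:barnfas}, where the bar NFA is cut out of the nominal automaton by keeping exactly the states with support inside a fixed $k$-element set $\names_0$ together with one spare name $\ast$ reserved for bound transitions --- your pool of size $k+1$ playing the role of $\names_0\cup\{\ast\}$. The bookkeeping you defer in the bound step is indeed routine and needs no cleanliness assumption: by \autoref{lem:altauto} the free names of any word accepted from $q_1$ are contained in $\supp(q_1)$, so your choice of $a$ fresh for $\langle b\rangle q_1$ guarantees $a\notin\FN(v_0)\setminus\{b\}$, and since renaming the outer binder only has to avoid the \emph{free} names of the scope once the suffix is taken modulo $\equiv_\alpha$ (the support of $[v_0]_\alpha$ is $\FN(v_0)$, not the set of all names occurring in $v_0$), the equivalence $\newletter a\,((a\,b)\cdot v_0)\equiv_\alpha\newletter b\,v_0$ follows directly.
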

\end{notheorembrackets}


\subsection*{Proof of \autoref{cor:mc}}
\begin{proof}
  \begin{enumerate}[wide,labelindent=0pt]
  \item The bound for model checking is immediate from
    \autoref{thm:mubar-ernna} and \autoref{thm:incl}. Under local
    freshness, validity reduces to model checking over an RNNA
    accepting the bar language $(\newletter a)^*$, whose local
    freshness semantics is the universal language.
  \item We have $\sem{\phi}\neq\emptyset$ if and only if
    $L_0(\mathsf{nd}(A(\phi)))\neq\emptyset$.  Since
    $L_0(\mathsf{nd}(A(\phi)))$ is closed under $\alpha$-equivalence,
    non-emptiness checking for this automaton amounts to checking
    whether some accepting state is reachable from its initial
    state. By Savitch's theorem, it suffices to exhibit an
    $\NExpSpace$ algorithm that performs this task.  The algorithm
    guesses a bar string $w$ and an accepting $w$-path in
    $\mathsf{nd}(A(\phi))$, letter by letter, and at the same time
    ensures that the path indeed is accepting. To ensure that the path
    is accepting, it suffices to check that each taken transition
    indeed is a valid transition in the automaton, and that the last
    state in the guessed path is an accepting state. The algorithm
    requires a singly exponential amount of memory since it suffices
    two keep two states in memory at all times when traversing the
    path, and a single state can be encoded in polynomial space.
    Hence the satisfiability problem for $\muBar$ is in
    $\ExpSpace$. The more fine-grained analysis relies on a
    correspondingly more fine-grained estimate of the representation
    size of the automaton states

    By closure under complement, validity checking under bar
    language semantics / global freshness reduces to satisfiability
    checking in the usual way.
    
    Under bar language semantics / global freshness, refinement
    reduces to validity using closure under complement and
    intersection. \qedhere
  \end{enumerate}
\end{proof}

\end{document}